%% file: arxiv_surface_tension.tex
\documentclass[11pt,a4paper,twoside]{article}

\usepackage[T1]{fontenc}
\usepackage[utf8]{inputenc}

\usepackage{amsmath}
\usepackage{mathtools}
\usepackage{bm}
\usepackage{amsthm}

\usepackage{newtxtext}
\usepackage{newtxmath}

\usepackage[
  a4paper,
  margin=28mm,
  headheight=14pt
]{geometry}

\usepackage{microtype}
\usepackage{setspace}
\usepackage{graphicx}
\usepackage{booktabs}
\usepackage{siunitx}
\usepackage{float}
\usepackage{tabularx}
\usepackage{array}
\usepackage{adjustbox}
\usepackage{subcaption}
\usepackage[ruled,vlined]{algorithm2e}

\usepackage{tikz}
\usetikzlibrary{3d}
\usetikzlibrary{arrows.meta}

\usepackage[
  font=small,
  labelfont=bf,
  labelsep=period
]{caption}

\usepackage[round,authoryear]{natbib}
\newtheorem{theorem}{Theorem}[section]
\newtheorem{lemma}[theorem]{Lemma}
\newtheorem{proposition}[theorem]{Proposition} 
\newtheorem{corollary}[theorem]{Corollary}

\newtheorem{remark}[theorem]{Remark}
\newtheorem{definition}[theorem]{Definition}

\usepackage{titlesec}

\titleformat{\section}
  {\normalfont\large\bfseries}
  {\thesection.}
  {0.6em}
  {}

\titleformat{\subsection}
  {\normalfont\normalsize\bfseries}
  {\thesubsection.}
  {0.6em}
  {}

\titleformat{\subsubsection}
  {\normalfont\normalsize\itshape}
  {\thesubsubsection.}
  {0.6em}
  {}

\titleformat{\paragraph}
  {\normalfont\normalsize\bfseries}
  {\theparagraph}
  {1em}
  {}

\titlespacing*{\paragraph}
  {0pt}
  {3.25ex plus 1ex minus .2ex}
  {1.5ex plus .2ex}
  
\usepackage{fancyhdr}

\newcommand{\shorttitle}{Surface-tension calibration for N-phase mixtures}
\newcommand{\shortauthors}{ten Eikelder and Brunk}

\fancypagestyle{plain}{%
  \fancyhf{}%
  \fancyfoot[C]{\thepage}%
}
\usepackage{authblk}

\usepackage{xcolor}
\definecolor{darkgreen}{rgb}{0, 0.7, 0}
\definecolor{orange}{rgb}{0.98, 0.6, 0.01}
\definecolor{napiergreen}{rgb}{0.16, 0.5, 0.0}

\usepackage{pifont}

\usepackage{accents}
\newlength{\dhatheight}

\numberwithin{equation}{section}
\allowdisplaybreaks

\input{declarations}

\usepackage[
  colorlinks=true,
  linkcolor=blue,
  citecolor=blue,
  urlcolor=blue
]{hyperref}

\usepackage[nameinlink,capitalise,noabbrev]{cleveref}

\newcommand{\email}[1]{\href{mailto:#1}{\texttt{#1}}}

\makeatletter
\newcommand{\myref}[1]{\cref{#1}\mynameref{#1}{\csname r@#1\endcsname}}
\newcommand{\Myref}[1]{\Cref{#1}\mynameref{#1}{\csname r@#1\endcsname}}

\newcolumntype{R}[2]{%
    >{\adjustbox{angle=#1,lap=\width-(#2)}\bgroup}%
    l%
    <{\egroup}%
}

\newcommand{\thickhline}{%
    \noalign {\ifnum 0=`}\fi \hrule height 1pt
    \futurelet \reserved@a \@xhline
}

\newcolumntype{"}{@{\hskip\tabcolsep\vrule width 1pt\hskip\tabcolsep}}
\makeatother

\title{\vspace{-1.0cm}
\bfseries
Surface-tension calibration for N-phase mixtures
}

\author[1]{Marco F.P. ten Eikelder\thanks{\email{marco.eikelder@tu-darmstadt.de}}}
\author[2]{Aaron Brunk\thanks{\email{abrunk@uni-mainz.de}}}

\affil[1]{Institute for Mechanics, Computational Mechanics Group, Technical University of Darmstadt, Germany}
\affil[2]{Institute of Mathematics, Johannes Gutenberg-University Mainz, Germany}

\date{}

\begin{document}

\maketitle
\thispagestyle{plain}

\begin{abstract}
Diffuse-interface (phase-field) models are a widely used framework for interfacial dynamics in complex fluids, in which sharp interfaces are replaced by smooth transition layers and interfacial forces follow from a free-energy functional. In these models, surface tensions and diffuse thicknesses are not prescribed directly but are encoded implicitly by the bulk multiwell potential and the gradient-energy term through one-dimensional equilibrium profiles. While this link is classical in the binary Cahn–Hilliard setting, calibrating multiphase models is substantially more delicate because multiple pairwise surface tensions must be matched simultaneously and the relevant equilibrium paths are constrained by the Gibbs simplex. The practical problem is therefore: given a chosen bulk potential and a set of target pairwise surface tensions, determine gradient-energy coefficients that reproduce these targets in the full multiphase model.

Here we present a thermodynamically consistent calibration procedure for N-phase diffuse-interface free energies of Cahn–Hilliard type. The method determines a symmetric capillary matrix that matches prescribed pairwise surface tensions through the model’s equilibrium profiles. We further introduce a rescaling strategy that adjusts diffuse interface widths to mesh-resolvable values while preserving the calibrated surface tensions. The resulting calibrated free-energy closure can be incorporated directly into N-phase mixture simulations, and we demonstrate this by applying it to N-phase Navier--Stokes--Cahn--Hilliard flows.
\end{abstract}

\section{Introduction}\label{sec: Introduction}

Diffuse-interface (phase-field) models provide a versatile framework for interfacial dynamics in complex fluids, replacing sharp material interfaces by smooth transition layers of finite thickness \citep{anderson1998diffuse,gomez2018computational,saurel2018diffuse}. Their central appeal is that geometric effects (e.g. curvature-driven motion), topological changes, and contact-line dynamics can be represented without explicit interface tracking, while thermodynamic consistency can be built in through an energy-dissipation structure.
In diffuse-interface models, interfacial properties are not specified directly, but emerge from the interplay between a bulk potential and a gradient-energy term. In particular, pairwise surface tensions and diffuse thicknesses are encoded implicitly through the Helmholtz free energy and its associated equilibrium profiles. Prototypical realizations of this paradigm can be traced back to the Cahn–Hilliard theory of phase separation \citep{cahn1958free,elliott1989cahn,elliott1996cahn,novick2008cahn,brunk2026review}; coupling to fluid flow yields Navier-–Stokes–-Cahn-–Hilliard (NSCH) models, originating from model H \citep{hohenberg1977theory} and subsequently developed in many variants \citep{gurtinmodel,lowengrub1998quasi,ding2007diffuse,abels2012thermodynamically,eikelder2023unified}.

Despite the maturity of diffuse-interface modelling, a persistent practical challenge is calibration of interfacial physics. In many applications, surface tensions are measurable inputs (or are known from sharp-interface models \cite{Garcke2006,Dreyer}), whereas diffuse-interface parameters are not. For two phases\footnote{In this paper we use the terminology commonly used in the fluid-dynamics and diffuse-interface literature. Namely, the term \emph{phase} denotes a fluid constituent, such as water or oil. This differs from the (stricter) terminology in physics, where a phase typically refers to the thermodynamic state of a material, such as liquid, gas, or solid. We assume the phases to be incompressible meaning that each constituent has a fixed thermodynamic state and a constant specific density.}, calibration is straightforward. The one-dimensional heteroclinic equilibrium (together with equipartition of the free energy) yields a closed-form relation between the surface tension, the bulk potential, and the gradient coefficient, and the interface thickness can be tuned by rescaling the relative weight of bulk and gradient contributions while preserving surface tension. For multiphase systems, however, calibration becomes substantially more subtle. Pairwise surface tensions must be matched simultaneously, equilibrium profiles can involve higher-dimensional composition space, and the mapping from a capillary matrix to the set of pairwise surface tensions is generally nonlinear and constrained by the simplex structure.

Many multiphase diffuse-interface models have been developed over the last decades, spanning multiphase-field formulations and multiphase Cahn--Hilliard/NSCH variants \cite{steinbach1996phase,garcke1999,elliott1997diffusional,garcke2000singular,nestler2008phase,garcke2008modeling,kim2005phase,boyer2006study,boyer2014hierarchy,toth2016phase,dong2018multiphase,wu2017multiphase,ten2024thermodynamically,ten2025unified,ten2026compressible}. While these frameworks provide flexible model classes for $N$-phase systems, they leave substantial freedom in how the gradient-energy coefficients (capillary parameters) are chosen. In practice, these parameters are often set by heuristic rules or pairwise substitutions. The connection between such choices and a prescribed set of sharp-interface surface tensions is not transparent in general, particularly for multi-junction configurations and asymmetric surface-tension data. Most of the existing models have a clear connection between parameters and surface tensions for $N=2,3$. Beyond that some approaches do not have this clear connection or are even over-determined, \cite{Toth2016,Plapp2005,boyer2014hierarchy}.

We discuss two influential approaches in the present context. First, the multiphase-field energy of \cite{garcke1999} is designed so that binary interfacial profiles are confined to simplex faces. This feature simplifies the interpretation of pairwise interfacial contributions and facilitates prescribing pairwise parameters. Second, the hierarchical constructions of Boyer and co-authors (\cite{boyer2006study,boyer2014hierarchy}) enforce a reduction property when phases are absent and are typically parameterized by binary limits. In particular, the pairwise parameters are chosen so that, when the model is restricted to a binary subsystem (i.e. on a simplex face where all other phases are absent), the resulting diffuse-interface surface tension matches that of the underlying two-phase model. As such, both of the above two approaches do not consider the general case where equilibrium interfacial profiles leave the face of the Gibbs simplex.

This leaves open a practical inverse problem that is central in applications. Given a chosen bulk potential and target pairwise surface tensions, one seeks capillary parameters such that the diffuse-interface model reproduces these targets through its equilibrium profiles in the full multiphase setting. 
The goal of the present work is to provide a practical, thermodynamically consistent calibration procedure for $N$-phase diffuse-interface models that connects measurable interfacial data to the constitutive parameters of the free energy. We focus on the classical free-energy structure in which the gradient contribution is determined by a symmetric capillary matrix and the bulk contribution by a multiwell potential (i.e. a Cahn--Hilliard structure). We reconstruct the capillary matrix from the target surface tensions via one-dimensional equilibrium profiles, combining a linear initialization based on reference paths in composition space with a nonlinear refinement that iteratively recomputes equilibrium profiles and updates the matrix until the tensions are matched. In addition, we address the interface-width selection required for mesh-resolved computations by a rescaling that preserves the calibrated surface tensions while adjusting diffuse thicknesses to prescribed (numerical) targets. The outcome is a calibrated free-energy closure for $N$-phase mixture simulations. We note that selecting an admissible structure for the multiphase free energy and the associated mobility is a separate modelling problem; this is addressed in \cite{mixtureaware2026}, where we derive mixture-aware thermodynamic closures (free energy and mobility) within an $N$-phase NSCH mixture-theory setting. Furthermore, the design of numerical methods that preserve the structural properties of these models is addressed in \cite{structureNphase2026}; related fully discrete energy-stable schemes for two-phase NSCH models with arbitrary density ratios were developed in \cite{brunk2025simple}. Using these results as a modelling and computational baseline, we demonstrate the present calibration procedure both for multiphase Cahn--Hilliard dynamics and for $N$-phase Navier--Stokes--Cahn--Hilliard flows (using the framework presented in \cite{ten2025unified}).

The paper is organized as follows. In \cref{sec:thermodynamic-setting}, we introduce the thermodynamic setting, including the Gibbs simplex, coexistence points, grand potentials, and the diffuse-interface free-energy densities used throughout the paper. Next, in \cref{sec:surface_tension_calibration}, we derive the pairwise surface tensions associated with planar equilibrium profiles and develop the calibration procedure for determining the capillary matrix from prescribed target surface tensions. In \cref{sec:interface_width_calibration}, we define operational pairwise interface widths from the one-dimensional equilibrium profiles and introduce a global thickness scaling that makes the interfaces mesh-resolvable while preserving the calibrated surface tensions. In \cref{sec:num_exp}, we test the surface-tension calibration algorithm in
two-, three-, and four-phase examples and compare the straight-line
initialization with the reconstructed capillary matrices and resulting
interface widths. In \cref{sec:Applications}, we apply the methodology to a three-phase lens test and to an axisymmetric rising-bubble problem, thereby testing the prescribed surface tensions in both equilibrium and flow settings. Conclusions and outlook are given in \cref{sec: Conclusions}.

\section{Thermodynamic setting}
\label{sec:thermodynamic-setting}

This section provides the thermodynamic setting. In \cref{subsec:gibbs}, we introduce the volume fractions as constrained variables on the Gibbs simplex and identify the tangent space for admissible variations. In \cref{subsec:coexistence}, we define coexistence points using constrained bulk chemical potentials, tangent hyperplanes, and grand potentials. Finally, in \cref{subsec:free_energy}, we introduce the class of diffuse-interface free energies considered in this work, including the bulk free-energy densities and the capillary contribution.

\subsection{Gibbs simplex and admissible variables}\label{subsec:gibbs}
We consider the vector of volume fractions $\boldsymbol{\phi}=(\phi_1,\ldots,\phi_N)$ under the usual saturation constraint that $\sum_\mA \phi_\mA = 1$. Hence, the vector of volume fractions contains non-negative contributions satisfying the saturation constraint
\begin{align}\label{eq:sat_constraint}
  \sum_\mA \phi_\mA = 1.
\end{align}
Hence $\boldsymbol{\phi}$ lies in the Gibbs simplex $\mathcal{G}$ which is given by
\begin{align}
    \mathcal{G}:=\{\boldsymbol{\phi}\in\mathbb{R}^N: 0 \leq \phi_\mA \leq 1,\sum_\mA \phi_\mA = 1\}. \label{eq:defgibbs}
\end{align}
Correspondingly, for fixed $i$, the vector $\partial_{x_i}\boldsymbol{\phi}$ lies in the tangent space to the saturation hyperplane given by 
\begin{equation}\label{eq:deftanspace}
 T(\mathcal{G}):=\{\boldsymbol{c}\in\mathbb{R}^N: \sum_\mA c_\mA = 0\}.
\end{equation}
This reflects the fact that only $N-1$ volume fractions are independent.  An example Gibbs triangle with the associated coordinate axes and simplex boundaries is shown in Figure~\ref{fig:gibbs-triangle}. Furthermore, we refer to \cite{mixtureaware2026} for more details on the Gibbs simplex construction.

\begin{figure}
  \centering
  \includegraphics[width=0.7\linewidth]{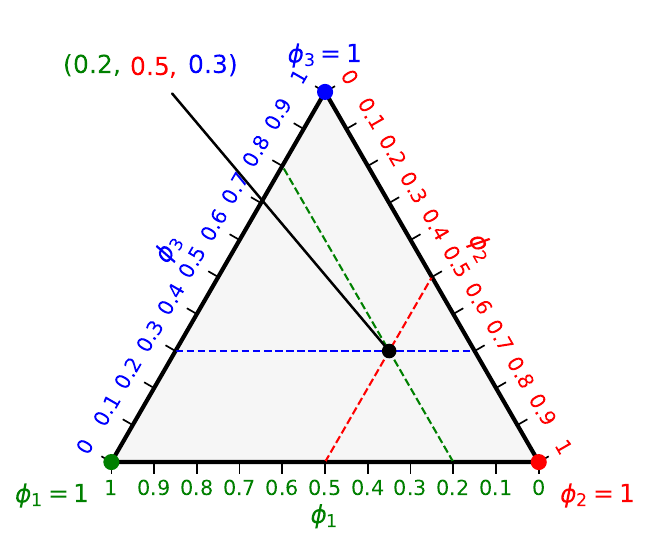}
  \caption{Gibbs triangle (ternary diagram) for $N=3$. Vertices correspond to pure states ($\phi_{\alpha}=1$), faces to binary mixtures (one phase identically zero), and the interior to fully ternary states. The simplex boundaries $\phi_{\alpha}=0$ are indicated and the point $(\phi_1,\phi_2,\phi_3)=(0.2,0.5,0.3)$ is shown as an example.}
  \label{fig:gibbs-triangle}
\end{figure}

\subsection{Coexistence points and grand potential}\label{subsec:coexistence}

This subsection introduces coexistence points for the bulk free-energy density $\Psi_0$ on the Gibbs simplex. To account for the saturation constraint, we use a constrained bulk potential. This provides a convenient way to define the tangent hyperplanes and grand potentials associated with coexisting bulk states.

Since the volume fractions satisfy the saturation constraint \eqref{eq:sat_constraint}, derivatives of $\Psi_0$ have to be interpreted on the saturation hyperplane. We account for this by introducing
the constrained bulk potential
\begin{align}
    \Psi_{0,\lambda}(\boldsymbol{\phi}) := \Psi_0(\boldsymbol{\phi}) +  \lambda \left(\sum_{\mA=1}^N \phi_\mA - 1
    \right),
\end{align}
where the scalar $\lambda$ is the Lagrange multiplier of the saturation constraint \eqref{eq:sat_constraint}. On the Gibbs simplex $\mathcal{G}$ we have
\begin{align}
     \Psi_{0,\lambda}(\boldsymbol{\phi}) = \Psi_0(\boldsymbol{\phi}),
\end{align}
but its unconstrained derivative is
\begin{align}
    \boldsymbol{\mu}_{0,\lambda}=\boldsymbol{\mu}_{0,\lambda}(\boldsymbol{\phi}):=\frac{\partial \Psi_{0,\lambda}}{\partial \boldsymbol{\phi}} =  \frac{\partial \Psi_{0}}{\partial \boldsymbol{\phi}} + \lambda \mathbf{1}.
\end{align}
Therefore the bulk derivative is only determined up to an additive multiple of $\boldsymbol{1}=(1,\ldots,1)^\top$. Different choices of $\lambda$ give the same action on admissible variations, since for all
$\delta\boldsymbol{\phi}\in T\mathcal G$ one has
\begin{align}
      \boldsymbol{\mu}_{0,\lambda} \cdot\delta\boldsymbol{\phi} = \frac{\partial \Psi_0}{\partial \boldsymbol{\phi}}\cdot\delta\boldsymbol{\phi}.
\end{align}

We now consider the existence and description of coexistence points. To this end we first introduce the tangent plane and the grand potential. Given a point $\mathbf b\in\mathcal G$, we define the tangent hyperplane to $\Psi_0$ at $\mathbf b$, restricted to the saturation hyperplane, by
\begin{align}\label{eq:tangent-hyperplane}
    T_{\mathbf b}(\boldsymbol{\phi}) := \Psi_0(\mathbf b) + \boldsymbol{\mu}_{0,\lambda}(\mathbf b)  \cdot \left( \boldsymbol{\phi}-\mathbf b \right). 
\end{align}
The value of $T_{\mathbf b}$ on the saturation hyperplane is independent of the particular choice of $\lambda$, since
$\boldsymbol{\phi}-\mathbf b\in T(\mathcal G)$ for
$\boldsymbol{\phi},\mathbf b\in\mathcal G$. The grand potential relative to $\mathbf b$ is the excess of the bulk free energy above this tangent hyperplane:
\begin{align}\label{eq:grand-potential}
    \Omega_{\mathbf b}(\boldsymbol{\phi})
    :=
    \Psi_0(\boldsymbol{\phi})
    -
    T_{\mathbf b}(\boldsymbol{\phi}).    
\end{align}
By construction $\Omega_{\mathbf b}(\mathbf b)=0$. Furthermore, if $\boldsymbol{\phi}\in\mathcal G$, then $\Omega_{\mathbf b}(\boldsymbol{\phi})$ measures the excess of $\Psi_0(\boldsymbol{\phi})$
above the tangent hyperplane at $\mathbf b$. Thus $\Omega_{\mathbf b}(\boldsymbol{\phi})=0$ means that $\boldsymbol{\phi}$ lies on the tangent hyperplane defined at $\mathbf{b}$.

\begin{definition}[Coexistence points]\label{def:coexistence-points}
 We call an $M$-tuple of vectors $\{\mathbf{b}^{(1)},\ldots,\mathbf{b}^{(M)}\}$ coexistence points of $M$ phases if they share a common tangent hyperplane which is supporting on $\mathcal G$. That is, the tangent hyperplanes
$T_{\mathbf b^{(q)}}$, $q=1,\ldots,M$, coincide on the saturation hyperplane, and the common tangent hyperplane satisfies
\begin{align}
\Psi_0(\boldsymbol{\phi}) \geq T_{\mathbf b^{(q)}}(\boldsymbol{\phi})
\qquad \text{for all } \boldsymbol{\phi}\in\mathcal G,
     \text{ and } q=1,\ldots,M,
\end{align}
with equality if and only if
$\boldsymbol{\phi}\in\{\mathbf{b}^{(1)},\ldots,\mathbf{b}^{(M)}\}$.
\end{definition}
Geometrically, this hyperplane lies below the free energy landscape and touches the free energy landscape only at the coexistence points. 

\begin{lemma}[Characterization of coexistence points]
The $M$-tuple of vectors $\{\mathbf{b}^{(1)},\ldots,\mathbf{b}^{(M)}\}$ are coexistence points of $M$ phases if and only if there exist scalars
$\lambda_1,\ldots,\lambda_M$ and a vector $\boldsymbol{\mu}^\star\in\mathbb R^N$ such that they have equal chemical potentials:
\begin{align}\label{eq:equal_mu}
    \boldsymbol{\mu}_{0,\lambda_q}(\mathbf{b}^{(q)})
        =\boldsymbol{\mu}_{0,\lambda_r}(\mathbf{b}^{(r)}) \qquad q,r=1,\ldots,M,
\end{align}
and satisfy the common-tangent condition:
\begin{align}\label{eq:common-tangent}
        \Omega_{\mathbf b^{(q)}}(\mathbf b^{(r)})=0,  \qquad q,r=1,\ldots,M.
\end{align}
\end{lemma}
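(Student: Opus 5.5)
The plan is to treat the lemma as a statement about affine functions restricted to the saturation hyperplane $H:=\{\boldsymbol{\phi}:\sum_\mA\phi_\mA=1\}$. Each tangent hyperplane $T_{\mathbf b}$ from \eqref{eq:tangent-hyperplane} is affine in $\boldsymbol{\phi}$. Its restriction to $H$ is determined by two pieces of data:
\begin{itemize}
\item the action of $\boldsymbol{\mu}_{0,\lambda}(\mathbf b)$ on $T(\mathcal G)$;
\item its value at one point of $H$.
\end{itemize}
The first step is an elementary observation. Two vectors $\mathbf a,\mathbf c\in\mathbb R^N$ act identically on $T(\mathcal G)$ if and only if $\mathbf a-\mathbf c\in\operatorname{span}\{\mathbf 1\}$, because $T(\mathcal G)=\mathbf 1^\perp$. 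Consequently, two affine functions $\boldsymbol{\phi}\mapsto \alpha+\mathbf a\cdot\boldsymbol{\phi}$ and $\boldsymbol{\phi}\mapsto\gamma+\mathbf c\cdot\boldsymbol{\phi}$ coincide on $H$ if and only if their linear parts differ by a multiple of $\mathbf 1$ and they agree at a single point of $H$.

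\emph{Forward direction.} Assume the tangent hyperplanes $T_{\mathbf b^{(q)}}$ coincide on $H$. Fix $\lambda_1=0$. By the observation, $\partial_{\boldsymbol\phi}\Psi_0(\mathbf b^{(q)})-\partial_{\boldsymbol\phi}\Psi_0(\mathbf b^{(1)})=c_q\mathbf 1$ for some scalars $c_q$. Choosing $\lambda_q:=-c_q$ gives \eqref{eq:equal_mu}, and we set $\boldsymbol\mu^\star$ to be the common value. For \eqref{eq:common-tangent}, evaluate at $\mathbf b^{(r)}\in\mathcal G\subset H$:
\begin{align*}
T_{\mathbf b^{(q)}}(\mathbf b^{(r)})=T_{\mathbf b^{(r)}}(\mathbf b^{(r)})=\Psi_0(\mathbf b^{(r)}).
\end{align*}
Hence $\Omega_{\mathbf b^{(q)}}(\mathbf b^{(r)})=0$ by \eqref{eq:grand-potential}.

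\emph{Reverse direction.} Assume \eqref{eq:equal_mu} and \eqref{eq:common-tangent}. All $T_{\mathbf b^{(q)}}$ then share the linear part $\boldsymbol\mu^\star$. Moreover, $\Omega_{\mathbf b^{(q)}}(\mathbf b^{(r)})=0$ states that $T_{\mathbf b^{(q)}}(\mathbf b^{(r)})=\Psi_0(\mathbf b^{(r)})=T_{\mathbf b^{(r)}}(\mathbf b^{(r)})$. By the observation, $T_{\mathbf b^{(q)}}$ and $T_{\mathbf b^{(r)}}$ coincide on $H$, so a common tangent hyperplane exists.

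The main obstacle is the supporting property in Definition~\ref{def:coexistence-points}: $\Psi_0\ge T$ on $\mathcal G$, with equality exactly at the points $\mathbf b^{(q)}$. This is a global condition. It cannot follow from the local conditions \eqref{eq:equal_mu} and \eqref{eq:common-tangent} alone, since a nonconvex $\Psi_0$ may cross the hyperplane elsewhere. I would therefore read the lemma as characterizing the common-tangent part of the definition, with the supporting property retained as a standing assumption in the reverse direction, and I would state this explicitly in the proof. In the forward direction the supporting property is not needed: the tangent hyperplanes are defined through $\boldsymbol\mu_{0,\lambda}$ directly, so boundary points of $\mathcal G$, where a minimizer of $\Omega$ need not be a critical point, cause no difficulty.
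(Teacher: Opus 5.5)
Your proof is correct and takes essentially the same route as the paper: both directions compare the slope (the chemical potential modulo $\mathbf 1$) and the offset of the affine tangent functions on the saturation hyperplane, with the paper checking equality of $\Psi_0(\mathbf b^{(r)})-\boldsymbol{\mu}^\star\cdot\mathbf b^{(r)}$ explicitly where you use your general observation on affine functions. Your caveat is also well founded, because the paper's converse likewise stops at ``the tangent hyperplanes coincide'' and never checks the supporting condition $\Psi_0\ge T$ on $\mathcal G$ with equality only at the $\mathbf b^{(q)}$; that condition cannot follow from \eqref{eq:equal_mu}--\eqref{eq:common-tangent} (e.g.\ two equal-depth outer wells of an $N=2$ triple-well $\Psi_0$ whose middle well is deeper), so keeping it as a standing assumption in the reverse direction is the right reading.
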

\begin{proof}
If the points share a common tangent hyperplane, then the tangent hyperplanes at the points $\mathbf b^{(q)}, q=1,...,M$ have the same slope on the saturation hyperplane. This gives the equal-chemical-potential condition \eqref{eq:equal_mu}. Moreover, each
point $\mathbf b^{(r)}$ lies on the tangent hyperplane constructed at
$\mathbf b^{(q)}$, which is precisely the common-tangent construction \eqref{eq:common-tangent}.

Conversely, the equal-chemical-potential
condition \eqref{eq:equal_mu} gives a common slope, say $\boldsymbol{\mu}^\star$.  Hence each tangent hyperplane can be written as
\begin{align}
    T_{\mathbf b^{(r)}}(\boldsymbol{\phi})
    = \Psi_0(\mathbf b^{(r)}) +
    \boldsymbol{\mu}^\star\cdot(\boldsymbol{\phi}-\mathbf b^{(r)}) .
\end{align}
The common-tangent condition with indices $q$ and $r$ gives
\begin{align}
    0
    =
    \Omega_{\mathbf b^{(q)}}(\mathbf b^{(r)})
    =&~
    \Psi_0(\mathbf b^{(r)})
    -  T_{\mathbf b^{(q)}}(\mathbf b^{(r)})\nn\\
    =&~ \left(\Psi_0(\mathbf b^{(r)})  -  \boldsymbol{\mu}^\star\cdot\mathbf b^{(r)}\right) -\left(\Psi_0(\mathbf b^{(q)}) - \boldsymbol{\mu}^\star\cdot\mathbf b^{(q)}\right) .
\end{align}
All tangent hyperplanes thus have both the same slope and the same offset, and hence they coincide.
\end{proof}

In general, coexistence points are characterized by the common-tangent
construction rather than by the values of $\Psi_0$ alone. In certain situations, however, this condition simplifies. If several constrained minima of $\Psi_0$ have the same bulk free-energy value, then their tangent hyperplanes coincide and the minima are coexistence points. This gives the following sufficient condition.

\begin{lemma}[Equal-depth constrained minima]
\label{lem:equal-depth-constrained-minima}
Let $\mathbf b^{(1)},\ldots,\mathbf b^{(M)}\in\mathcal G$ be distinct minima of
$\Psi_0$ subject to the saturation constraint. If these constrained minima have
equal depth,
\begin{align}
    \Psi_0(\mathbf b^{(r)})=c,
    \qquad r=1,\ldots,M,
\end{align}
then $\mathbf b^{(1)},\ldots,\mathbf b^{(M)}$ are coexistence points.
\end{lemma}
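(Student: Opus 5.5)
The plan is to show that at an (interior) constrained minimum the tangent hyperplane restricted to the saturation hyperplane is \emph{flat}, i.e. constant. Equal depth then forces all these hyperplanes to coincide, and the supporting property reduces to the minimality of $c$. I will check the characterization lemma above, namely \eqref{eq:equal_mu} and \eqref{eq:common-tangent}, and then verify the supporting inequality of \cref{def:coexistence-points} directly.

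\emph{Step 1 (Lagrange condition).} Since $\mathbf b^{(r)}$ minimizes $\Psi_0$ subject to $\sum_\mA\phi_\mA=1$, the first-order condition gives $\partial\Psi_0/\partial\boldsymbol{\phi}(\mathbf b^{(r)})\cdot\delta\boldsymbol{\phi}=0$ for all $\delta\boldsymbol{\phi}\in T(\mathcal G)$. Hence $\partial\Psi_0/\partial\boldsymbol{\phi}(\mathbf b^{(r)})=-\lambda_r\mathbf 1$ for some scalar $\lambda_r$. Choosing this $\lambda_r$ gives $\boldsymbol{\mu}_{0,\lambda_r}(\mathbf b^{(r)})=\mathbf 0$ for every $r$, so \eqref{eq:equal_mu} holds with $\boldsymbol{\mu}^\star=\mathbf 0$.

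\emph{Step 2 (flat tangent and common tangent).} By \eqref{eq:tangent-hyperplane}, on the saturation hyperplane $T_{\mathbf b^{(q)}}(\boldsymbol{\phi})=\Psi_0(\mathbf b^{(q)})=c$ for all $q$. Then \eqref{eq:grand-potential} gives $\Omega_{\mathbf b^{(q)}}(\mathbf b^{(r)})=\Psi_0(\mathbf b^{(r)})-c=0$, which is \eqref{eq:common-tangent}. So all the tangent hyperplanes coincide with the constant $c$.

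\emph{Step 3 (supporting property).} It remains to show $\Psi_0(\boldsymbol{\phi})\ge c$ on $\mathcal G$, with equality exactly at the $\mathbf b^{(r)}$. This is where the hypothesis has to be read correctly, and I expect it to be the main obstacle. If ``minima'' meant merely local minima, a deeper well elsewhere would violate the inequality. Likewise, a further minimizer at the same depth would violate the ``if and only if'' clause. I will therefore make explicit that the $\mathbf b^{(r)}$ are understood as the global constrained minimizers of $\Psi_0$ on $\mathcal G$, with $c=\min_{\mathcal G}\Psi_0$ and the list complete. Under that reading both conditions follow immediately.

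A second subtlety concerns minima on the boundary of $\mathcal G$, for example on a face $\phi_\mA=0$. There the KKT conditions only give $\partial\Psi_0/\partial\boldsymbol{\phi}=-\lambda\mathbf 1+\boldsymbol{\nu}$ with $\boldsymbol{\nu}\ge 0$, so Step 1 fails. I would handle this case in one of two ways. One option is to note that the potentials considered (e.g.\ logarithmic ones) have $|\partial\Psi_0/\partial\phi_\mA|\to\infty$ at the faces, so minima are interior. The other option is to bypass derivatives: use the flat hyperplane $\boldsymbol{\phi}\mapsto c$ directly as the common supporting hyperplane in \cref{def:coexistence-points}.
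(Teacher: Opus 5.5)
Your Steps 1--2 are the paper's proof: the Lagrange condition gives $\boldsymbol{\mu}_{0,\lambda_r}(\mathbf b^{(r)})=\mathbf 0$, so every tangent hyperplane is the constant $c$ and $\Omega_{\mathbf b^{(r)}}(\mathbf b^{(s)})=c-c=0$. The paper stops there and relies on the characterization lemma, whose converse direction only shows that the hyperplanes coincide; your Step 3 reads the $\mathbf b^{(r)}$ as the complete set of global constrained minimizers, so that $\Psi_0\ge c$ on $\mathcal G$ with equality only at the $\mathbf b^{(r)}$, and this makes explicit the supporting-hyperplane requirement of \cref{def:coexistence-points}, which the paper's argument needs but leaves unstated.
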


\begin{proof}
Since $\mathbf b^{(r)}$ is a constrained minimum of $\Psi_0$ on the saturation
hyperplane, the constrained first-order optimality condition implies that there
exists a scalar $\lambda_r$ such that
\begin{align}
    \boldsymbol{\mu}_{0,\lambda_r}(\mathbf b^{(r)})
    =
    \frac{\partial\Psi_0}{\partial\boldsymbol{\phi}}(\mathbf b^{(r)})
    +
    \lambda_r\mathbf 1
    =
    \boldsymbol{0},
    \qquad r=1,\ldots,M .
\end{align}
Thus the equal-chemical-potential condition in
\cref{def:coexistence-points} holds with common value
$\boldsymbol{\mu}^\star=\boldsymbol{0}$. Moreover, all tangent hyperplanes at $\mathbf b^{(r)}$ coincide and are equal to the constant
hyperplane $c$:
\begin{align}
    T_{\mathbf b^{(r)}}(\boldsymbol{\phi})
    &=
    \Psi_0(\mathbf b^{(r)})
    +
    \boldsymbol{\mu}_{0,\lambda_r}(\mathbf b^{(r)})
    \cdot
    \left(
    \boldsymbol{\phi}-\mathbf b^{(r)}
    \right)= c.
\end{align}
Consequently, for all $r,s=1,\ldots,M$,
\begin{align}
    \Omega_{\mathbf b^{(r)}}(\mathbf b^{(s)})
    =
    \Psi_0(\mathbf b^{(s)})
    -
    T_{\mathbf b^{(r)}}(\mathbf b^{(s)})
    =
    c-c
    =
    0 .
\end{align}
Hence the common-tangent condition in \cref{def:coexistence-points} is also
satisfied. Therefore
$\mathbf b^{(1)},\ldots,\mathbf b^{(M)}$ are coexistence points.
\end{proof}
In accordance with the Gibbs phase rule we find that $F=N-M$, where $F$ is the number of thermodynamic degrees of freedom, $N$ the total number of phases and $M$ the number of coexisting phases, cf. \cite{landau_lifschitz_statphys_german,stinner2004diffuse}.
This means that if we consider coexistence of $M\leq N$ phases we have $N-M\geq 0$ degrees of freedom.
Hence, for the maximal case $M=N$, the coexistence points, if they exist, are unique, while in the other cases one expects hypersurfaces of valid coexistence points.
This conclusion can also be derived from a simple counting argument of the coexistence conditions.
For $M$ coexistence points we have $M(N-1)$ degrees of freedom, due to the simplex constraint.
Equal chemical potentials impose $(N-1)(M-1)$ constraints, and equal grand potential impose $M-1$ constraints.
Subtraction of constraints from degrees of freedom yields
\begin{align}
    M(N-1) - N(M-1) = N-M=F.
\end{align}

\begin{figure}
\captionsetup[subfigure]{justification=centering}
\begin{subfigure}{0.48\textwidth}
\centering
\includegraphics[scale=0.60]{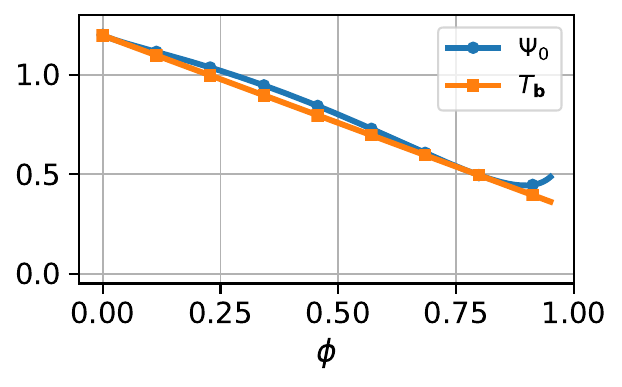}
\caption{Bulk free-energy density and common tangent hyperplane}

\end{subfigure}
\begin{subfigure}{0.48\textwidth}
\centering
\includegraphics[scale=0.60]{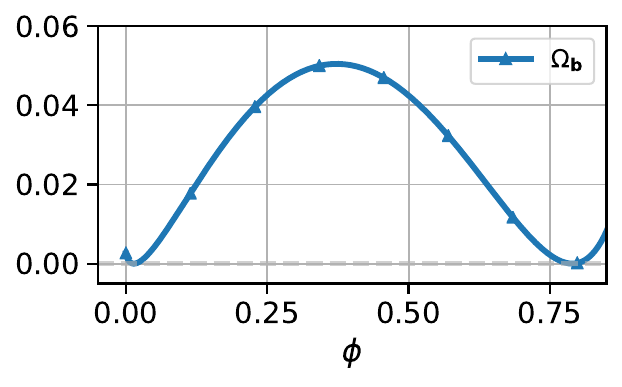}
\caption{Grand potential}

\end{subfigure}
\caption{Geometric interpretation of coexistence points. (a) A bulk free-energy density with a common tangent hyperplane, and (b) the associated grand potential.}
\label{fig:coexistence}
\end{figure}

\subsection{Free-energy density}\label{subsec:free_energy}
In the following we consider a free energy density of the following form
\begin{subequations}
\label{eq:Psi_final}
    \begin{align}
    \Psi(\boldsymbol{\phi},\nabla\boldsymbol{\phi})
    =&~ \Psi_0(\boldsymbol{\phi})+\frac12\sum_{\alpha,\beta=1}^{N}\kappa_{\alpha\beta}\nabla\phi_\alpha\cdot\nabla\phi_\beta.
\end{align}
\end{subequations}    
Here $\Psi_0$ denotes the bulk free-energy density,  whereas the gradient term represents an excess free energy in the interfacial region. Here $\kappa_{\mA\mB}=\kappa_{\mB\mA}$ defines the symmetric (volume-based) capillary matrix which is positive semi-definite on the tangent plane and includes both self- and cross-interaction terms among phases. The
matrix $\boldsymbol{\kappa}=(\kappa_{\mA\mB})_{\mA,\mB=1}^N$ is assumed to be
symmetric, $\kappa_{\mA\mB}=\kappa_{\mB\mA}$, and contains both self- and
cross-gradient contributions. This form is compatible with the free-energy
closure used in the mixture-aware formulation of \cite{mixtureaware2026} and
includes capillary contributions commonly used in multiphase phase-field models, see for example \cite{garcke1999,boyer2014hierarchy}.

The bulk part of the mixture-aware closure is given by
\begin{align}   \label{eq:FH}
\Psi_0(\boldsymbol{\phi}) = \sum_\mA W_\mA \phi_\mA\log(\phi_\mA) - \sum_{\mA\mB} \chi_{\mA\mB}\phi_\mA\phi_\mB,
\end{align} 
where $W_\mA>0$ are the weights of the entropic contributions, and $\chi_{\mA\mB}$ are pairwise interaction parameters. For the computations in this paper we use the polynomially regularized version of \eqref{eq:FH} proposed in \cite{mixtureaware2026}. The free energy \eqref{eq:FH} may be written in the Flory--Huggins form:
\begin{equation}
\Psi_0(\boldsymbol{\phi}) = \sum_\mA \frac{\phi_\mA}{N_\mA}\log(\phi_\mA) - \sum_{\mA\mB} \chi_{\mA\mB}\phi_\mA\phi_\mB,    
\end{equation} 
for constants polymerization degrees $N_\mA$, $\mA=1,...,N$. Another quite well-known example is the three-phase polynomial free energy of \cite{boyer2006study}. This model may be interpreted as a
particular multiphase analogue of the binary Ginzburg--Landau potential:
\begin{subequations}\label{eq:Boyerpot}
\begin{align}
\Psi_0&(\phi_1,\phi_2,\phi_3)
=\frac{6}{\varepsilon}\left(\Sigma_{1}\phi_1^2(1-\phi_1)^2+\Sigma_{2}\phi_2^2(1-\phi_2)^2 + \Sigma_{3}\phi_3^2(1-\phi_3)^2\right)\nn\\
&~\qquad\qquad\qquad+\frac{12}{\varepsilon}\Lambda \phi_1^2\phi_2^2\phi_3^2,\\
\Sigma_{1}&=\gamma_{12}+\gamma_{13}-\gamma_{23},\qquad
\Sigma_{2}=\gamma_{12}+\gamma_{23}-\gamma_{13},\qquad
\Sigma_{3}=\gamma_{13}+\gamma_{23}-\gamma_{12}. 
\end{align}
\end{subequations}  
Here $\gamma_{\mA\mB}$ denote prescribed pairwise parameters and $\Lambda\ge0$ penalizes fully ternary states. For this free energy, the relation between the parameters $\gamma_{\mA\mB}$ and the actual interfacial energies depends on the associated capillary contribution. We discuss this in \cref{thm:boyer-face-confinement}.

In the incompressible literature it is common to work with capillary matrices that exclude self-interactions, see e.g. \cite{boyer2006study}. Using the definition of the Gibbs simplex \eqref{eq:defgibbs} and its tangent space \eqref{eq:deftanspace} we reformulate the gradient contribution.
\begin{proposition}[Gradient contribution of the free energy]\label[proposition]{prop:rewriting gradient term}
    The gradient contribution of the free energy \eqref{eq:Psi_final} may be written as:
    \begin{subequations}
      \begin{align}
     \tfrac{1}{2}\sum_{\mA,\mB} \kappa_{\mA\mB} \nabla \phi_\mA \cdot \nabla \phi_\mB =&~ -\tfrac{1}{2}\sum_{\substack{\mA,\mB\\\mA < \mB}} \sigma_{\mA\mB} \nabla \phi_\mA \cdot \nabla \phi_\mB,\\
     \sigma_{\mA\mB} =&~ \kappa_{\mA\mA}+\kappa_{\mB\mB}-2\kappa_{\mA\mB} \geq 0.
     \end{align}
     \end{subequations}
\end{proposition}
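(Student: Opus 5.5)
The plan is to prove the identity componentwise for an arbitrary family of vectors $\mathbf g_\mA := \nabla\phi_\mA\in\mathbb R^d$, using only the fact, noted after \eqref{eq:deftanspace}, that for each spatial direction $x_i$ the vector $(\partial_{x_i}\phi_\mA)_\mA$ lies in $T(\mathcal G)$. In vector form this reads $\sum_\mA \mathbf g_\mA = \mathbf 0$. The key algebraic consequence is that the "diagonal" gradient terms can be traded for off-diagonal ones:
\begin{align}
|\mathbf g_\mA|^2 = -\sum_{\mB\neq\mA}\mathbf g_\mA\cdot\mathbf g_\mB ,
\end{align}
which follows by dotting $\mathbf g_\mA$ with $\sum_\mB \mathbf g_\mB=\mathbf 0$.

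\textbf{Step 1 (identity).} Expand the right-hand side of the claimed identity using the definition of $\sigma_{\mA\mB}$:
\begin{align}
\sum_{\mA<\mB}\sigma_{\mA\mB}\,\mathbf g_\mA\cdot\mathbf g_\mB
=\sum_{\mA<\mB}(\kappa_{\mA\mA}+\kappa_{\mB\mB})\,\mathbf g_\mA\cdot\mathbf g_\mB-2\sum_{\mA<\mB}\kappa_{\mA\mB}\,\mathbf g_\mA\cdot\mathbf g_\mB .
\end{align}
Each unordered pair contributes $\kappa_{\mA\mA}$ once and $\kappa_{\mB\mB}$ once, so by symmetry of the dot product the first sum equals $\sum_\mA \kappa_{\mA\mA}\sum_{\mB\neq\mA}\mathbf g_\mA\cdot\mathbf g_\mB$. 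By the relation above, this is $-\sum_\mA\kappa_{\mA\mA}|\mathbf g_\mA|^2$.

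By the symmetry $\kappa_{\mA\mB}=\kappa_{\mB\mA}$, the second sum equals $-\sum_{\mA\neq\mB}\kappa_{\mA\mB}\,\mathbf g_\mA\cdot\mathbf g_\mB$. Adding the two pieces gives $-\sum_{\mA,\mB}\kappa_{\mA\mB}\,\mathbf g_\mA\cdot\mathbf g_\mB$. Multiplying by $-\tfrac12$ yields exactly the stated identity.

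\textbf{Step 2 (sign of $\sigma_{\mA\mB}$).} Observe that
\begin{align}
\sigma_{\mA\mB}=(\mathbf e_\mA-\mathbf e_\mB)^\top\boldsymbol\kappa\,(\mathbf e_\mA-\mathbf e_\mB),
\end{align}
where $\mathbf e_\mA$ denotes the $\mA$-th unit vector; this uses the symmetry of $\boldsymbol\kappa$ for the cross terms. Since $\mathbf e_\mA-\mathbf e_\mB\in T(\mathcal G)$ and $\boldsymbol\kappa$ is assumed positive semi-definite on the tangent plane, it follows that $\sigma_{\mA\mB}\ge 0$.

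The only delicate point is the bookkeeping in Step 1: the sum over $\mA<\mB$ of $\kappa_{\mA\mA}+\kappa_{\mB\mB}$ must be reorganized correctly into a sum over ordered pairs with $\mA\neq\mB$. One must also remember that the saturation constraint is used only through the gradients, i.e. $\sum_\mA\nabla\phi_\mA=\mathbf 0$, and not through the values $\phi_\mA$ themselves. Everything else is direct expansion.
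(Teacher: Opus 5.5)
Your proof is correct and follows essentially the same route as the paper: both use $\sum_\beta \nabla\phi_\beta=\mathbf 0$ to rewrite the diagonal terms $\kappa_{\alpha\alpha}|\nabla\phi_\alpha|^2$ as $-\sum_{\beta\neq\alpha}\kappa_{\alpha\alpha}\nabla\phi_\alpha\cdot\nabla\phi_\beta$, and then regroup over pairs $\alpha<\beta$ using the symmetry of $\boldsymbol\kappa$ (the paper expands the left-hand side, while you expand the right-hand side). Your Step 2 writes $\sigma_{\alpha\beta}=(\mathbf e_\alpha-\mathbf e_\beta)^\top\boldsymbol\kappa(\mathbf e_\alpha-\mathbf e_\beta)$ and invokes semi-definiteness on $T(\mathcal G)$; this supplies the nonnegativity claim that the paper's proof leaves implicit.
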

We emphasize that in general neither $\kappa_{\mA\mB}$ nor $\sigma_{\mA\mB}$ represents a surface tension quantity. 
The (symmetric) capillary matrix $\sigma_{\mA\mB}=\sigma_{\mB\mA}$ captures only the interfacial contributions between different phases, excluding self-interactions: $\sigma_{\mA\mA} = 0$. In addition, \cref{prop:rewriting gradient term} reveals that the choice $\kappa_{\mA\mB}=\tfrac{1}{2}(\kappa_{\mA\mA}+\kappa_{\mB\mB})$ for all $\mA\neq\mB$ causes the gradient contribution in \eqref{eq:Psi_final} to vanish on the tangent plane.

\begin{proof}
    This follows from the sequence of identities:
    \begin{align}
        \sum_{\mA,\mB} \kappa_{\mA\mB} \nabla \phi_\mA \cdot \nabla \phi_\mB =&~ \sum_{\substack{\mA,\mB\\\mA < \mB}} \kappa_{\mA\mB} \nabla \phi_\mA \cdot \nabla \phi_\mB + \sum_{\substack{\mA,\mB\\\mB < \mA}} \kappa_{\mA\mB} \nabla \phi_\mA \cdot \nabla \phi_\mB + \sum_{\mA} \kappa_{\mA\mA} \nabla \phi_\mA \cdot \nabla \phi_\mA \nn\\
        =&~ \sum_{\substack{\mA,\mB\\\mA < \mB}} (\kappa_{\mA\mB}+\kappa_{\mB\mA}) \nabla \phi_\mA \cdot \nabla \phi_\mB - \sum_{\mA} \sum_{\substack{\mB\\\mB \neq \mA}} \kappa_{\mA\mA} \nabla \phi_\mA \cdot \nabla \phi_\mB \nn\\        
        =&~ \sum_{\substack{\mA,\mB\\\mA < \mB}}(\kappa_{\mA\mB}+\kappa_{\mB\mA}) \nabla \phi_\mA \cdot \nabla \phi_\mB - \sum_{\substack{\mA,\mB\\\mA < \mB}}  \kappa_{\mA\mA} \nabla \phi_\mA \cdot \nabla \phi_\mB \nn\\
        &~-  \sum_{\substack{\mA,\mB\\\mB < \mA}}  \kappa_{\mA\mA} \nabla \phi_\mA \cdot \nabla \phi_\mB \nn\\
        =&~\sum_{\substack{\mA,\mB\\\mA < \mB}}  ( \kappa_{\mA\mB}+\kappa_{\mB\mA} - \kappa_{\mA\mA}-\kappa_{\mB\mB}) \nabla \phi_\mA \cdot \nabla \phi_\mB,
    \end{align}
    where the first follows from splitting the double sum, the second from the tangent space \eqref{eq:deftanspace} and rearranging the summation, the third from again rearranging the terms, and the final identity from combining the terms.
\end{proof}

\section{Surface tension calibration}
\label{sec:surface_tension_calibration}

In this section we describe how the free energy \eqref{eq:Psi_final} determines pairwise surface tensions and how the capillary matrix $\boldsymbol{\kappa}$ can be calibrated from prescribed target surface tension values. In \cref{subsec:surface_tension}, we provide the definition of surface tension for a planar equilibrium interface and derive the corresponding path-based representation. Next, in \cref{subsec:mass_constraints} we describe the influence of mass constraints. In \cref{subsec:straight_line_approximation}, we introduce a straight-line approximation that provides an initial estimate for the surface tensions and effective pairwise capillary coefficients. In \cref{subsec:kappa_reconstruction}, we formulate an iterative reconstruction procedure for determining the capillary matrix from prescribed pairwise surface tensions. Finally, in \cref{subsec:face_confinement} we discuss the confinement of the minimizing path to the corresponding Gibbs simplex face for different free energies.

\subsection{Surface tension}\label{subsec:surface_tension}
Let $\mA,\mB\in\{1,\ldots,N\}$ with $\mA\neq\mB$, and let $\mathbf b^{(\mA)},\mathbf b^{(\mB)}\in\mathcal G$ denote the corresponding coexistence points. We consider a planar interface with unit normal $\boldsymbol{\nu}$ and introduce the normal coordinate $z=\mathbf x\cdot\boldsymbol{\nu}$. To describe the transition from $\mathbf b^{(\mA)}$ to $\mathbf b^{(\mB)}$, let $s:\mathbb R\to[0,1]$ be a strictly monotone parameter depending on the normal coordinate $z$, with 
\begin{align}
    \lim_{z\to-\infty}s(z)=0, \qquad \lim_{z\to+\infty}s(z)=1.
\end{align}
A planar one-dimensional interfacial profile is then a curve  $\mathbf{p}^{(\mA\mB)}=\mathbf{p}^{(\mA\mB)}(s)$ connecting two coexistence points:
\begin{align}\label{eq: boundary conditions phi}
  \mathbf{p}^{(\mA\mB)}: [0,1] \to \mathcal{G},\qquad \mathbf{p}^{(\mA\mB)}(0)=\mathbf{b}^{(\mA)},\quad \mathbf{p}^{(\mA\mB)}(1)=\mathbf{b}^{(\mB)}.
\end{align}
The corresponding physical profile, written as a function of the normal coordinate, is denoted by
$\hat{\mathbf p}^{(\mA\mB)}:\mathbb R\to\mathcal G$ and defined by
\begin{align}
    \hat{\mathbf p}^{(\mA\mB)}(z) := \mathbf p^{(\mA\mB)}(s(z)).
\end{align}
Hence, the phase-field profile depends only on the normal coordinate through
\begin{align}
    \boldsymbol{\phi}(\mathbf{x})=\mathbf{p}^{(\mA\mB)}(s)=\hat{\mathbf{p}}^{(\mA\mB)}(z).
\end{align}
With this notation, derivatives with respect to the normal coordinate and the path parameter are related by 
\begin{align}
    \frac{{\rm d}\hat{\mathbf p}^{(\mA\mB)}}{{\rm d}z}(z) = \mathbf p^{(\mA\mB)\prime}(s(z))\frac{{\rm d}s}{{\rm d}z}, \qquad \nabla\boldsymbol{\phi} = \frac{{\rm d}\hat{\mathbf p}^{(\mA\mB)}}{{\rm d}z}\otimes\boldsymbol{\nu}  = \mathbf p^{(\mA\mB)\prime}(s)\frac{{\rm d}s}{{\rm d}z}\otimes\boldsymbol{\nu}.
\end{align}

The free energy per unit interfacial area associated with the physical profile
$\hat{\mathbf p}^{(\mA\mB)}$ is
\begin{align}\label{eq: free energy interfacial area z}
  \gamma_{\mA\mB}[\hat{\mathbf p}^{(\mA\mB)}]
  =
  \int_{-\infty}^{\infty}
  \left(
      \Omega_{\mathbf{b}^{(\mA)}}(\hat{\mathbf p}^{(\mA\mB)}(z))
      + \frac{1}{2}\sum_{\gamma,\delta}
      \kappa_{\gamma\delta}
      \frac{{\rm d}\hat p_\gamma^{(\mA\mB)}}{{\rm d}z}(z)
      \frac{{\rm d}\hat p_\delta^{(\mA\mB)}}{{\rm d}z}(z)
  \right)
  {\rm d}z.
\end{align}
The surface tension between phases $\mA$ and $\mB$ is the minimal free energy per unit interfacial area among all profiles connecting the two coexistence points:
\begin{align}\label{eq: ST FH}
  \gamma_{\mA\mB}
  =
  \inf_{\hat{\mathbf p}^{(\mA\mB)}}\gamma_{\mA\mB}[\hat{\mathbf p}^{(\mA\mB)}].
\end{align}

The same interfacial energy can be written in terms of the path parameter $s$ as:
\begin{align}\label{eq: free energy interfacial area s}
  \gamma_{\mA\mB}[\mathbf p^{(\mA\mB)},z]
  =
  \int_0^1
  \left(
      \Omega_{\mathbf{b}^{(\mA)}}(\mathbf p^{(\mA\mB)}(s))
      \frac{{\rm d}z}{{\rm d}s}
      +
      \frac{1}{2}
      \sum_{\gamma,\delta}
    \kappa_{\gamma\delta}
    p_\gamma^{(\mA\mB)\prime}(s)
    p_\delta^{(\mA\mB)\prime}(s)
      \frac{{\rm d}s}{{\rm d}z}
  \right)
  {\rm d}s.
\end{align}
For a fixed path $\mathbf p^{(\mA\mB)}$, minimization with respect to the parameterization $z=z(s)$ gives
\begin{align}\label{eq: dzds_surface_tension}
    \frac{{\rm d}z}{{\rm d}s}
    =
    \sqrt{
    \frac{\sum_{\gamma,\delta}
    \kappa_{\gamma\delta}
    p_\gamma^{(\mA\mB)\prime}(s)
    p_\delta^{(\mA\mB)\prime}(s)}
    {2\Omega_{\mathbf{b}^{(\mA)}}(\mathbf p^{(\mA\mB)}(s))}
    }.
\end{align}
Substitution into \eqref{eq: free energy interfacial area s} yields the parameterization-invariant path functional
\begin{align}\label{eq: ST FH1}
  \gamma_{\mA\mB}
  =
  \inf_{\mathbf p^{(\mA\mB)}}
  \int_0^1
      \sqrt{
      2\Omega_{\mathbf{b}^{(\mA)}}(\mathbf p^{(\mA\mB)}(s))
      \sum_{\gamma,\delta}
      \kappa_{\gamma\delta}
      p_\gamma^{(\mA\mB)\prime}(s)
      p_\delta^{(\mA\mB)\prime}(s)
      }
   {\rm d}s.
\end{align}
Optimality conditions yield the well-known Euler-Lagrange equations associated with
\eqref{eq: free energy interfacial area z}:
\begin{align}\label{eq: EL}
  \dfrac{\partial\Omega_{\mathbf{b}^{(\mA)}}}{\partial \hat p_\gamma^{(\mA\mB)}}(\hat{\mathbf p}^{(\mA\mB)}(z)) - \sum_{\delta}\kappa_{\gamma\delta} \frac{{\rm d}^2\hat p_\delta^{(\mA\mB)}}{{\rm d}z^2}(z) = 0, \qquad \gamma=1,\ldots,N.
\end{align}
Taking the inner product of \eqref{eq: EL} with
${\rm d}\hat{\mathbf p}^{(\mA\mB)}/{\rm d}z$ and integrating with respect to $z$ gives
\begin{align}\label{eq: EL2}
    \Omega_{\mathbf{b}^{(\mA)}}(\hat{\mathbf p}^{(\mA\mB)}(z))
    -
    \frac{1}{2}\sum_{\gamma,\delta}
    \kappa_{\gamma\delta}
    \frac{{\rm d}\hat p_\gamma^{(\mA\mB)}}{{\rm d}z}(z)
    \frac{{\rm d}\hat p_\delta^{(\mA\mB)}}{{\rm d}z}(z)
    =
    C.
\end{align}
Using the far-field conditions at $z=\pm\infty$, together with
$\Omega_{\mathbf{b}^{(\mA)}}(\mathbf{b}^{(\mA)})=\Omega_{\mathbf{b}^{(\mA)}}(\mathbf{b}^{(\mB)})=0$ and
${\rm d}\hat{\mathbf p}^{(\mA\mB)}/{\rm d}z\to\mathbf 0$ as $z\to\pm\infty$, gives $C=0$.
Hence, the equilibrium profile satisfies the equipartition identity
\begin{align}\label{eq: EL3}
   \Omega_{\mathbf{b}^{(\mA)}}(\hat{\mathbf p}^{(\mA\mB)}(z))
    =
    \frac{1}{2}\sum_{\gamma,\delta}
    \kappa_{\gamma\delta}
    \frac{{\rm d}\hat p_\gamma^{(\mA\mB)}}{{\rm d}z}(z)
    \frac{{\rm d}\hat p_\delta^{(\mA\mB)}}{{\rm d}z}(z).
\end{align}
Substituting \eqref{eq: EL3} into the interfacial energy gives the equivalent
physical-coordinate forms
\begin{align}\label{eq: ST FH2 z}
  \gamma_{\mA\mB}
  =
  \inf_{\hat{\mathbf p}^{(\mA\mB)}} &~
  2\int_{-\infty}^{\infty}
      \Omega_{\mathbf{b}^{(\mA)}}(\hat{\mathbf p}^{(\mA\mB)}(z))
   {\rm d}z \nn\\
  =
  \inf_{\hat{\mathbf p}^{(\mA\mB)}} &~
  \int_{-\infty}^{\infty}
      \sum_{\gamma,\delta}
      \kappa_{\gamma\delta}
      \frac{{\rm d}\hat p_\gamma^{(\mA\mB)}}{{\rm d}z}(z)
      \frac{{\rm d}\hat p_\delta^{(\mA\mB)}}{{\rm d}z}(z)
   {\rm d}z .   
\end{align}
Equivalently, using \eqref{eq: dzds_surface_tension}, the surface tension can be written as the parameterization-invariant path functional
\begin{align}\label{eq: ST FH2}
  \gamma_{\mA\mB}
  =
  \inf_{\mathbf{p}^{(\mA\mB)}} 
  \int_{0}^{1} 
      \sqrt{
      2\Omega_{\mathbf{b}^{(\mA)}}(\mathbf{p}^{(\mA\mB)}(s))
      \sum_{\gamma,\delta}
      \kappa_{\gamma\delta}
      p_\gamma^{(\mA\mB)\prime}(s)
      p_\delta^{(\mA\mB)\prime}(s)}
   {\rm d}s.
\end{align}

\subsection{Effects of mass constraints}\label{subsec:mass_constraints}

In many applications the stationary solutions correspond to minimization of the free energy \eqref{eq:Psi_final}, but under additional constraints, most commonly a mass conservation constraint. This leads to
\begin{align}
 &\min_{\boldsymbol{\phi}} \int_\Omega  \Psi(\boldsymbol{\phi},\nabla \boldsymbol{\phi})   - \sum_\mA \hat{c}_\mA (\phi_\alpha-m_\alpha) + \lambda(\sum_\mA \phi_\mA - 1) \nn\\
 = &\min_{ \la \boldsymbol{\phi},\mathbf{e}_\mA \ra=m_\mA} \int_\Omega  \Psi(\boldsymbol{\phi},\nabla \boldsymbol{\phi})   + \lambda(\sum_\mA \phi_\mA - 1) \label{eq:1Dmin}
\end{align}

However, it is not at all clear whether the equilibrium profile related to the surface tension is also a minimizer of the constrained energy, i.e. we have to clarify the effects of $\hat{c}_\mA$ and $\lambda$ on equilibrium solutions. 

\begin{theorem}[Equipartition of the energy]\label{thm:equi_form}
Let $\boldsymbol{\phi}$ be a one-dimensional equilibrium profile connecting
$\mathbf{b}^{(\mA)}$ and $\mathbf{b}^{(\mB)}$, with
$\partial_z\boldsymbol{\phi}\to0$ at both endpoints. Then we have the equipartition:
\begin{align}
    \widetilde{\Psi}_0(\boldsymbol{\phi})
    -
    \widetilde{\Psi}_0(\mathbf b^{(\mA)})
    =
    \sum_{\mC,\mD}
    \frac{\kappa_{\mC\mD}}{2}
    \partial_z\phi_\mC \partial_z\phi_\mD,
\end{align}
where
\begin{align}
    \widetilde{\Psi}_0(\boldsymbol{\phi})
    =
    \Psi_0(\boldsymbol{\phi})
    -
    \sum_\mC \hat c_\mC\phi_\mC,
\end{align}
and whose endpoint values satisfy
$\widetilde{\Psi}_0(\mathbf b^{(\mA)})
 =
 \widetilde{\Psi}_0(\mathbf b^{(\mB)})$.
\end{theorem}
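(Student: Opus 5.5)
We will argue exactly as in the derivation of \eqref{eq: EL3}, but now starting from the Euler--Lagrange system of the constrained problem \eqref{eq:1Dmin}. In one dimension, with $\boldsymbol{\phi}=\boldsymbol{\phi}(z)$, the first variation of \eqref{eq:1Dmin} gives, for each $\mC=1,\ldots,N$,
\begin{align}
  \frac{\partial\Psi_0}{\partial\phi_\mC}(\boldsymbol{\phi}) - \hat c_\mC + \lambda - \sum_{\mD}\kappa_{\mC\mD}\,\partial_{zz}\phi_\mD = 0,
\end{align}
where $\hat c_\mC$ are the multipliers of the mass constraints and $\lambda$ is that of the saturation constraint. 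Equivalently, $\partial\widetilde{\Psi}_0/\partial\phi_\mC + \lambda - \sum_\mD \kappa_{\mC\mD}\partial_{zz}\phi_\mD=0$. Since $\hat c_\mC$ and $\lambda$ are constants, they can be absorbed into the modified bulk density $\widetilde{\Psi}_0$. This is the key observation.

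The next step is to multiply the $\mC$-th equation by $\partial_z\phi_\mC$ and sum over $\mC$. The $\lambda$ term drops out because $\sum_\mC\partial_z\phi_\mC=0$, as $\partial_z\boldsymbol{\phi}\in T(\mathcal G)$ by \eqref{eq:deftanspace}. The first term is $\tfrac{{\rm d}}{{\rm d}z}\widetilde{\Psi}_0(\boldsymbol{\phi})$ by the chain rule. Using the symmetry $\kappa_{\mC\mD}=\kappa_{\mD\mC}$, the gradient term equals $\tfrac{{\rm d}}{{\rm d}z}\bigl(\tfrac12\sum_{\mC,\mD}\kappa_{\mC\mD}\partial_z\phi_\mC\partial_z\phi_\mD\bigr)$. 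Hence
\begin{align}
  \frac{{\rm d}}{{\rm d}z}\Bigl(\widetilde{\Psi}_0(\boldsymbol{\phi}) - \tfrac12\sum_{\mC,\mD}\kappa_{\mC\mD}\,\partial_z\phi_\mC\,\partial_z\phi_\mD\Bigr)=0,
\end{align}
so the bracketed quantity is a constant $C$ along the profile.

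To identify $C$, we let $z\to-\infty$. There $\boldsymbol{\phi}\to\mathbf b^{(\mA)}$ and $\partial_z\boldsymbol{\phi}\to 0$, so $C=\widetilde{\Psi}_0(\mathbf b^{(\mA)})$, which gives the stated equipartition. Letting $z\to+\infty$ instead gives $C=\widetilde{\Psi}_0(\mathbf b^{(\mB)})$, so the two endpoint values coincide.

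The only delicate point is justifying the limits. We need $\widetilde{\Psi}_0$ to be continuous up to the endpoints, which holds because the polynomially regularized $\Psi_0$ is smooth on $\mathcal G$. We also need the profile to converge to its endpoints and to be regular enough (say $C^2$) that the chain-rule computation is valid; both are part of the hypotheses on an equilibrium profile.

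As a consistency check on the role of the multipliers, the endpoint conditions of the Euler--Lagrange system read $\partial\Psi_0/\partial\boldsymbol{\phi}(\mathbf b^{(\mA)})+\lambda\mathbf 1=\hat{\mathbf c}$, and likewise at $\mathbf b^{(\mB)}$. So $\hat{\mathbf c}$ plays the role of the common chemical potential $\boldsymbol{\mu}^\star$ in \eqref{eq:equal_mu}, and the equality $\widetilde{\Psi}_0(\mathbf b^{(\mA)})=\widetilde{\Psi}_0(\mathbf b^{(\mB)})$ is the common-tangent condition \eqref{eq:common-tangent}. Consequently $\widetilde{\Psi}_0(\boldsymbol{\phi})-\widetilde{\Psi}_0(\mathbf b^{(\mA)})=\Omega_{\mathbf b^{(\mA)}}(\boldsymbol{\phi})$ on $\mathcal G$, and the theorem is consistent with \eqref{eq: EL3}.
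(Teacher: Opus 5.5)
Your proposal is correct and follows the paper's proof essentially step for step. You multiply the constrained Euler--Lagrange equations by $\partial_z\phi_\gamma$ and sum, drop the saturation multiplier via $\sum_\gamma\partial_z\phi_\gamma=0$, use the symmetry of $\boldsymbol{\kappa}$ to obtain a total derivative, and fix the constant by the limits $z\to\pm\infty$, which also gives $\widetilde{\Psi}_0(\mathbf b^{(\alpha)})=\widetilde{\Psi}_0(\mathbf b^{(\beta)})$. (One harmless inaccuracy: a multiplier enforcing the pointwise saturation constraint is in general a function $\lambda(z)$, as in Appendix~\ref{app:equicomp}, not a constant; your argument never uses constancy, since that term is annihilated by the tangent-space property.)
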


The result requires that the tilted potential $\tilde\Psi_0$ is the same for $\mathbf b^{(\mA)}$ and $\mathbf b^{(\mB)}$. This enforces algebraic conditions on $\mathbf b^{(\mA)}, \mathbf b^{(\mB)}$, which can directly be connected to the coexistence points of $\tilde\Psi_0$ and the grand potential.

\begin{corollary}[Equal-depth wells]
\label{cor:equal-depth-equipartition}
Assume the setting of \cref{thm:equi_form}. If the two endpoint
values of the bulk potential are equal, i.e. $\Psi_0(\mathbf{b}^{(\mA)})=\Psi_0(\mathbf{b}^{(\mB)})$, then the mass multiplier satisfies
\begin{align}
    \hat{\mathbf c}\cdot(\mathbf{b}^{(\mB)}-\mathbf{b}^{(\mA)})=0.
\end{align}
Moreover, if
\begin{align}
    \hat{\mathbf c}\cdot(\boldsymbol{\phi}(z)-\mathbf{b}^{(\mA)})=0
    \qquad \text{for all }z,
\end{align}
then
\begin{align}
    \Psi_0(\boldsymbol{\phi})
    -
    \Psi_0(\mathbf{b}^{(\mA)})
    =
    \sum_{\mC,\mD}
    \frac{\kappa_{\mC\mD}}{2}
    \partial_z\phi_\mC\,\partial_z\phi_\mD .
\end{align}
If, in addition, the common endpoint value is normalized to zero, then
\begin{align}
    \Psi_0(\boldsymbol{\phi})
    =
    \sum_{\mC,\mD}
    \frac{\kappa_{\mC\mD}}{2}
    \partial_z\phi_\mC\,\partial_z\phi_\mD .
\end{align}
\end{corollary}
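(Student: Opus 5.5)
The plan is to read everything off \cref{thm:equi_form}, which gives the equipartition identity for the tilted potential $\widetilde{\Psi}_0=\Psi_0-\hat{\mathbf c}\cdot\boldsymbol{\phi}$ together with the endpoint equality $\widetilde{\Psi}_0(\mathbf b^{(\mA)})=\widetilde{\Psi}_0(\mathbf b^{(\mB)})$. The three claims then follow by substituting the definition of $\widetilde{\Psi}_0$ and using the stated hypotheses in turn. No new analysis is needed.

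First, I will write out the endpoint equality explicitly:
\begin{align}
\Psi_0(\mathbf b^{(\mA)})-\hat{\mathbf c}\cdot\mathbf b^{(\mA)}=\Psi_0(\mathbf b^{(\mB)})-\hat{\mathbf c}\cdot\mathbf b^{(\mB)}.
\end{align}
The equal-depth hypothesis $\Psi_0(\mathbf b^{(\mA)})=\Psi_0(\mathbf b^{(\mB)})$ cancels the bulk terms. What remains is $\hat{\mathbf c}\cdot(\mathbf b^{(\mB)}-\mathbf b^{(\mA)})=0$, which is the first claim.

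Second, I will expand the left-hand side of the equipartition identity in \cref{thm:equi_form}:
\begin{align}
\widetilde{\Psi}_0(\boldsymbol{\phi})-\widetilde{\Psi}_0(\mathbf b^{(\mA)})=\Psi_0(\boldsymbol{\phi})-\Psi_0(\mathbf b^{(\mA)})-\hat{\mathbf c}\cdot(\boldsymbol{\phi}(z)-\mathbf b^{(\mA)}).
\end{align}
Under the pointwise orthogonality hypothesis the last term vanishes for every $z$. The identity therefore reduces to $\Psi_0(\boldsymbol{\phi})-\Psi_0(\mathbf b^{(\mA)})=\sum_{\mC,\mD}\tfrac{\kappa_{\mC\mD}}{2}\partial_z\phi_\mC\partial_z\phi_\mD$. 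Third, setting the common endpoint value $\Psi_0(\mathbf b^{(\mA)})=\Psi_0(\mathbf b^{(\mB)})=0$ gives the final normalized form directly.

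The only step needing care is the bookkeeping of the multiplier. Since $\hat{\mathbf c}$ is determined only up to adding a multiple of $\mathbf 1$ (that freedom is absorbed by $\lambda$), I should check that the hypothesis $\hat{\mathbf c}\cdot(\boldsymbol{\phi}-\mathbf b^{(\mA)})=0$ does not depend on this representative. It does not, because $\boldsymbol{\phi}-\mathbf b^{(\mA)}\in T(\mathcal G)$ and hence $\mathbf 1\cdot(\boldsymbol{\phi}-\mathbf b^{(\mA)})=0$. The same observation shows that the first claim is consistent for every choice of representative.
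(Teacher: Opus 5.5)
Your proposal is correct and follows the paper's proof step for step. You expand the tilted equipartition identity of \cref{thm:equi_form}, cancel the equal bulk values in $\widetilde{\Psi}_0(\mathbf b^{(\alpha)})=\widetilde{\Psi}_0(\mathbf b^{(\beta)})$ to obtain $\hat{\mathbf c}\cdot(\mathbf b^{(\beta)}-\mathbf b^{(\alpha)})=0$, then drop the tilt term under the pointwise hypothesis and normalize; your extra check that this is invariant under $\hat{\mathbf c}\mapsto\hat{\mathbf c}+c\,\mathbf 1$ (because $\boldsymbol{\phi}-\mathbf b^{(\alpha)}\in T(\mathcal G)$) is valid but not part of the paper's argument.
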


\begin{corollary}[Equipartition for unequal well depths]
\label{cor:unequal-depth-tilt}
Assume the setting of \cref{thm:equi_form}.  If the two endpoint
values of the bulk potential are unequal, i.e. 
$\Psi_0(\mathbf{b}^{(\mA)})\neq \Psi_0(\mathbf{b}^{(\mB)})$ then 
\begin{align}
    \Psi_0(\boldsymbol{\phi}) - \Psi_0(\mathbf{b}^{(\mA)})
    - \hat{\mathbf c}\cdot \left(\boldsymbol{\phi}-\mathbf{b}^{(\mA)}\right)  = \sum_{\mC,\mD}
    \frac{\kappa_{\mC\mD}}{2} \partial_z\phi_\mC\,\partial_z\phi_\mD,
\end{align}
where one compatible choice of $\hat{\mathbf c}$ is
\begin{align}
    \hat{\mathbf c} = \frac{\Psi_0(\mathbf{b}^{(\mB)})-\Psi_0(\mathbf{b}^{(\mA)})}{\|\mathbf{b}^{(\mB)}-\mathbf{b}^{(\mA)}\|^2}(\mathbf{b}^{(\mB)}-\mathbf{b}^{(\mA)}).
\end{align}
\end{corollary}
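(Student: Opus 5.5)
The statement is Corollary~\ref{cor:unequal-depth-tilt}. I would derive it directly from \cref{thm:equi_form}, which gives, for the tilted potential $\widetilde\Psi_0=\Psi_0-\hat{\mathbf c}\cdot\boldsymbol\phi$,
\begin{align*}
\widetilde\Psi_0(\boldsymbol\phi)-\widetilde\Psi_0(\mathbf b^{(\mA)})=\sum_{\mC,\mD}\tfrac{\kappa_{\mC\mD}}{2}\partial_z\phi_\mC\partial_z\phi_\mD .
\end{align*}
Expanding the left-hand side as
\begin{align*}
\widetilde\Psi_0(\boldsymbol\phi)-\widetilde\Psi_0(\mathbf b^{(\mA)})=\Psi_0(\boldsymbol\phi)-\Psi_0(\mathbf b^{(\mA)})-\hat{\mathbf c}\cdot(\boldsymbol\phi-\mathbf b^{(\mA)})
\end{align*}
gives the displayed identity at once. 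So the first assertion is a purely algebraic rewriting of the theorem.

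The remaining work is the compatibility claim for $\hat{\mathbf c}$. \cref{thm:equi_form} also asserts $\widetilde\Psi_0(\mathbf b^{(\mA)})=\widetilde\Psi_0(\mathbf b^{(\mB)})$. Written out, this is the single scalar condition
\begin{align*}
\hat{\mathbf c}\cdot(\mathbf b^{(\mB)}-\mathbf b^{(\mA)})=\Psi_0(\mathbf b^{(\mB)})-\Psi_0(\mathbf b^{(\mA)}).
\end{align*}
This is the relation any admissible multiplier must satisfy. Because the right-hand side is nonzero, we need $\mathbf b^{(\mB)}\neq\mathbf b^{(\mA)}$; this holds since the two endpoint values of $\Psi_0$ differ.

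The proposed $\hat{\mathbf c}$ is the minimal-norm solution of this linear condition, namely the multiple of $\mathbf d:=\mathbf b^{(\mB)}-\mathbf b^{(\mA)}$. Substituting it, $\hat{\mathbf c}\cdot\mathbf d=\frac{\Psi_0(\mathbf b^{(\mB)})-\Psi_0(\mathbf b^{(\mA)})}{\|\mathbf d\|^2}\|\mathbf d\|^2$, so the endpoint condition holds exactly. I would also note that $\mathbf d\in T(\mathcal G)$ because both endpoints lie in $\mathcal G$. Hence $\hat{\mathbf c}\cdot\mathbf 1=0$, and the choice does not interfere with the saturation multiplier $\lambda$: adding a multiple of $\mathbf 1$ to $\hat{\mathbf c}$ changes nothing on the simplex. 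This is why the choice is ``one compatible'' choice rather than the unique one.

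The main subtlety is interpretive, not computational. In \cref{thm:equi_form}, $\hat{\mathbf c}$ is the mass multiplier attached to the actual equilibrium profile. Strictly, it is fixed by the Euler--Lagrange system together with $\widetilde\Psi_0$ having the endpoints as critical points, and not merely by the scalar endpoint condition. I would therefore phrase the corollary as a consistency statement: if the profile's multiplier satisfies the endpoint relation, as it must by the theorem, the identity holds. The displayed $\hat{\mathbf c}$ satisfies that necessary scalar relation, and any other admissible multiplier differs from it by a vector orthogonal to $\mathbf d$ (modulo $\mathbf 1$). I would not attempt to prove that this particular $\hat{\mathbf c}$ also makes $\mathbf b^{(\mA)},\mathbf b^{(\mB)}$ critical points of $\widetilde\Psi_0$, since the statement does not claim it.
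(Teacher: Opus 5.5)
Your proposal is correct and follows essentially the same route as the paper. The paper's proof checks that the given $\hat{\mathbf c}$ makes $\widetilde\Psi_0(\mathbf b^{(\alpha)})=\widetilde\Psi_0(\mathbf b^{(\beta)})$, then applies \cref{thm:equi_form} and expands the definition of $\widetilde\Psi_0$; these are your two steps in the opposite order. Your caveat is well taken: the identity with this particular $\hat{\mathbf c}$ presupposes that it is the profile's actual Euler--Lagrange multiplier, and the paper makes that assumption tacitly when it ``applies the theorem to $\widetilde\Psi_0$''.
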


Hence, we observe that, in general, the mass constraint leads to different equilibrium profiles. Note that surface tension does not depend on such constraints; therefore, it may happen that, for a given mass constraint, the equilibrium profile at which the surface tension value is attained is not realizable. The proofs of the above theorem and corollaries are given in Appendix \ref{appendix sec: minizers}.

\subsection{Straight-line approximation}
\label{subsec:straight_line_approximation}
The exact surface tension \eqref{eq: ST FH2} is obtained by minimizing over all admissible paths in the Gibbs simplex. In the calibration problem, however, we also need an explicit initial relation between the target surface tensions and the entries of $\boldsymbol{\kappa}$. For this purpose we first approximate the minimizing path by the straight segment connecting the coexistence points. The resulting expression is generally not exact, since the straight segment need not be an equilibrium path, but it provides a useful initialization for the reconstruction procedure below.

For each pair $\mA\neq\mB$, we define the straight segment connecting the coexistence points. Thus, for each pair $\mA\neq\mB$, we set
\begin{align}
    \tilde{\mathbf{p}}^{(\mA\mB)}=\mathbf{b}^{(\mA)}+s(\mathbf{b}^{(\mB)}-\mathbf{b}^{(\mA)}), \qquad s\in[0,1].
\end{align}
The corresponding approximate surface tension is
\begin{align}
    \tilde{\gamma}_{\mA\mB}:=\int_{0}^{1}       \sqrt{2\Omega_{\mathbf{b}^{(\mA)}}(\tilde{\mathbf{p}}^{(\mA\mB)}(s))\sum_{\gamma,\delta} \kappa_{\gamma\delta}\tilde{p}_\gamma^{(\mA\mB)\prime}(s) \tilde{p}_\delta^{(\mA\mB)\prime}(s)}
   {\rm d}s
\end{align}
Since $\tilde{\mathbf p}^{(\mA\mB)\prime}(s)=\mathbf b^{(\mB)}-\mathbf b^{(\mA)}$, this reduces to
\begin{align}\label{eq:straight_line_surface_tension}
  \tilde{\gamma}_{\mA\mB}= &~ \int_{0}^{1}       \sqrt{2\Omega_{\mathbf{b}^{(\mA)}}(\tilde{\mathbf{p}}^{(\mA\mB)}(s))\sum_{\gamma,\delta} \kappa_{\gamma\delta}(\mathbf{b}^{(\mA)}-\mathbf{b}^{(\mB)})_\gamma(\mathbf{b}^{(\mA)}-\mathbf{b}^{(\mB)})_\delta}   {\rm d}s\nn\\
   = &~ F_{\mA\mB}\sqrt{\tilde\sigma_{\mA\mB}},
\end{align}
where we defined 
\begin{subequations}
\begin{align}
    F_{\mA\mB} =&~ \int_{0}^{1}       \sqrt{2\Omega_{\mathbf{b}^{(\mA)}}(\tilde{\mathbf{p}}^{(\mA\mB)}(s)) }
   {\rm d}s, \\
   \tilde\sigma_{\mA\mB} =&~ \sum_{\gamma,\delta} \kappa_{\gamma\delta}(\mathbf{b}^{(\mA)}-\mathbf{b}^{(\mB)})_\gamma(\mathbf{b}^{(\mA)}-\mathbf{b}^{(\mB)})_\delta  .
\end{align}
\end{subequations}

\begin{remark}[Alternative forms of the approximate surface tension]
The straight-line approximation \eqref{eq:straight_line_surface_tension} separates the contribution of the bulk potential, through $F_{\mA\mB}$, from the contribution of the capillary metric, through $\tilde\sigma_{\mA\mB}$. If, in addition, one were to impose the equipartition relation within this restricted straight-line ansatz, then one would obtain the formal condition
\begin{align}
    F_{\mA\mB}=\sqrt{\tilde\sigma_{\mA\mB}},
\end{align}
and hence the equivalent identities
\begin{align}
\tilde{\gamma}_{\mA\mB}= ~ F_{\mA\mB}^2, \qquad \tilde{\gamma}_{\mA\mB}= ~ \tilde\sigma_{\mA\mB}. 
\end{align}
However, this equipartition condition is not generally satisfied by the prescribed straight-line path, since such a path is rarely the true minimizing interfacial profile. Consequently, the three quantities $F_{\mA\mB}\sqrt{\tilde\sigma_{\mA\mB}}, F_{\mA\mB}^2, \tilde\sigma_{\mA\mB}$ should be regarded as distinct in general; they coincide only when the straight-line ansatz is compatible with the equilibrium equipartition structure.  
\end{remark}

If the coexistence points are the unit vectors, i.e.
$\mathbf{b}^{(\mA)}=\mathbf{e}_\mA$ and
$\mathbf{b}^{(\mB)}=\mathbf{e}_\mB$, we recover the simple identity:
\begin{align}
    \sum_{\gamma,\delta} \kappa_{\gamma\delta}(\mathbf{e}_\mA-\mathbf{e}_\mB)_\gamma(\mathbf{e}_\mA-\mathbf{e}_\mB)_\delta = \kappa_{\mA\mA}-2\kappa_{\mA\mB}+\kappa_{\mB\mB}=\sigma_{\mA\mB}. \label{eq:defsigmaline}
\end{align}
Thus, for a given bulk potential and capillary matrix $\boldsymbol{\kappa}$, the straight-line approximation provides explicit estimates of the pairwise surface tensions. In the calibration problem we require the inverse direction: given a bulk potential and prescribed target values $\gamma_{\mA\mB}$, determine a capillary matrix $\boldsymbol{\kappa}$ whose equilibrium interfacial profiles reproduce these surface tensions.

\subsection{Reconstruction of the capillary matrix}
\label{subsec:kappa_reconstruction}

We now formulate the reconstruction problem for $\boldsymbol{\kappa}$. The coexistence points $\{\mathbf b^{(\mA)}\}$ are determined by the bulk potential $\Psi_0$ and are assumed to be available in what follows. In special cases, such as when the pure phases correspond to the unit vectors, these points are known explicitly. In general, however, their computation is a separate nonlinear problem and may be nontrivial.

A full symmetric $N\times N$ capillary matrix contains $N(N+1)/2$ entries, whereas the prescribed data consist of $N(N-1)/2$ pairwise surface tensions. However, due to the saturation constraint, only the restriction of $\boldsymbol{\kappa}$ to the tangent space \eqref{eq:deftanspace} contributes to the gradient energy. Equivalently, the phases of $\boldsymbol{\kappa}$ in the direction of $\bm 1$ are gauge degrees of freedom and do not affect the surface tensions. The effective number of capillary parameters is therefore $N(N-1)/2$, matching the number of pairwise target surface tensions.

If an equilibrium profile is fixed, then the equipartition form of the interfacial energy implies
a linear dependence on the capillary matrix. Indeed, for an exact equilibrium
profile one has
\begin{align}
    \gamma_{\mA\mB}
    =
    \int_{-\infty}^{\infty}
    \sum_{\gamma,\delta}
    \kappa_{\gamma\delta}
    \frac{{\rm d}\hat p_\gamma^{(\mA\mB)}}{{\rm d}z}
    \frac{{\rm d}\hat p_\delta^{(\mA\mB)}}{{\rm d}z}
    {\rm d}z .
\end{align}
Thus, if the profile is held fixed, the right-hand side is linear in the entries
of $\boldsymbol{\kappa}$:
\begin{subequations}\label{eq:fixed_profile_linear_system}
    \begin{align}
    \gamma_{\mA\mB}
    \approx&~
    \sum_{\gamma,\delta}
    A_{\mA\mB,\gamma\delta}\kappa_{\gamma\delta},
    \qquad \mA<\mB,\\
    A_{\mA\mB,\gamma\delta}
    :=&~
    \int_{-\infty}^{\infty}
    \frac{{\rm d}\hat p_\gamma^{(\mA\mB)}}{{\rm d}z}
    \frac{{\rm d}\hat p_\delta^{(\mA\mB)}}{{\rm d}z}
    {\rm d}z.
\end{align}
\end{subequations}
In the algorithm we use this idea in the path-parametrized form. For the current
collection of paths $\mathbf p^{(\mA\mB)}$, we define the path-metric coefficients
\begin{align}
    g^{\mA\mB}_{\gamma,\delta}
    :=
    \int_{0}^{1}
    p_\gamma^{(\mA\mB)\prime}(s)
    p_\delta^{(\mA\mB)\prime}(s)\,{\rm d}s .
\end{align}
For these fixed paths, the effective pairwise capillary quantities satisfy the
linear relation
\begin{align}
    \sigma_{\mA\mB}
    \approx
    \sum_{\gamma,\delta}
    g^{\mA\mB}_{\gamma,\delta}\kappa_{\gamma\delta}.
\end{align}
Collecting these equations for all pairs $\mA<\mB$ gives a linear system for the
entries of $\boldsymbol{\kappa}$. To assemble this system, we enumerate both the pairwise surface-tension data and the entries of the capillary matrix. The row index $\mu$ labels the unordered phase pair $(\mA,\mB)$ with $\mA<\mB$. We use the lexicographic ordering
\begin{align}
    (1,2),(1,3),\ldots,(1,N),(2,3),\ldots,(2,N),\ldots,(N-1,N).
\end{align}
Thus the pair $(\mA,\mB)$ is assigned the index
\begin{align}
    \eta(\mA,\mB)
    =
    \sum_{r=1}^{\mA-1}(N-r)+(\mB-\mA),
    \qquad \mA<\mB.
\end{align}
The first term counts the number of pairs whose first index is smaller than $\mA$, while the second term gives the position of $\mB$ among the pairs starting with $\mA$. Hence $\mu$ ranges from $1$ to $N(N-1)/2$.
Similarly, the column index $\nu$ labels the entries of $\boldsymbol{\kappa}$. In the algorithm we vectorize the full matrix column by column, namely
\begin{align}
    \bm M_\nu = \kappa_{\gamma\delta},
    \qquad
    \nu(\gamma,\delta)=\gamma+(\delta-1)N,
    \qquad
    \gamma,\delta=1,\ldots,N.
\end{align}
With this convention, the coefficient multiplying $\kappa_{\gamma\delta}$ in the equation for the pair $(\mA,\mB)$ is placed in the matrix entry $ G_{\eta(\mA,\mB),\nu(\gamma,\delta)} =  g^{\mA\mB}_{\gamma,\delta}$. Collecting all pairwise equations gives the matrix system
\begin{align}
    \bm{\sigma}\approx \bm G\bm M,
    \qquad
    \bm G\in\mathbb R^{N(N-1)/2\times N^2},
\end{align}
where $\bm{\sigma}_{\eta(\mA,\mB)}=\sigma_{\mA\mB}$. The least-squares problem
\begin{align}
    \bm M\in
    \arg\min_{\widehat{\bm M}}
    \|\bm G\widehat{\bm M}-\bm\sigma\|_2
\end{align}
therefore determines a vectorized capillary matrix that best matches the
prescribed effective pairwise coefficients for the current paths. The updated capillary matrix is then set to
\begin{align}
    \boldsymbol{\kappa}=\operatorname{mat}(\bm M),
\end{align}
where the inverse vectorization map $\operatorname{mat}$ is defined by
\begin{align}
    \operatorname{mat}(\bm M)_{\gamma\delta} = M_{\nu(\gamma,\delta)}.
\end{align}
Because only the restriction of the quadratic form associated with
$\boldsymbol{\kappa}$ to the tangent space
\begin{align}
    \bm 1^\perp
    :=
    \left\{
    \mathbf q\in\mathbb R^N:
    \mathbf q\cdot \bm 1=0
    \right\}
\end{align}
affects the gradient energy, the least-squares solution is not unique when the
full matrix $\boldsymbol{\kappa}$ is represented. phases in the direction of
$\bm 1$ are gauge degrees of freedom. We select the minimum-norm least-squares
representative, which fixes this gauge.

The updated capillary matrix is then used to recompute the one-dimensional
equilibrium profiles and the corresponding surface tensions. If the computed
surface tensions do not match the prescribed targets, the process is repeated.

This leads to the design of algorithm \ref{alg:reconstruct-K-2e}. It should be noted that energy equipartition \eqref{eq: EL3} can be checked by computing all three different equations in \cref{subsec:surface_tension}. In general, we do not expect that the solution is positive-definite on $\mathbf{1}^\perp$ cf. Remark \ref{rem:spreading}. 

\begin{figure}
\centering
\begin{algorithm}[H]
\caption{Surface-Tension–Consistent Reconstruction of $\boldsymbol{\kappa}$}
\label{alg:reconstruct-K-2e}
\DontPrintSemicolon
\SetKwInOut{Input}{Input}
\SetKwInOut{Output}{Output}
\Input{$\mathbf{b}^{(\mA)}$, targets $\gamma_{\mA\mB}$, tolerance $\mathrm{tol}>0$}
\Output{$\tilde\gamma_{\mA\mB}$ and $\boldsymbol{\kappa}$ with $\|\gamma_{\mA\mB}-\tilde\gamma_{\mA\mB}\|\le \mathrm{tol}$}
Set $\mathbf{p}^{(\mA\mB)}=\mathbf{b}^{(\mA)}+s(\mathbf{b}^{(\mB)}-\mathbf{b}^{(\mA)})$ for $\alpha<\beta$.
\While{true}{
  Compute $g^{\mA\mB}_{\gamma,\delta} = \int_{0}^{1} p_\gamma^{(\mA\mB)\prime}(s) p_\delta^{(\mA\mB)\prime}(s) \mathrm{d}s$.\;
  Define index maps $\eta = \sum_{r=1}^{\mA-1} (N-r) + (\mB-\mA)$, $\nu = \gamma + (\delta-1)N$.\;
  Set $G_{\eta,\nu} = g^{\mA\mB}_{\gamma,\delta}$ so that $\bm{G}\in\mathbb{R}^{N(N-1)/2\times N^2}$.\;
  Form $\bm{\sigma}$ with entries $\bm{\sigma}_\eta = \sigma_{\mA\mB}$ and the model $\bm{\sigma} = \bm{G}\bm{M}$.\;
  Compute $\bm{M} \in \arg\min_{\widehat{\bm{M}}} \|\bm{G}\widehat{\bm{M}} - \bm{\sigma}\|_2$.\;
  Set $\boldsymbol{\kappa} \leftarrow \operatorname{mat}(\bm{M})$.\;
  Solve for the equilibrium profiles (given $\mathbf{b}^{(\mA)}$ and $\boldsymbol{\kappa}$).\;
  Compute the surface tensions $\tilde\gamma_{\mA\mB}$.\;
  \If{$\|\gamma_{\mA\mB}-\tilde\gamma_{\mA\mB}\| \le \mathrm{tol}$}{
     \Return{$\tilde\gamma_{\mA\mB}, \boldsymbol{\kappa}$}
  }
}
\end{algorithm}
\end{figure}

The reconstruction described above assumes that the coexistence points associated
with the prescribed phases are distinct. If two coexistence points coincide, the
corresponding pair does not define an independent material interface and the
calibration reduces to a lower-dimensional problem. This reduction is described in Appendix \ref{sec:coinciding-coexistence-points}.

\begin{remark}[Straight-line approximation]
When adopting a straight-line approximation, one can resort to a least squares problem using distance matrices. In practice one has to include tolerances and positive definiteness checks of $\boldsymbol{\kappa}$ on the subspace $\bm{1}^\perp.$ A crucial part is to solve for the equilibrium profiles, which is presented in detail in Appendix \ref{app:equicomp}. 
\end{remark}
\begin{remark}[Total spreading]\label{rem:spreading}
In the case of total spreading, i.e. for $N=3$: $\gamma_{\mA\mB} > \gamma_{\mA\delta} + \gamma_{\delta\mB}$ and its higher-dimensional extensions, cf. \cite{boyer2014hierarchy}, a smooth minimizer does not exist. As a consequence, this case is not accounted for by the algorithm (it would break down).
\end{remark}

\subsection{Face confinement of minimizing profiles}\label{subsec:face_confinement}
The preceding surface tension formula involves a minimization over admissible paths in the Gibbs simplex. Whether the minimizing path between two coexistence points is confined to the corresponding simplex face depends on the choice of free energy. We make this distinction for the Flory--Huggins free energy and the Boyer free energy.

We show that for the mixture-aware free energy, an energy-minimizing interfacial profile cannot lie exactly on a Gibbs-simplex face.
\begin{theorem}[No face confinement for the mixture-aware free energy]\label{thm:no-face-confinement-ideal-mixing}
Let $N\geq 3$. A profile that is confined to a Gibbs-simplex face on a
nontrivial interfacial region cannot be a local minimizer of the total
mixture-aware energy with $W_\mA=W>0$ and finite coefficients
$\chi_{\alpha\beta}$ and $\kappa_{\alpha\beta}$.
\end{theorem}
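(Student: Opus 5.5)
The plan is a one-sided variation argument: perturb the profile off the face in the direction of one absent phase and show that the logarithmic entropy makes the first-order energy change strictly negative. Suppose $\boldsymbol{\phi}$ is confined to a face on a nontrivial interval $I$ of the normal coordinate (or on an open region in higher dimensions). Confinement means that some phase $\gamma$ vanishes there, $\phi_\gamma\equiv0$, while $\boldsymbol{\phi}$ is not a pure state. Hence at least two other phases $\alpha,\beta$ are positive on a subinterval $J\subset I$; this is automatic on a genuinely interfacial region, which needs $N\ge3$.

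I would use admissible variations of the form $\boldsymbol{\phi}^\varepsilon=\boldsymbol{\phi}+\varepsilon\,\eta(z)(\mathbf e_\gamma-\mathbf e_\alpha)$ with $\varepsilon>0$ small and $\eta\ge0$ a smooth bump supported in $J$. These stay in $\mathcal G$, since $\phi_\alpha$ is bounded below on $\operatorname{supp}\eta$, and they lie in $T(\mathcal G)$. If a mass constraint is imposed, I would instead combine two disjoint bumps with opposite weights in the $\alpha$ and $\beta$ directions, arranged so that $\int \varepsilon\eta\,\mathrm dz$ cancels appropriately in each component. The logarithmic mechanism below is the same in that case.

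Next I would expand the energy difference $E[\boldsymbol{\phi}^\varepsilon]-E[\boldsymbol{\phi}]$ term by term.
\begin{itemize}
\item The gradient term is a quadratic form with finite $\kappa$, so it changes by $O(\varepsilon)$. Explicitly, it changes by $\varepsilon\int\kappa$-bilinear terms in $\eta'$ and $\boldsymbol{\phi}'$, plus $O(\varepsilon^2)$.
\item The interaction part $-\sum\chi_{\alpha\beta}\phi_\alpha\phi_\beta$ is smooth with finite $\chi$, so it also changes by $O(\varepsilon)$.
\item The entropic part of phase $\alpha$, $W\phi_\alpha\log\phi_\alpha$, is smooth where $\phi_\alpha>0$, so it changes by $O(\varepsilon)$.
\item The entropic part of phase $\gamma$ becomes $W\varepsilon\eta\log(\varepsilon\eta)$. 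Its integral is $W\varepsilon\log\varepsilon\int\eta\,\mathrm dz+W\varepsilon\int\eta\log\eta\,\mathrm dz$, and the first piece is $\sim -W\varepsilon|\log\varepsilon|\int\eta$.
\end{itemize}
Altogether $E[\boldsymbol{\phi}^\varepsilon]-E[\boldsymbol{\phi}]=W\varepsilon\log\varepsilon\int\eta+O(\varepsilon)<0$ for $\varepsilon$ small. This contradicts local minimality in any reasonable topology, such as $L^\infty$ or $H^1$, since $\|\boldsymbol{\phi}^\varepsilon-\boldsymbol{\phi}\|=O(\varepsilon)$. Equivalently, the one-sided derivative of the energy in the direction $\mathbf e_\gamma$ is $-\infty$. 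The equal weights $W_\alpha=W$ are not essential here beyond $W>0$; they only fix the form of the closure.

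The main obstacle is making the $O(\varepsilon)$ bounds rigorous and uniform near the edge of the region where $\phi_\alpha$ may approach $0$. I would handle this by choosing $J$ compactly inside the set where $\phi_\alpha\ge c>0$ and by assuming $\boldsymbol{\phi}\in H^1$ on $J$ so that the gradient cross terms are finite. A second subtlety concerns the polynomially regularized version of \eqref{eq:FH} used in computations. There the singular logarithm may be replaced near $0$, and the argument then applies only to the genuine logarithmic energy named in the theorem; I would state this explicitly. I would also note the contrast with the Boyer potential \eqref{eq:Boyerpot}, whose smooth polynomial wells have zero one-sided derivative on faces, which is precisely why face confinement can hold there.
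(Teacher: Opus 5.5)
Your proposal is correct and is essentially the paper's proof. The only difference is the perturbation direction. The paper pushes mass into the absent phase via $\phi_\alpha^\varepsilon=(1-\varepsilon\eta)\phi_\alpha$, $\phi_\beta^\varepsilon=(1-\varepsilon\eta)\phi_\beta$, $\phi_\gamma^\varepsilon=\varepsilon\eta$, whereas you move along $\mathbf e_\gamma-\mathbf e_\alpha$; both conclude from the same balance $W\varepsilon\log\varepsilon\int\eta\,{\rm d}z+O(\varepsilon)<0$. Your direction has a minor advantage: it preserves $\sum_\alpha\phi_\alpha=1$ verbatim on any face when $N>3$, whereas the paper's formula as written needs $\phi_\alpha+\phi_\beta=1$.

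One caution about your optional mass-constraint aside, which goes beyond the theorem. Bumps in the $\alpha$ and $\beta$ directions cannot restore $\int\phi_\gamma$, because every admissible variation into the absent phase increases it. The compensating decrease of $\phi_\gamma$ must therefore be placed where $\phi_\gamma>0$, where it costs only $O(\varepsilon)$. If $\phi_\gamma\equiv0$ on the whole domain, no such variation exists and the argument fails.
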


For the free energy proposed by Boyer \eqref{eq:Boyerpot} we show that the minimizing profiles lie on the faces of the Gibbs simplex.
\begin{theorem}[Exact face confinement for the Boyer free energy]
\label{thm:boyer-face-confinement}
Let $N=3$. For the Boyer free energy defined in \cite{boyer2006study}, with
positive pair coefficients and nonnegative ternary penalty, every energy-minimizing one-dimensional profile connecting two pure states is confined
to the corresponding Gibbs-simplex face. Moreover, the minimum energy equals the prescribed surface tension of that pair.
\end{theorem}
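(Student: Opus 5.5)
The plan is to compare an arbitrary profile with the face profile through a componentwise Modica--Mortola (Young's inequality) lower bound. The Boyer potential \eqref{eq:Boyerpot} and the capillary term of \cite{boyer2006study} are both sums of decoupled single-phase contributions weighted by the same coefficients $\Sigma_i$. Concretely, I take the gradient term to be $\tfrac{c\varepsilon}{2}\sum_i\Sigma_i|\nabla\phi_i|^2$, with $c$ the constant fixed in \cite{boyer2006study}; I have not checked the value of $c$ against the $6/\varepsilon$ normalisation used here. In the notation of \cref{prop:rewriting gradient term} this corresponds to the diagonal choice $\kappa_{ii}=c\varepsilon\Sigma_i$, $\kappa_{ij}=0$ for $i\ne j$.

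I read ``positive pair coefficients'' as $\Sigma_1,\Sigma_2,\Sigma_3>0$, i.e.\ the strict triangle inequalities on the $\gamma_{\alpha\beta}$. The pure states $\mathbf e_\alpha$ are zeros of $\Psi_0$, so by \cref{lem:equal-depth-constrained-minima} they are coexistence points with $\Omega_{\mathbf e_\alpha}=\Psi_0$. The interfacial energy \eqref{eq: free energy interfacial area z} of any one-dimensional profile $\hat{\mathbf p}$ connecting $\mathbf e_1$ to $\mathbf e_2$ then splits as
\begin{align*}
\gamma_{12}[\hat{\mathbf p}]=\sum_{i=1}^3\Sigma_i\int_{\mathbb R}\Big(\tfrac{6}{\varepsilon}\hat p_i^2(1-\hat p_i)^2+\tfrac{c\varepsilon}{2}|\hat p_i'|^2\Big)\,{\rm d}z+\tfrac{12\Lambda}{\varepsilon}\int_{\mathbb R}\hat p_1^2\hat p_2^2\hat p_3^2\,{\rm d}z .
\end{align*}

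The key steps are as follows.

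\begin{enumerate}
\item Apply $a^2+b^2\ge 2ab$ to each bracket. This gives the lower bound $\Sigma_i\,2\sqrt{3c}\int|\hat p_i(1-\hat p_i)\,\hat p_i'|\,{\rm d}z\ge \Sigma_i\,2\sqrt{3c}\,|H(\hat p_i(+\infty))-H(\hat p_i(-\infty))|$, where $H(t)=\int_0^t s(1-s)\,{\rm d}s$ and $H(1)=1/6$.
\item Drop the $\Lambda$-term, which is nonnegative.
\item Insert the boundary values. Component $\hat p_1$ goes from $1$ to $0$ and $\hat p_2$ from $0$ to $1$, each contributing $\Sigma_i\sqrt{3c}/3$. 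Component $\hat p_3$ starts and ends at $0$, so its bound is $0$; this is where $\Sigma_3>0$ is needed, so that the term is nonnegative rather than merely bounded.
\item Conclude that $\gamma_{12}[\hat{\mathbf p}]\ge (\Sigma_1+\Sigma_2)\sqrt{3c}/3=2\gamma_{12}\sqrt{3c}/3$. This equals $\gamma_{12}$ exactly for the constant $c$ matched to the normalisation; checking that the constants close is routine bookkeeping.
\end{enumerate}

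For attainment, restrict to the face $\phi_3\equiv0$ with $\phi_1=\phi$, $\phi_2=1-\phi$. There the energy reduces to $(\Sigma_1+\Sigma_2)\int\big(\tfrac6\varepsilon\phi^2(1-\phi)^2+\tfrac{c\varepsilon}{2}|\phi'|^2\big)$, and the $\Lambda$-term vanishes identically. The binary tanh heteroclinic satisfies equipartition \eqref{eq: EL3}, so Young's inequality holds with equality and $\phi$ is monotone. Hence the lower bound is attained and $\gamma_{12}$ equals the prescribed surface tension. The same argument applies to the other pairs by symmetry.

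For confinement, take any minimiser. It must give equality in every inequality above. In particular $\Sigma_3\int\big(\tfrac6\varepsilon\hat p_3^2(1-\hat p_3)^2+\tfrac{c\varepsilon}2|\hat p_3'|^2\big)=0$. Since $\Sigma_3>0$ and both terms are nonnegative, this forces $\hat p_3'\equiv0$, so $\hat p_3\equiv\hat p_3(-\infty)=0$. The profile therefore lies on the face $\{\phi_3=0\}$.

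I expect the main obstacle to be the sign condition. If some $\Sigma_i\le0$ (the total spreading case of \cref{rem:spreading}), the decoupled bound fails, and a nonzero $\hat p_3$ excursion can lower the energy. So the hypothesis has to be stated as $\Sigma_i>0$, not just $\gamma_{\alpha\beta}>0$. A second, smaller issue is fixing the constant $c$ consistently with the rescaled potential, so that the bound reproduces $\gamma_{12}$ exactly rather than up to a constant factor.
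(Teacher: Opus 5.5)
Your proposal is correct and follows essentially the same route as the paper. The paper also uses a componentwise Young/Modica--Mortola bound on each single-phase term (yielding $(\Sigma_1+\Sigma_2)/2=\gamma_{12}$ after dropping the nonnegative $\Lambda$-term), attains it with the face $\tanh$ profile, and gets confinement by forcing the third-phase contribution to vanish using $\Sigma_3>0$. The constant you left open is $c=3/4$: the paper's gradient term is $\frac{3\varepsilon}{8}\sum_\alpha\Sigma_\alpha|\phi_\alpha'|^2$. This gives $\sqrt{3c}=3/2$, so your bound closes exactly to $\gamma_{12}$.
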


The proofs of Theorems \ref{thm:no-face-confinement-ideal-mixing} and \ref{thm:boyer-face-confinement} are provided in the Appendix \ref{appendix sec: minizers}.

\section{Interface width calibration}\label{sec:interface_width_calibration}

In this section we define and calibrate the diffuse-interface widths associated with the pairwise equilibrium profiles. In  \cref{subsec:interface_width}, we introduce an operational definition of the $\mA$--$\mB$ interface width based on the normal-coordinate profile and a prescribed transition interval. In
\cref{subsec:interface_width_parameters}, we fix this interval by requiring that, for the standard binary double-well model, the resulting width coincides with the nominal interface-thickness parameter. Finally, in \cref{sec:width-rescaling}, we introduce a global thickness parameter and use it to set a mesh-resolvable representative interface width without changing the calibrated surface tensions.

\subsection{Interface width}\label{subsec:interface_width}
We define an operational interface width for the equilibrium profile $\hat{\mathbf p}^{(\mA\mB)}(z)$ introduced above. The interface width between $\mathbf b^{(\mA)}$ and $\mathbf b^{(\mB)}$ is determined by fixing constants $a$ and $b$ with $0<a<b<1$, and measuring
the extent in $z$ (the physical coordinate normal to the interface) over which the profile transitions between $a$ and $b$. By construction, the width $\varepsilon_{\mA\mB}(a,b)$ depends on $(a,b)$: moving $a$ and $b$ closer to the pure phases increases the portion of the diffuse tails included in the measurement. For typical diffuse--interface models the equilibrium profile approaches the pure phases only
only asymptotically as $|z|\to\infty$, so that the tails extend to $|z|\to\infty$ and $\varepsilon_{\mA\mB}(0,1)$ would diverge. We
therefore restrict to $0<a<b<1$ and regard $\varepsilon_{\mA\mB}(a,b)$ as an operational definition that must be fixed
by convention.

To avoid a definition that depends on a distinguished phase label, we use the antisymmetric difference
\begin{align}\label{eq:dAB_def}
  d_{\mA\mB}(z) := \hat p_\mA(z)-\hat p_\mB(z),
\end{align}
and choose the symmetric face-projection parameter
\begin{align}\label{eq:s_symmetric_width}
  s(z):=\frac{1+\hat p_\mB(z)-\hat p_\mA(z)}{2}=\frac{1-d_{\mA\mB}(z)}{2}.
\end{align}
Given a bracket $0<a<b<1$, we set $\ell_a:=1-2a$ and $\ell_b:=1-2b$, so that $s(z_a)=a$ and $s(z_b)=b$ are equivalent to $d_{\mA\mB}(z_a)=\ell_a$ and $d_{\mA\mB}(z_b)=\ell_b$. We then define $z_a$ and $z_b$ as the (unique) normal coordinates where 
\begin{align}\label{eq:d_crossings_width}
  d_{\mA\mB}(z_a)=\ell_a,\qquad d_{\mA\mB}(z_b)=\ell_b,
\end{align}
(assuming $d_{\mA\mB}$ is monotone across the interface) and set the profile-based interface width as
\begin{align}\label{eq:width_profile}
  \varepsilon_{\mA\mB}(a,b):=\bigl|z_{b}-z_{a}\bigr|.
\end{align}
This definition depends only on the pair $(\mA,\mB)$ through the antisymmetric difference $d_{\mA\mB}$ and is therefore
invariant under relabeling of phases. The construction is illustrated in \cref{fig:interface_widths} for the three pairwise interfaces in an $N=3$ mixture. The upper panels show the equilibrium volume-fraction profiles, while the lower panels show the difference fields $d_{ij}(z)$ used to determine the crossing points $z_a$ and $z_b$.

\begin{figure}
\captionsetup[subfigure]{justification=centering}
\begin{subfigure}{0.48\textwidth}
\centering
\includegraphics[scale=0.35]{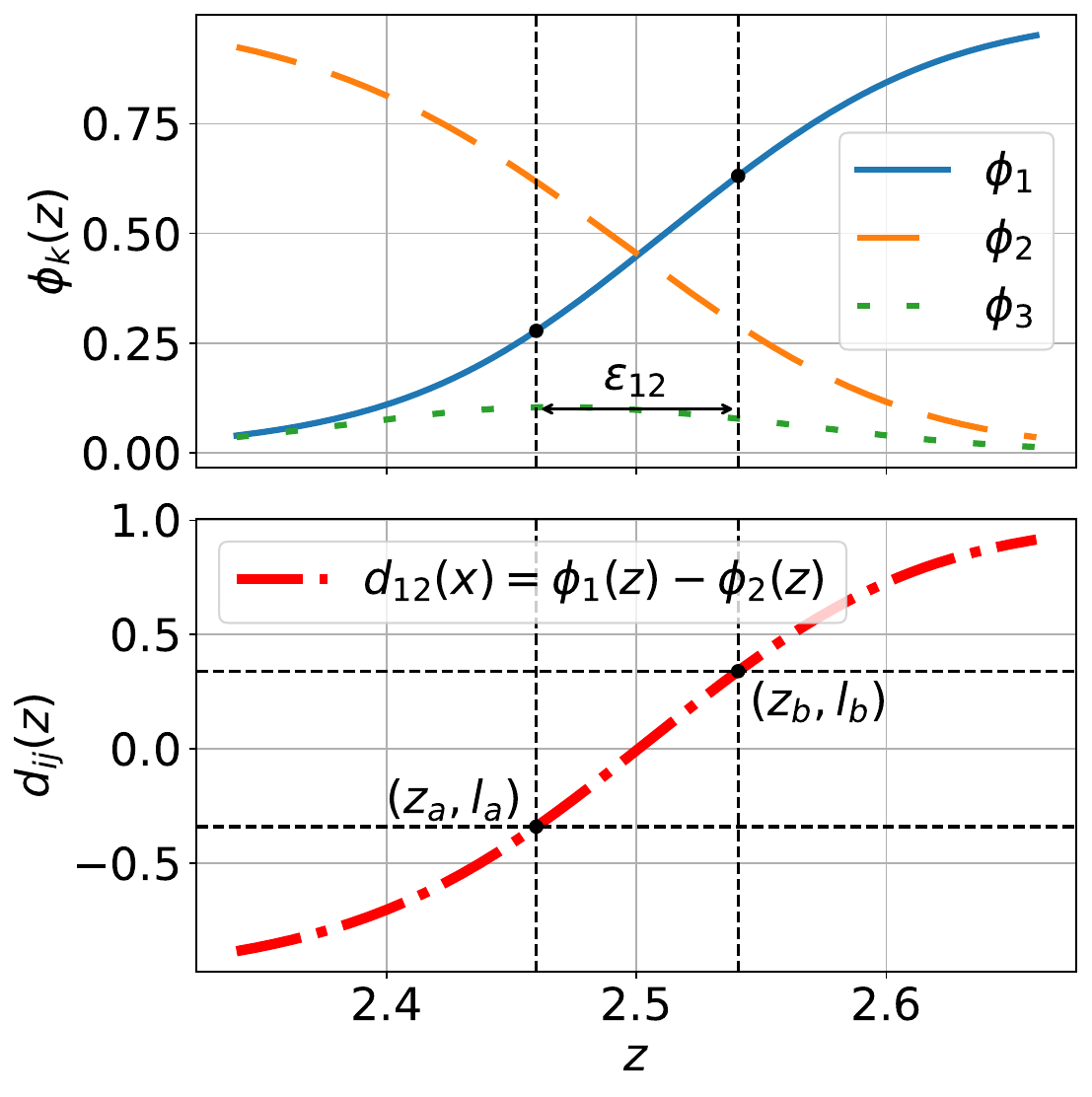}
\caption{Interface $1-2$}
\end{subfigure}
\begin{subfigure}{0.48\textwidth}
\centering
\includegraphics[scale=0.35]{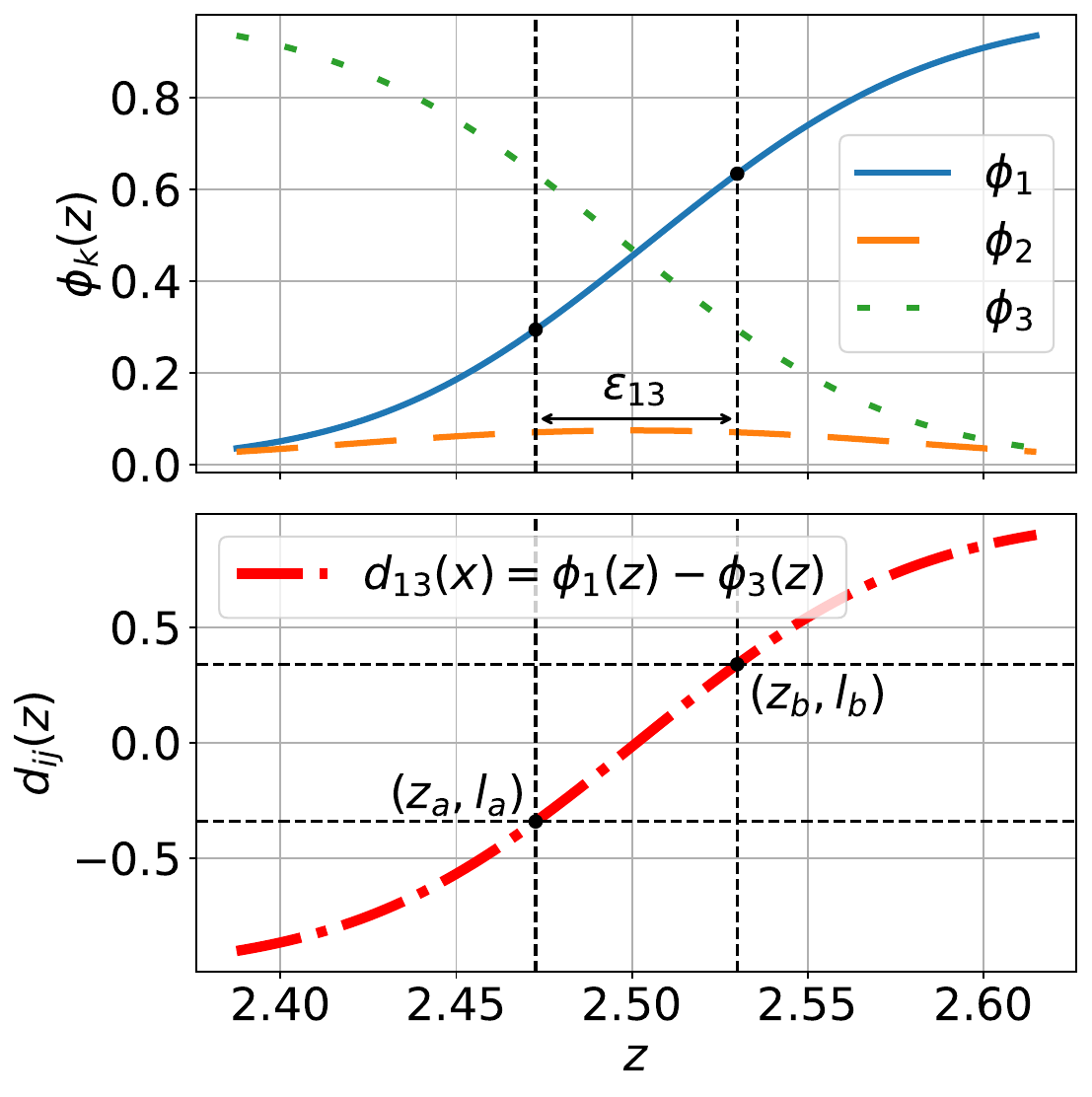}
\caption{Interface $1-3$}
\end{subfigure}
\par\centering
\begin{subfigure}{0.48\textwidth}
\centering
\includegraphics[scale=0.35]{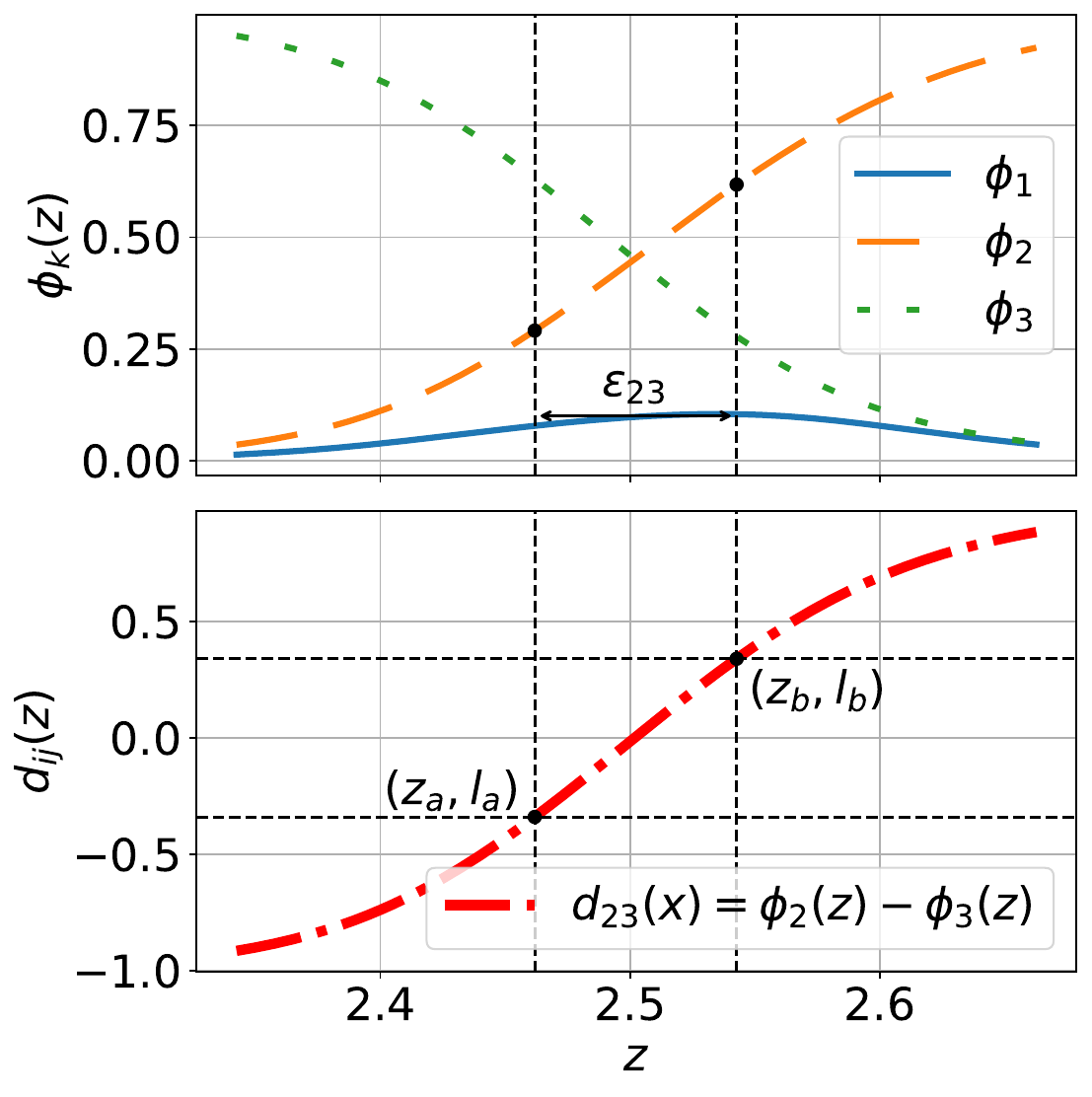}
\caption{Interface $2-3$}
\end{subfigure}
\caption{Interface-width construction for the three pairwise interfaces in an $N=3$ mixture. In each subfigure, the upper panel shows the equilibrium volume-fraction profiles $\phi_k(z)$ across the indicated interface. The dashed vertical lines mark the normal coordinates $z_a$ and $z_b$, and the horizontal arrow indicates the measured width $\varepsilon_{ij}(a,b)=|z_b-z_a|$. The lower panel shows the corresponding antisymmetric difference field $d_{ij}(z)=\phi_i(z)-\phi_j(z)$; the coordinates $z_a$ and $z_b$ are determined by the crossings $d_{ij}(z_a)=\ell_a$ and $d_{ij}(z_b)=\ell_b$.}
\label{fig:interface_widths}
\end{figure}

The same width can be evaluated from the free-energy representation of the equilibrium path. First we note the identity 
\begin{align}\label{eq:width_def_energy}
  \varepsilon_{\mA\mB}(a,b)=\bigl|z_{s=b}-z_{s=a}\bigr|
  =\left|\int_a^b \frac{dz}{ds}\,ds\right|,
\end{align}
which holds for the equilibrium path parametrized by a strictly monotone scalar parameter $s:\mathbb{R}\to[0,1]$. Inserting \eqref{eq: dzds_surface_tension} into \eqref{eq:width_def_energy} yields the free-energy-based expression
\begin{align}\label{eq:width_energy}
  \varepsilon_{\mA\mB}(a,b)
  =\int_a^b
  \sqrt{\frac{\mathbf p'(s)^{\!\top}\boldsymbol{\kappa}\,\mathbf p'(s)}
  {2\,\Omega_{\mathbf{b}^{(\mA)}}(\mathbf p^{(\mA\mB)}(s))}}\,ds.
\end{align}

\subsection{Specification of the interface width parameters}\label{subsec:interface_width_parameters}
We specify $a$ and $b$ so that the standard interface width associated with the standard double well potential for 2 phases is returned. As such, we consider a two–fluid system for which we introduce a single order parameter
$\phi:=\phi_1$ so that $\phi_2=1-\phi_1$. The free–energy density
can be written as
\begin{align}\label{eq:Cahn-Hilliard-free-energy-2-fluid}
    \Psi= \breve{\Psi}_0(\phi)
      + \frac{1}{2}\sigma_{12}\nabla\phi\cdot\nabla\phi,
\end{align}
where $\breve{\Psi}_0(\phi) := \Psi_0(\phi,1-\phi)$ and where
$\sigma_{12} = \kappa_{11}-2\kappa_{12}+\kappa_{22}$. We take the standard double well potential $\Psi_0 = 4 W \phi^2(1-\phi)^2$ (which may be written as $W(1-\varphi^2)^2/4$ with $\varphi=\phi_1-\phi_2$) for which we find:
\begin{subequations}
  \begin{align}
    \gamma_{12} =&~ \frac{\sqrt{2}}{3}\sqrt{W \sigma_{12}},\qquad \varepsilon_{12}(a,b) = ~ \sqrt{\frac{\sigma_{12}}{4W}}\frac{1}{\sqrt{2}} \left(\log \frac{b}{1-b}-\log \frac{a}{1-a}\right).
  \end{align}
\end{subequations}

We proceed in three steps:

\begin{enumerate}
\item \emph{Symmetric bracket.}
We choose $a$ and $b$ symmetrically around $\phi=1/2$, i.e.
\begin{align}
a = 1-b < \frac{1}{2} < b = 1-a.
\end{align}
Substituting this into the expression for $\varepsilon_{12}(a,b)$
yields
\begin{align}
\varepsilon_{12}(a,b)
= \sqrt{\frac{\sigma_{12}}{4W}}
\frac{1}{\sqrt{2}}
\log\frac{b^2}{(1-b)^2}.
\end{align}

\item \emph{Standard scaling.}
A common parametrization in phase–field models is
\begin{align}
\sigma_{12} = 4\gamma\epsilon,
\qquad
W = \frac{\gamma}{\epsilon},
\end{align}
where $\gamma$ is a characteristic surface–tension scale and
$\epsilon$ is a nominal interface thickness parameter. With this
choice we find $\sqrt{\sigma_{12}/(4W)} = \epsilon$, so that the interface thickness becomes:
\begin{align}
\varepsilon_{12}(a,b)
= ~ \epsilon\frac{1}{\sqrt{2}}
\log\frac{b^2}{(1-b)^2}.
\end{align}

\item \emph{Matching the nominal thickness.}
We now define $a$ and $b$ by requiring that the interface
width coincides with the nominal thickness, i.e.
\begin{align}
\varepsilon_{12}(a,b) = \epsilon.
\end{align}
This leads to
\begin{align}
  b=  \frac{1}{2}\left(1+\tanh\left(\frac{1}{2\sqrt{2}}\right)\right)
\approx 0.670,
\end{align}
and thus $a=1-b\approx 0.330$. This fixes a concrete, symmetric
bracket $(a,b)$ that reproduces the standard interface thickness
$\epsilon$ for the double–well potential.
In Fig.~\ref{fig:two-fluid-Psi0-interface-width} we plot the bulk
potential $\Psi_0(\phi)=4W\phi^2(1-\phi)^2$. The red vertical lines and the shaded band
indicate the interval $[\phi=a,\phi=b]$, which is the
portion of composition space that we count as interface width
$\varepsilon_{12}(a,b)$.
In Fig.~\ref{fig:two-fluid-phi-profile-interface-width} we show the
corresponding equilibrium profile $\phi(z)$ for the same potential and
capillary coefficient. 
\end{enumerate} 

\begin{figure}
\captionsetup[subfigure]{justification=centering}
\begin{subfigure}{0.48\textwidth}
\includegraphics[scale=0.41]{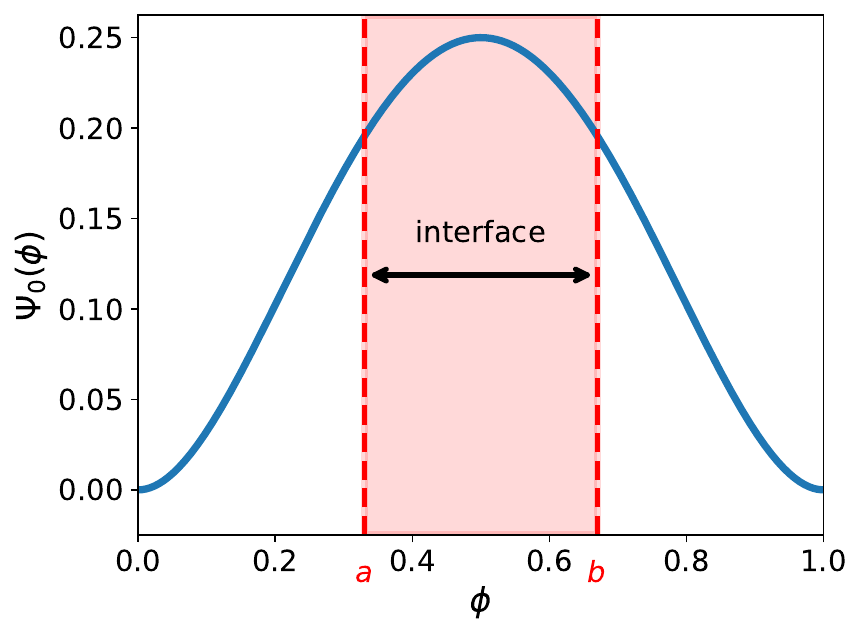}
  \caption{Double–well bulk potential
    $\Psi_0(\phi) = 4\phi^2(1-\phi)^2$.}
  \label{fig:two-fluid-Psi0-interface-width}
\end{subfigure}
\begin{subfigure}{0.49\textwidth}
\includegraphics[scale=0.41]{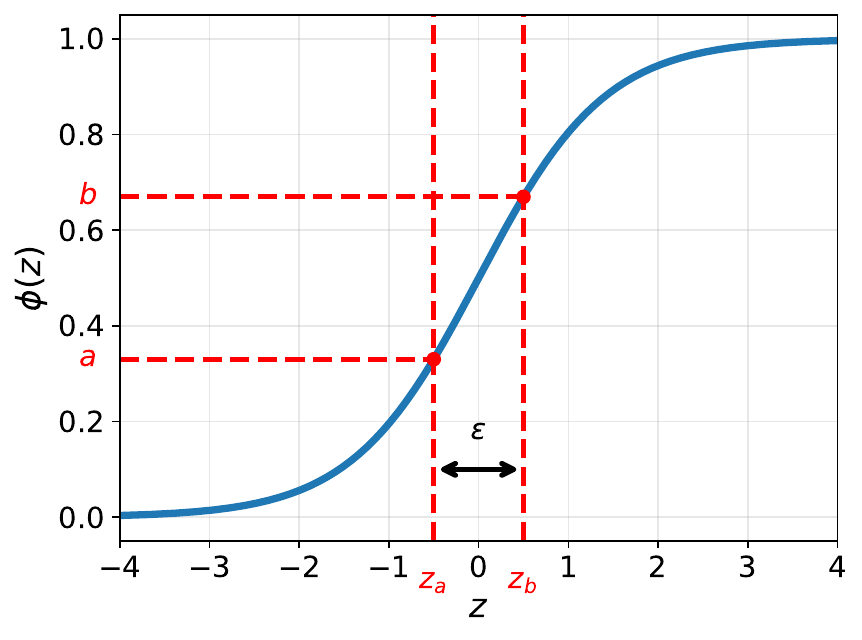}
\caption{Equilibrium one–dimensional interface profile $\phi(z)$.}
  \label{fig:two-fluid-phi-profile-interface-width}
\end{subfigure}
\caption{Visualization of the interface width for the standard double well potential.}
\label{fig: interface width}
\end{figure}

We thus set the interface width as $\varepsilon_{\mA\mB}=\varepsilon_{\mA\mB}(a,b)$, where $a$ and $b$ are defined above.

\subsection{Mesh-resolvable interface width}\label{sec:width-rescaling}

To expose explicitly the diffuse-interface thickness scale, we introduce a
parameter $\varepsilon_0>0$ and choose
\begin{align}
    W_\alpha=\frac{\bar W}{\varepsilon_0},
    \qquad
    \kappa_{\mA\mB}
    =
    \varepsilon_0\bar\kappa_{\mA\mB},
\end{align}
where $\bar W>0$ and $\bar\kappa_{\mA\mB}$ are independent of
$\varepsilon_0$. Substitution into the interface width formula \eqref{eq:width_energy} gives
\begin{align}\label{eq:epsilon_AB_scaling}
    \varepsilon_{\mA\mB} =&~ \varepsilon_0\bar\varepsilon_{\mA\mB}
\end{align}
where the reference pairwise interface width is
\begin{align}\label{eq:bar_epsilon_AB}
  \bar\varepsilon_{\mA\mB}
  :=
  \int_a^b
  \sqrt{
  \frac{
  \mathbf p'(s)^{\!\top}
  \bar{\boldsymbol{\kappa}}
  \mathbf p'(s)}
  {2\,\bar\Omega_{\mathbf{b}^{(\mA)}}(\mathbf p^{(\mA\mB)}(s))}
  }\,ds ,
\end{align}
where $\bar\Omega_{\mathbf{b}^{(\mA)}}= \varepsilon_0 \Omega_{\mathbf{b}^{(\mA)}}$ denotes the corresponding excess energy contribution obtained with $\bar W$.
Since an $N$-phase mixture generally contains several pairwise interfaces, the
values $\bar\varepsilon_{\mA\mB}$ are in general not identical. We therefore define the
representative interface width as the thinnest pairwise width:
\begin{align}\label{eq:representative_width}
    \varepsilon
    :=
    \min_{\mA<\mB}\varepsilon_{\mA\mB}
    =
    \varepsilon_0
    \min_{\mA<\mB}\bar\varepsilon_{\mA\mB}.
\end{align}
Consequently, prescribing the representative width $\varepsilon$ determines $    \varepsilon_0$. With this choice, the thinnest pairwise diffuse interface has width
$\varepsilon$, while all other pairwise interfaces are at least as wide. In the
numerical simulations we choose $\varepsilon$ in relation to the mesh size. The surface tensions are unaffected by this choice of $\varepsilon_0$.

\section{Numerical experiments for surface tension calibration}\label{sec:num_exp}
In the following we will employ our algorithm to compute capillary matrices and the associated interface width from a given bulk potential and surface tensions. In \cref{subsec:numerics_two_phase}, we first consider two-phase examples, including the Ginzburg--Landau potential for which an exact reference formula is available. In \cref{subsec:numerics_three_phase}, we study a three-phase system and compare the behavior of the Boyer and
regularized Flory--Huggins free energies. Finally, in
\cref{subsec:numerics_four_phase}, we consider a four-phase example. In Appendix \ref{sec:coinciding-coexistence-points} we provide numerical experiments for a reduction-consistent case ($N=3$ to $N=2$).

The algorithm is implemented in Python using NGSolve \cite{schoberl2014c++} to solve for the equilibrium profiles and to compute the surface tension, by solving the Euler-Lagrange equations via Newton’s method. The tolerance for the Newton update is $10^{-10}$ and the surface tension error (Euclidean norm) is $10^{-8}$. We will consider the result of the straight-line approximation ($\boldsymbol{\kappa}_0$) and the converged capillary matrix ($\boldsymbol{\kappa}_{\rm eq}$). Furthermore we also compute the approximate interface widths for the equilibrium solutions related to $\boldsymbol{\kappa}_{\rm eq}$.

\subsection{Two-phase cases}
\label{subsec:numerics_two_phase}
To verify our results we first apply our scheme for the Ginzburg--Landau potential $\Psi_0(\phi_1,\phi_2)=4\phi_1^2\phi_2^2$ and choose $\gamma_{12}=0.007$. The algorithm results in
\begin{align*}
    \boldsymbol{\kappa}_0 =10^{-5}\begin{pmatrix}
    2.7562 & - 2.7562 \\
  - 2.7562 &  2.7562
\end{pmatrix}, \qquad 
    \boldsymbol{\kappa}_{\rm eq} =10^{-5} \cdot\begin{pmatrix}
   5.5028  & -5.5028 \\
  -5.5028  &  5.5028
\end{pmatrix}.
\end{align*}
For this potential, the exact surface tension is
\begin{align}\label{eq:GL_ST_formula}
    \gamma_{12}(\boldsymbol{\kappa})=\frac{\sqrt{2}}{3}\sqrt{\kappa_{11}-2\kappa_{12}+\kappa_{22}}.
\end{align}
Using \eqref{eq:GL_ST_formula} for the initial and converged matrices gives
\begin{align}
    \gamma_{12}(\boldsymbol{\kappa}_0) \approx 0.00495, \qquad \gamma_{12}(\boldsymbol{\kappa}_{\rm eq}) \approx 0.00699.
\end{align}
Thus the straight-line initialization underestimates the prescribed surface
tension, whereas the reconstructed matrix recovers the target value up to the
specified tolerance.

In the remaining experiments we use the polynomial regularization of the mixture-aware potential described in \cite{mixtureaware2026}. We start again with a simple two-phase system and choose again $\gamma_{12}=0.007$. The visualization of the interface is given in Figure \ref{fig:n2_profiles} with interface width $\varepsilon=0.0071$ and the capillary matrices
\begin{align*}
    \boldsymbol{\kappa}_0 =10^{-5}\begin{pmatrix}
   2.4517 & -2.4517 \\
  -2.4517 & 2.4517
\end{pmatrix}, \qquad 
    \boldsymbol{\kappa}_{\rm eq} =10^{-5} \cdot\begin{pmatrix}
   4.8955  & -4.8955 \\
  -4.8955  &  4.8955
\end{pmatrix}.
\end{align*}
Using \eqref{eq:GL_ST_formula} we find
\begin{align}
    \gamma_{12}(\boldsymbol{\kappa}_0) = 0.004669, \qquad \gamma_{12}(\boldsymbol{\kappa}_{\rm eq}) = 0.006598.
\end{align}
Although this formula is not exact for the regularized Flory-Huggins potential,
it provides a useful indication of the size of the correction from
$\boldsymbol{\kappa}_0$ to $\boldsymbol{\kappa}_{\rm eq}$.

\begin{figure}
    \centering
    \includegraphics[width=0.5\linewidth]{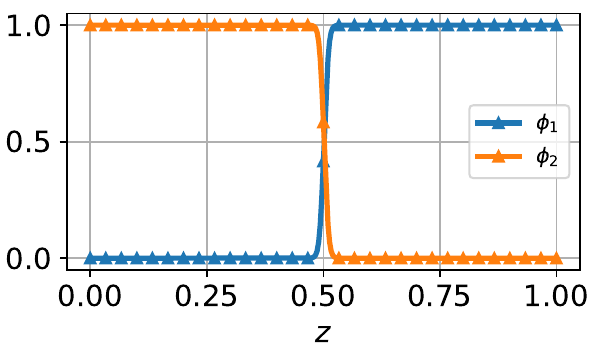}
    \caption{Equilibrium profile for the two-phase regularized Flory--Huggins example.}
    \label{fig:n2_profiles}
\end{figure}

\subsection{Three-phase cases}
\label{subsec:numerics_three_phase}
Next we consider three phases, i.e. $N=3$, and choose as surface tensions
\begin{align}
    \gamma_{12} = 0.007, \qquad \gamma_{13} = 0.005, \qquad \gamma_{23}=0.006.
\end{align}
For the Boyer potential \eqref{eq:Boyerpot} we verified that the algorithm yields, up to discretization and round-off errors, the exact capillary matrix, cf. \cref{thm:boyer-face-confinement}. In the case of the regularized mixture-aware potential the resulting capillary matrices are:
\begin{align}\label{eq:cap3three}
    \boldsymbol{\kappa}_0&=10^{-5}\begin{pmatrix}
  2.4906 & -2.0458 & -0.4447 \\
  -2.0458 & 3.2244 & -1.1785 \\
  -0.4447 & -1.1785 & 1.6233
\end{pmatrix}, \quad \boldsymbol{\kappa}_{\rm eq}=10^{-5}\begin{pmatrix}
  5.9377 & -4.9313 & -1.0064 \\
  -4.9313 & 7.7978 & -2.8664 \\
  -1.0064 & -2.8664 & 3.87287
\end{pmatrix}.
\end{align}
The corresponding interface widths are
\begin{align}
    \varepsilon_{12}=0.0084, \qquad \varepsilon_{13}=0.0060, \qquad \varepsilon_{23}=0.0071.
\end{align}
The equilibrium profiles and their projection onto the Gibbs simplex are shown in \cref{fig:n3_profiles}. The markers in the equilibrium profile plot correspond to the same markers in the Gibbs triangle. 

\begin{figure}
 \centering
 \begin{tabular}{c@{}c}
  \includegraphics[width=0.49\linewidth]{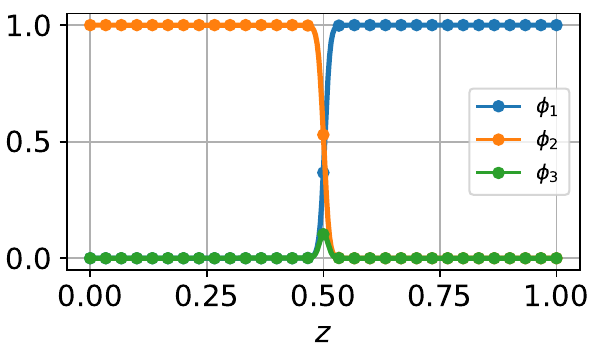}
  &
  \includegraphics[width=0.49\linewidth]{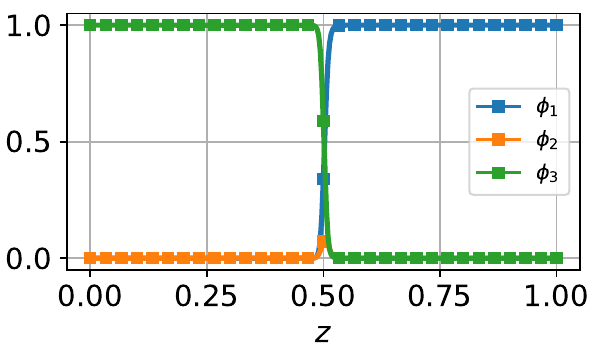} \\
  \includegraphics[width=0.49\linewidth]{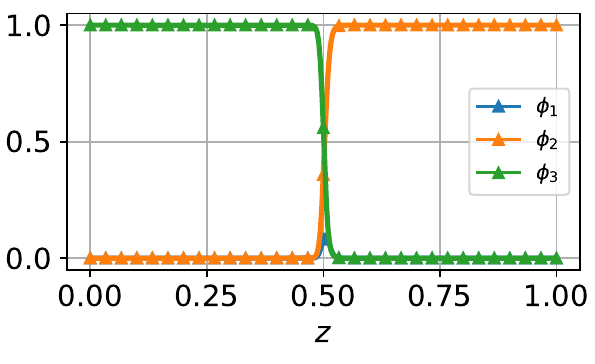}
  &
  \includegraphics[width=0.49\linewidth]{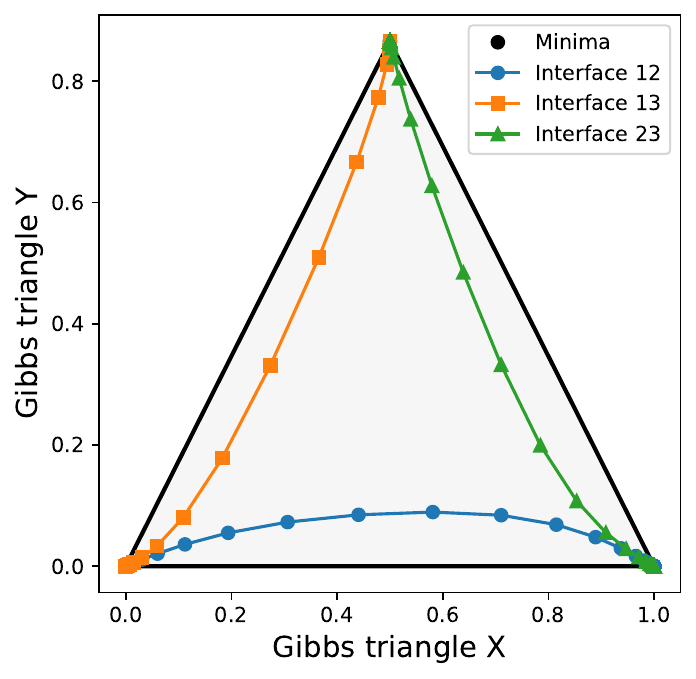} 
 \end{tabular}
 \caption{Equilibrium profiles for the three pairwise interfaces in the  $N=3$ regularized mixture-aware potential, together with their projection onto
 the Gibbs simplex.}
 \label{fig:n3_profiles}
\end{figure}

\subsection{Four-phase cases}
\label{subsec:numerics_four_phase}

Finally we consider a four-phase system. This is the smallest $N$ such that there are more interfaces than phases. We choose
\begin{align}
  \gamma_{12}=0.007,\qquad \gamma_{13}=0.01,\qquad \gamma_{23}=0.009, \nn\\
  \gamma_{14}=0.006,\qquad \gamma_{24}=0.005,\qquad \gamma_{34}=0.009.
\end{align}
The computed interface widths are
\begin{align}
  \varepsilon_{12}=0.0104,\qquad \varepsilon_{13}=0.0150,\qquad \varepsilon_{23}=0.0131, \nn\\
  \varepsilon_{14}=0.0085,\qquad \varepsilon_{24}=0.0072,\qquad \varepsilon_{34}=0.0130.
\end{align}
The corresponding capillary matrices are
\begin{align}
    \boldsymbol{\kappa}_0&=10^{-5}\begin{pmatrix}
         4.6031 & -1.0507 & -3.4774&  -0.0750\\
        -1.0507 &  3.1021 & -2.3266 &  0.2751\\
        -3.4774 & -2.3266 &  8.4558 & -2.6518\\
        -0.0750 &  0.2751 & -2.6518 &  2.4517
    \end{pmatrix}, \\
    \boldsymbol{\kappa}_{\rm eq}&=10^{-4}\begin{pmatrix}
  1.3682 & -0.3156 & -1.0588 & 0.0062 \\
  -0.3156 & 0.9151 & -0.7243 & 0.1248 \\
  -1.05881 & -0.7243 & 2.6552 & -0.8721 \\
  0.0062 & 0.1248 & -0.8721 & 0.7410
\end{pmatrix}.
\end{align}
The resulting interfaces are depicted in Figure \ref{fig:n4_profiles}, and their projections into the Gibbs simplex in Figure \ref{fig:gibbs_N4}. As in the
three-phase case, the markers in the equilibrium profile plot correspond to the same markers in the Gibbs simplex. 

\begin{figure}
 \centering
 \begin{tabular}{c@{}c}
  \includegraphics[width=0.49\linewidth]{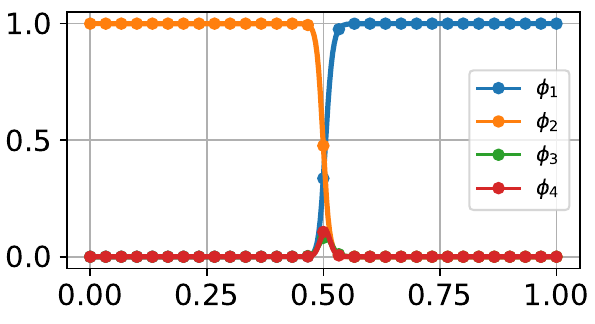}
  &
  \includegraphics[width=0.49\linewidth]{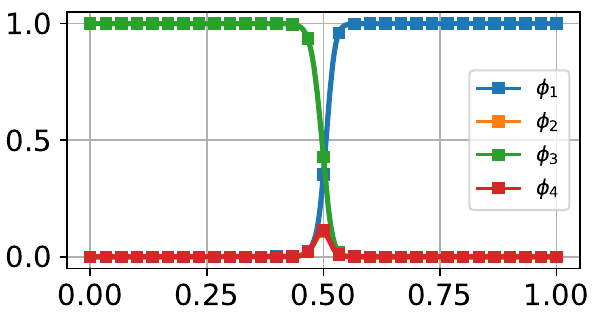} \\
  \includegraphics[width=0.49\linewidth]{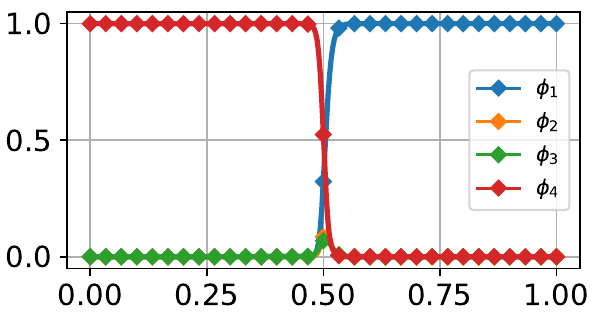}
  &
  \includegraphics[width=0.49\linewidth]{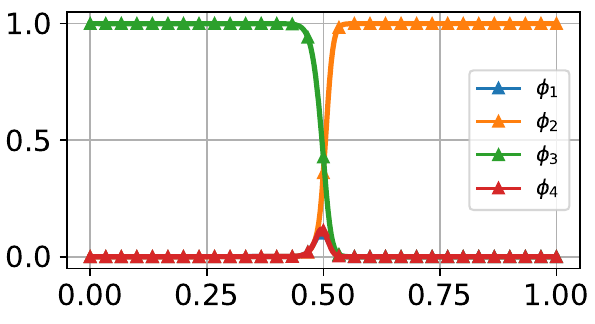} \\
  \includegraphics[width=0.49\linewidth]{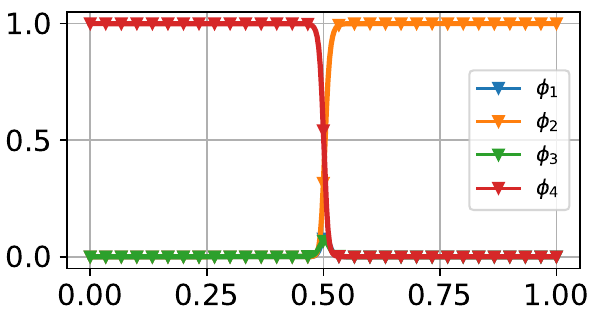}
  &
  \includegraphics[width=0.49\linewidth]{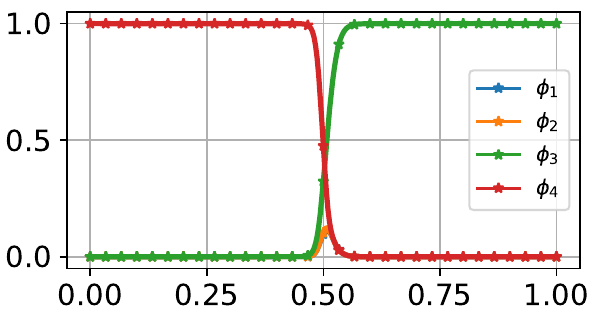} \\
 \end{tabular}
 \caption{Equilibrium profiles for the six pairwise interfaces in the $N=4$ regularized mixture-aware potential.}
 \label{fig:n4_profiles}
\end{figure}

\begin{figure}
    \centering
    \includegraphics[width=0.7\linewidth]{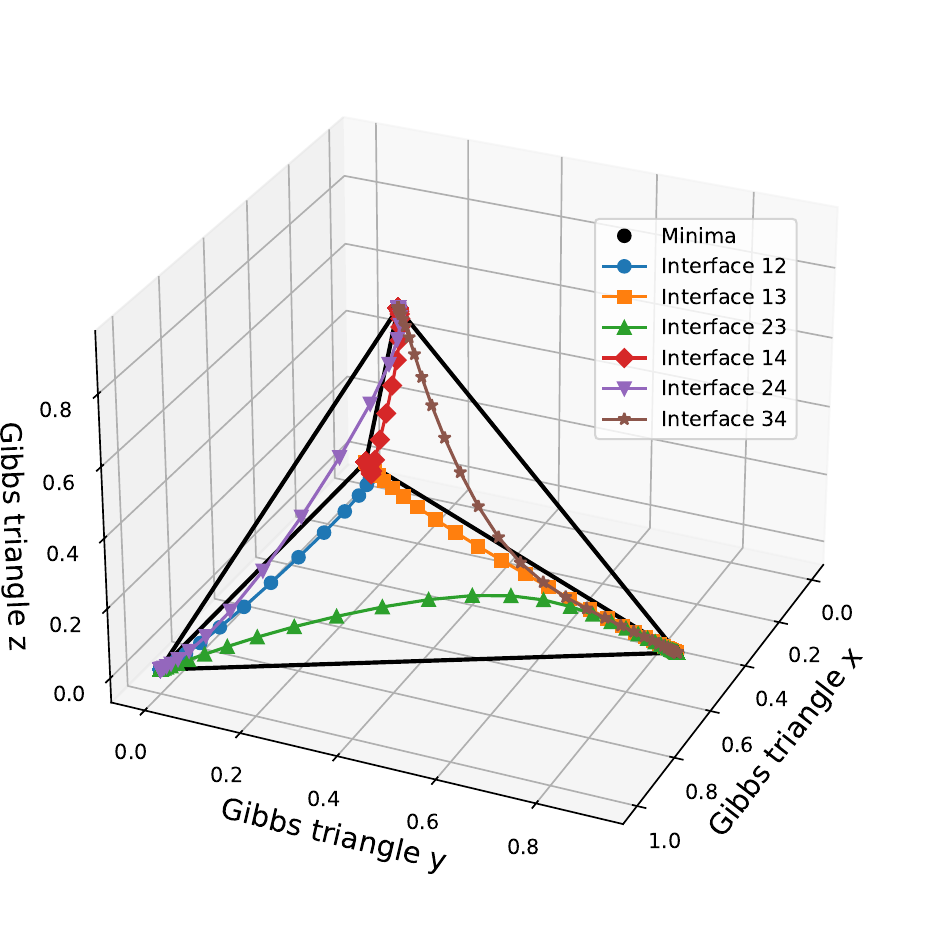}
    \caption{Projection of the $N=4$ equilibrium profiles into the Gibbs simplex.}
    \label{fig:gibbs_N4}
\end{figure}

\subsection{Positive definiteness and convergence behavior.}

The matrices reported above are not positive definite on $\mathbb{R}^{N}$. However, they are positive definite on the restricted space $\bm{1}^\perp$, which is the relevant space due to the saturation constraint. If one insists on a positive definite matrix on the full space, one can use $\boldsymbol{\kappa}+\alpha \bm{1}\otimes\bm{1}$ for $\alpha$ large enough.

In all three cases we computed the error of the approximated surface tension in the standard Euclidean norm as well as the error in the $L^2(-1,1)$-norm of every interface configuration between two successive iterates ($j+1$ and $j$), i.e. $\norm{\boldsymbol{\phi}^{(j+1)}_{\mA\mB}-\boldsymbol{\phi}^{(j)}_{\mA\mB}}_{L^2}$ for all interfaces $\mA-\mB$ (see Figure \ref{fig:errors}). 

\begin{figure}
    \centering
    \begin{subfigure}{0.49\linewidth}
        \includegraphics[trim={0cm 14cm 0cm 0cm}, clip, width=\linewidth]
        {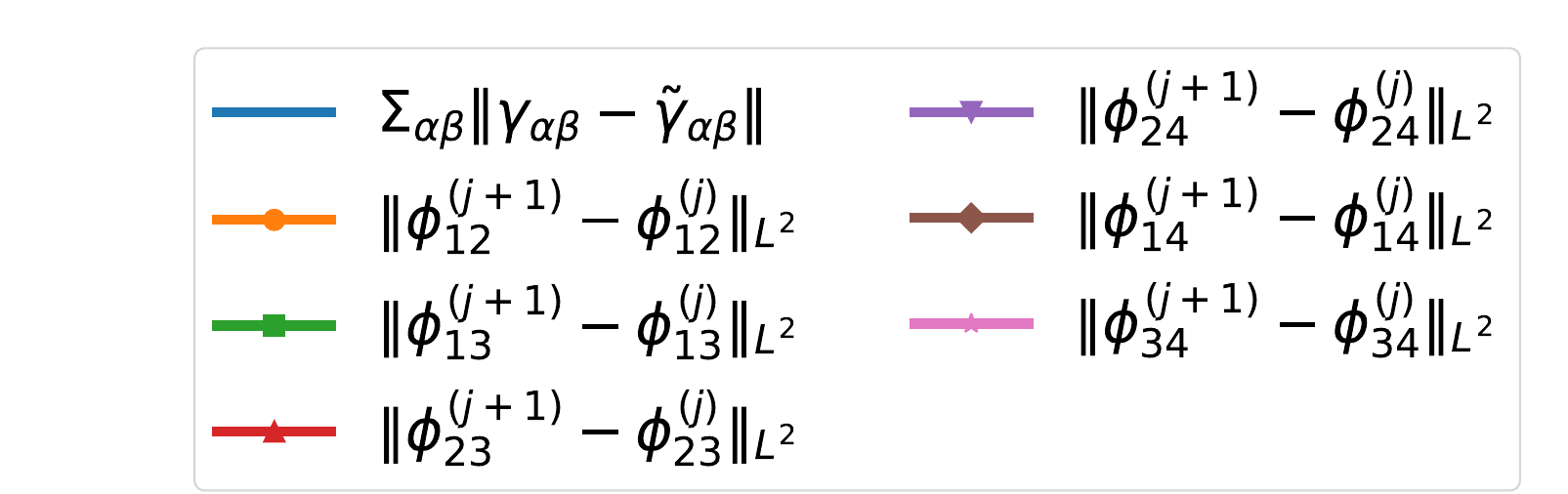}
        \vspace{0.5cm} 
        \caption{Legend for all three plots}
    \end{subfigure}
    \hfill
    \begin{subfigure}{0.49\linewidth}
        \includegraphics[trim={0cm 0cm 0cm 3cm}, clip, width=\linewidth]
        {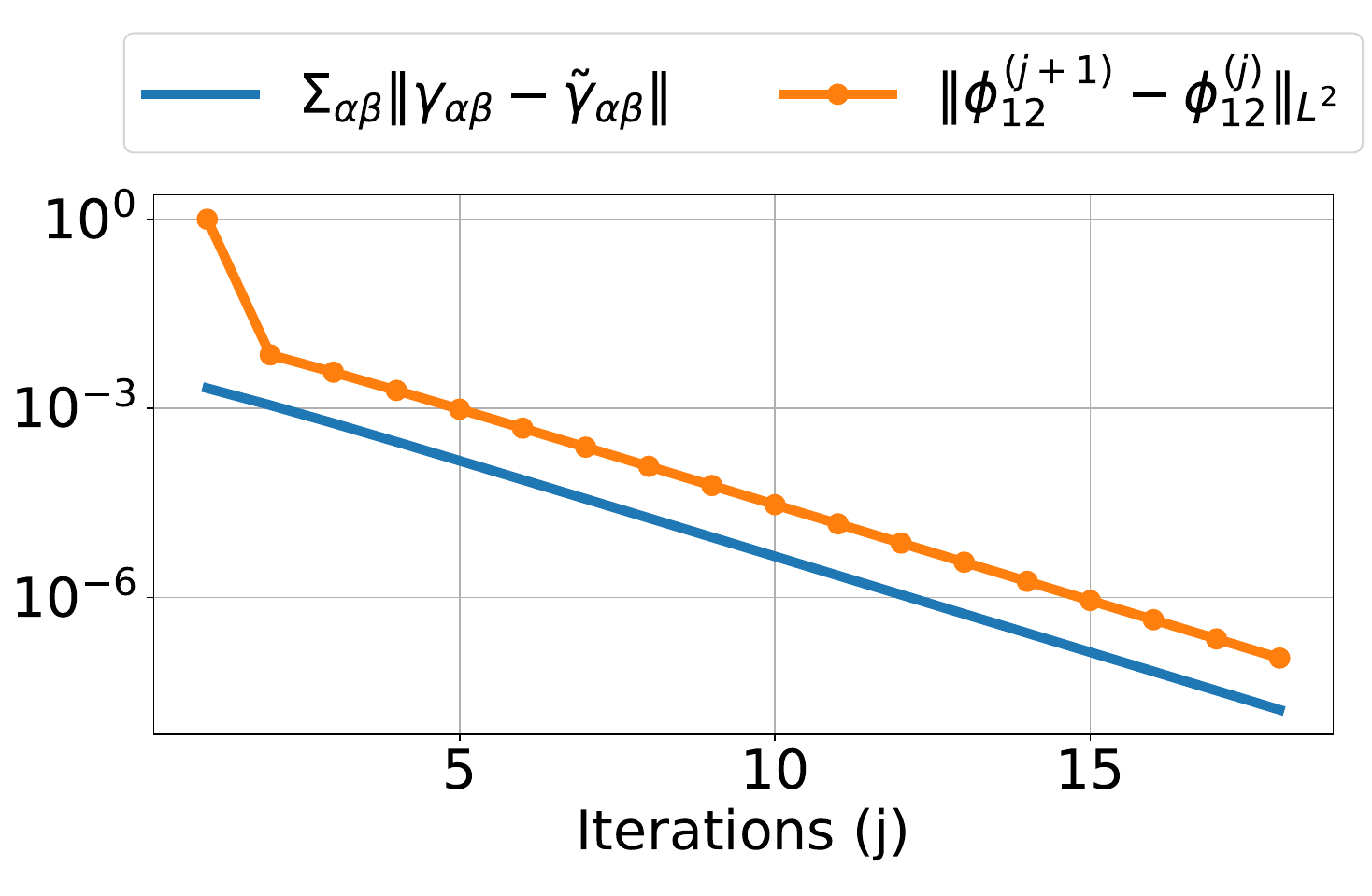}
        \caption{Error in surface tensions and iterates for $N=2$.}
    \end{subfigure}

    \medskip

    \begin{subfigure}{0.49\linewidth}
        \includegraphics[trim={0cm 0cm 0cm 5cm}, clip, width=\linewidth]
        {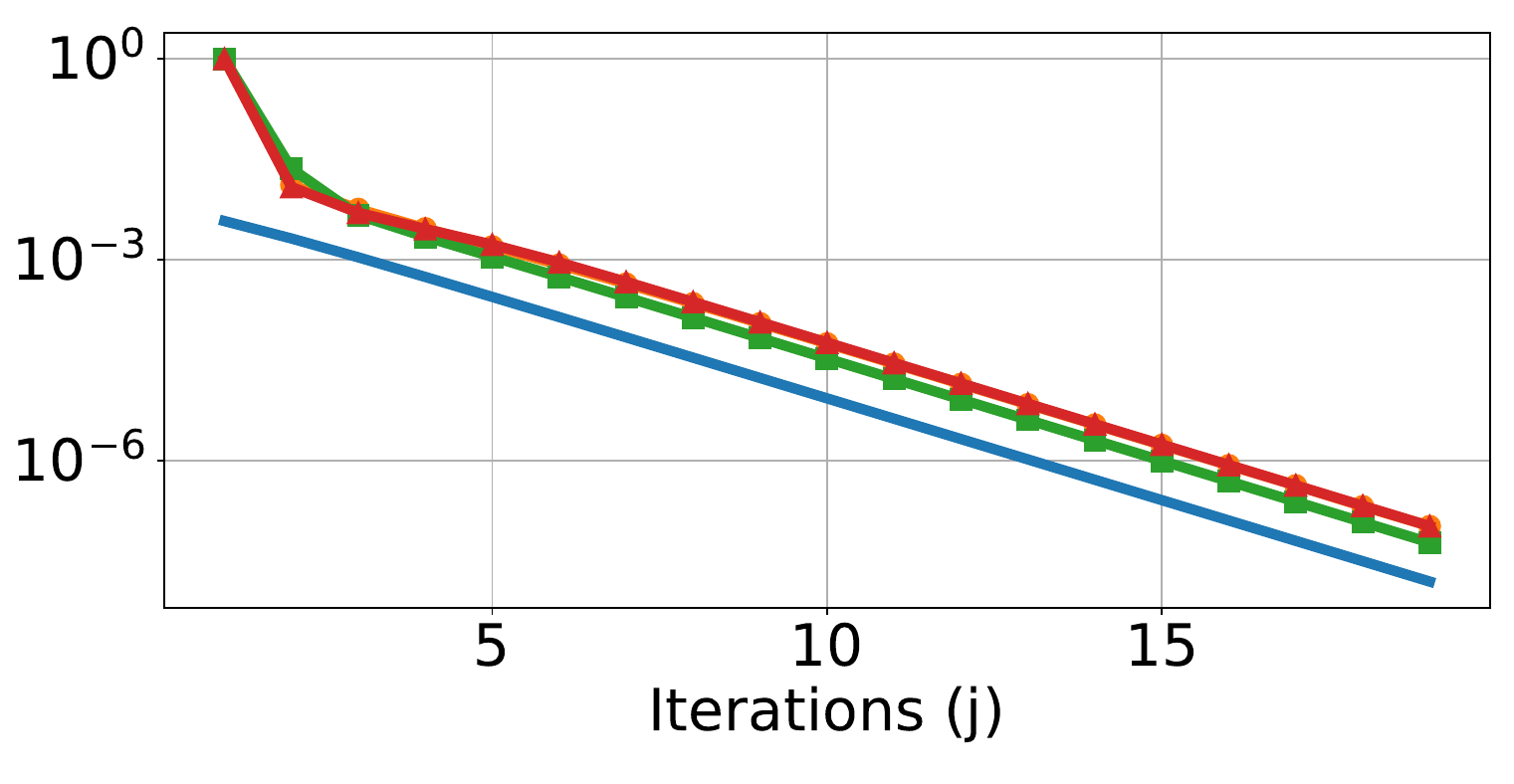}
        \caption{Error in surface tensions and iterates for $N=3$.}
    \end{subfigure}
    \hfill
    \begin{subfigure}{0.49\linewidth}
        \includegraphics[trim={0cm 0cm 0cm 9cm}, clip, width=\linewidth]
        {figures/surface_tension/N4/errors.pdf}
        \caption{Error in surface tensions and iterates for $N=4$.}
    \end{subfigure}
    \caption{Convergence behavior of the surface-tension reconstruction algorithm.
The plots show the surface-tension error and the $L^2(-1,1)$-norm of the change
in the pairwise equilibrium profiles between successive reconstruction steps.}
    \label{fig:errors}
\end{figure}

\section{Applications}\label{sec:Applications}
In this section we apply the calibrated free-energy closures in two model problems. First, in \cref{subsec:lens-test}, we consider a three-phase lens test. This test isolates the effect of the calibrated pairwise surface tensions by comparing the equilibrium triple-junction angles with the sharp-interface Young's law. Second, in \cref{sec:results:rising_bubble}, we consider an axisymmetric
Navier--Stokes--Cahn--Hilliard simulation of a gas bubble rising through two stratified liquid layers. This example tests the calibrated closure in a density-contrast flow with three interacting interfaces. The precise definition of the NSCH model is provided in \cref{appendix: NSCH model} (and its associated CH model is provided in \cref{appendix: CH model}).

\subsection{Lens test}\label{subsec:lens-test}
The lens test is a standard equilibrium test for three-phase surface-tension calibration. Starting from a nonequilibrium configuration, the phase fields relax toward a lens-shaped equilibrium in which the three diffuse interfaces meet in a triple-junction region. In the corresponding sharp-interface setting, the contact angles are determined solely by the three pairwise surface tensions
through Young's law, \cite{Toth2016}. The purpose of the test is therefore to verify that the calibrated capillary matrix produces the prescribed surface-tension ratios at equilibrium.

We consider the final time $T=1$ and the domain $\Omega=[0.3,0.7]\times[0.1,0.4]$ and periodic boundary conditions in $x$-direction and closed boundary conditions in $y$-direction, i.e. zero-flux for $\sum_\mB M_{\mA\mB}\nabla g_\mB$ and homogeneous Dirichlet boundary conditions for $\vv$. We consider well-known lens test initial data, cf. \cite{Yang21}, given by
\begin{align*}
    z_1 &= 0.5\tanh\left(\frac{0.09 - \sqrt{(x-0.5)^2 + (y-0.25)^2}}{0.5\hat{\varepsilon}}\right) + 0.5, \\
    z_2 &= (1-z_1)\left(0.5+0.5\tanh\left(\frac{4(y-0.25)}{\hat{\varepsilon}} \right)\right), \\
    \boldsymbol{\phi}_0 &= (z_2,1-z_1-z_2,z_1)^\top, \quad \vv_0=(0,0)^\top,
\end{align*}
with initial scaling parameter $\hat{\varepsilon} = 10^{-2}$.
To accelerate the convergence to equilibrium we take the constant mobility matrix $M_{\mA\mB}=\delta_{\mA\mB}-\frac{1}{3}$ (instead of the degenerate mobility matrix), the constant viscosity $\nu=0.1$, and equal partial densities $\rho_1=\rho_2=\rho_3=1$. We use the mixture-aware free energy, and set the surface tensions to $\gamma_{12}=0.07, \gamma_{23}=0.05, \gamma_{13}=0.06$ which results in the capillary matrix \eqref{eq:cap3three}. We choose the time-step $\tau=10^{-3}$ and mesh resolution $h\approx 0.0033$.

For a sharp interface Young's relations, \cite{garcke1999}, implies that the contact angles $\theta_1,\theta_2,\theta_3$ are determined by the surface tensions by
\begin{align*}
 \frac{\gamma_{23}}{\sin(\theta_1)} = \frac{\gamma_{13}}{\sin(\theta_2)}= \frac{\gamma_{12}}{\sin(\theta_3)}.   
\end{align*}
The angles can be computed by constructing a triangle with side lengths $\gamma_{\mA\mB}$. In this case the $\theta_\mA$, in $\mathrm{rad}$, correspond to the external angles which yields
\begin{align*}
    \theta_1&= \pi-\arccos\left( \frac{\gamma_{12}^2 + \gamma_{13}^2 - \gamma_{23}^2}{2\gamma_{12}\gamma_{13}}\right), \\
    \theta_2&= \pi-\arccos\left( \frac{\gamma_{12}^2 + \gamma_{23}^2 - \gamma_{13}^2}{2\gamma_{12}\gamma_{23}}\right), \\
    \theta_3&= \pi-\arccos\left( \frac{\gamma_{13}^2 + \gamma_{23}^2 - \gamma_{12}^2}{2\gamma_{13}\gamma_{23}}\right). 
\end{align*}

Converting to degrees is obtained by multiplying with $180/\pi.$ In the above situation the exact angles can be computed and are given by $\theta_1 \approx 135.584^\circ, \theta_2 \approx 122.878^\circ, \theta_3 \approx 101.536^\circ.$

The temporal evolution of the volume fraction is visualized in Figure \ref{fig:lens} and shows the evolution from the spherical initial configuration of $\phi_1$ towards the expected lens shape. In the diffuse-interface representation the relation holds only approximately and the triple point corresponds to a whole region. We located one triple point at $(0.633, 0.248)$ and obtained the following errors in the angles:
\begin{align}
   |\theta_1-\theta_{1,\text{num}}|&=1.317^\circ, \quad
|\theta_2-\theta_{2,\text{num}}|=1.006^\circ, \quad
|\theta_3-\theta_{3,\text{num}}|=0.310^\circ, 
\end{align}
and in Young's law:
\begin{subequations}
\begin{align}
\frac{\gamma_{23}}{\sin(\theta_{1,\text{num}})}-\frac{\gamma_{13}}{\sin(\theta_{2,\text{num}})}=&~-0.0024, \\
\frac{\gamma_{13}}{\sin(\theta_{2,\text{num}})}-\frac{\gamma_{12}}{\sin(\theta_{3,\text{num}})}=&~~0.0007,\\
\frac{\gamma_{23}}{\sin(\theta_{1,\text{num}})}-\frac{\gamma_{12}}{\sin(\theta_{3,\text{num}})}=&~-0.00164.
\end{align}
\end{subequations}

\begin{figure}
 \centering
 \begin{tabular}{c@{}c@{}c@{}c@{}}
 \multicolumn{4}{c}{\includegraphics[trim={27cm 65cm 27cm 3cm},clip,width=0.7\linewidth]{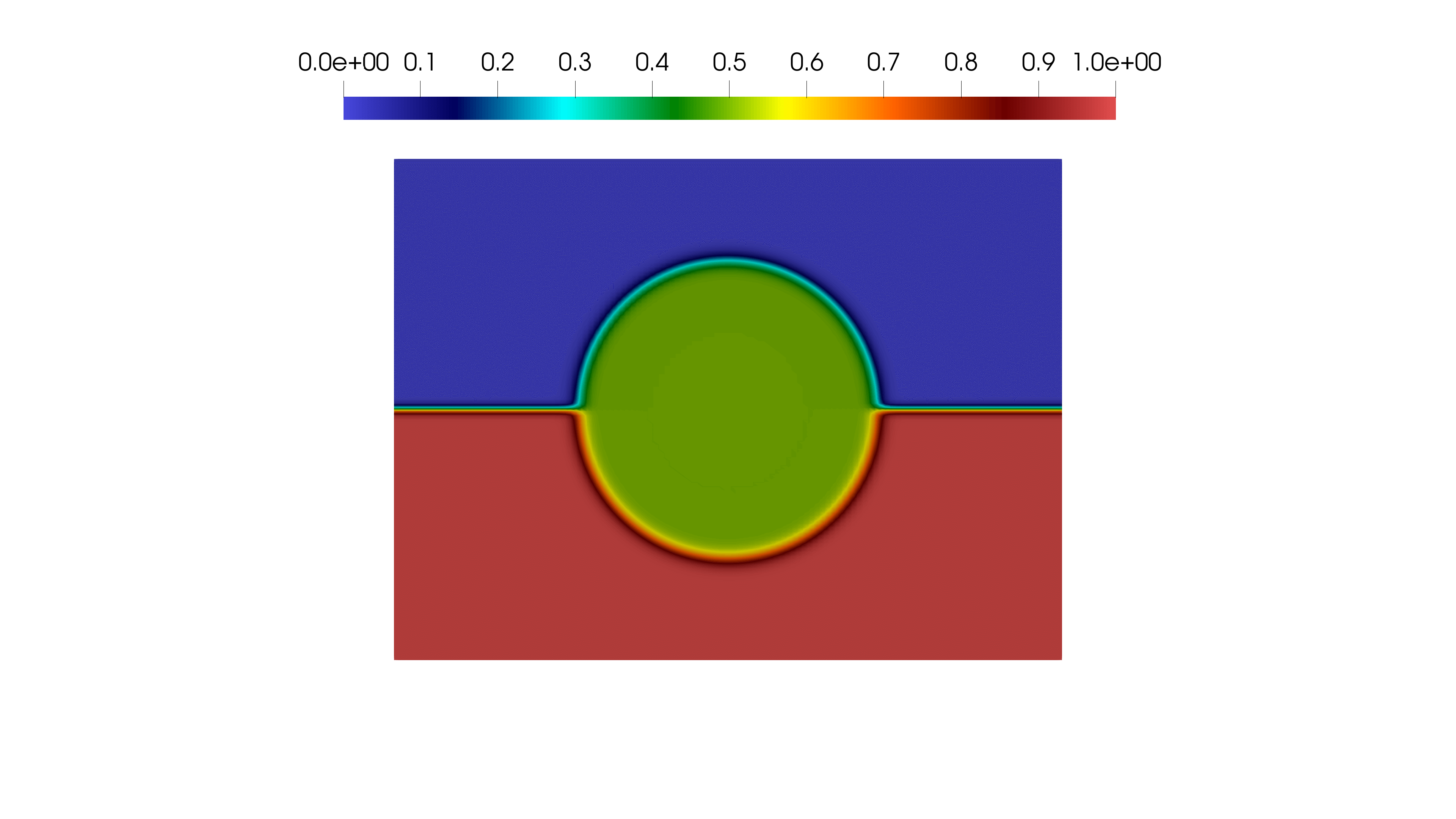}} \\
  \includegraphics[trim={35cm 15cm 35cm 15cm},clip,width=0.248\linewidth]{figures/surface_tension/lens/phi_combo.0000.png}
  &
  \includegraphics[trim={35cm 15cm 35cm 15cm},clip,width=0.248\linewidth]{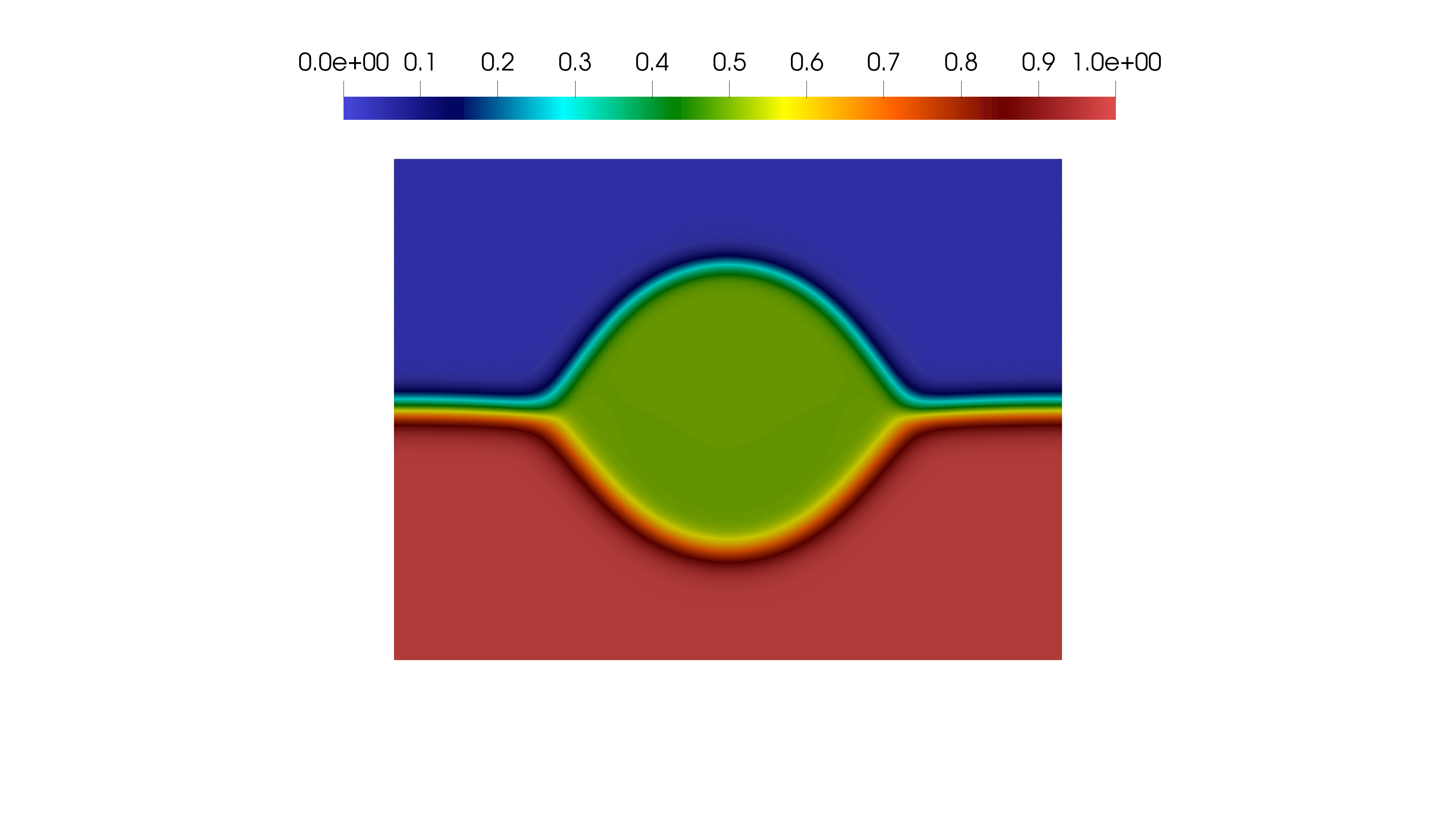} 
  &
  \includegraphics[trim={35cm 15cm 35cm 15cm},clip,width=0.248\linewidth]{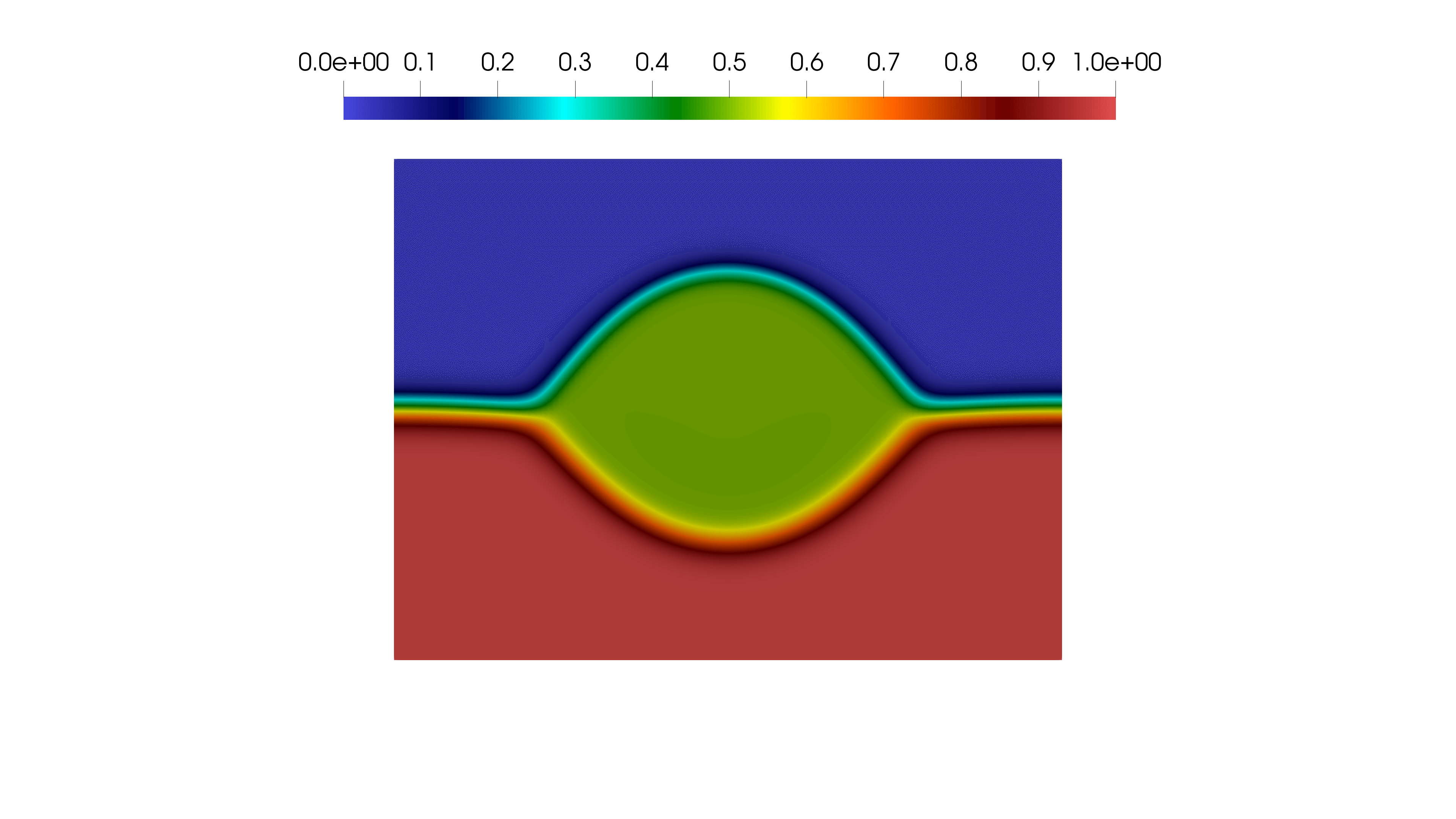} 
  &
  \includegraphics[trim={35cm 15cm 35cm 15cm},clip,width=0.248\linewidth]{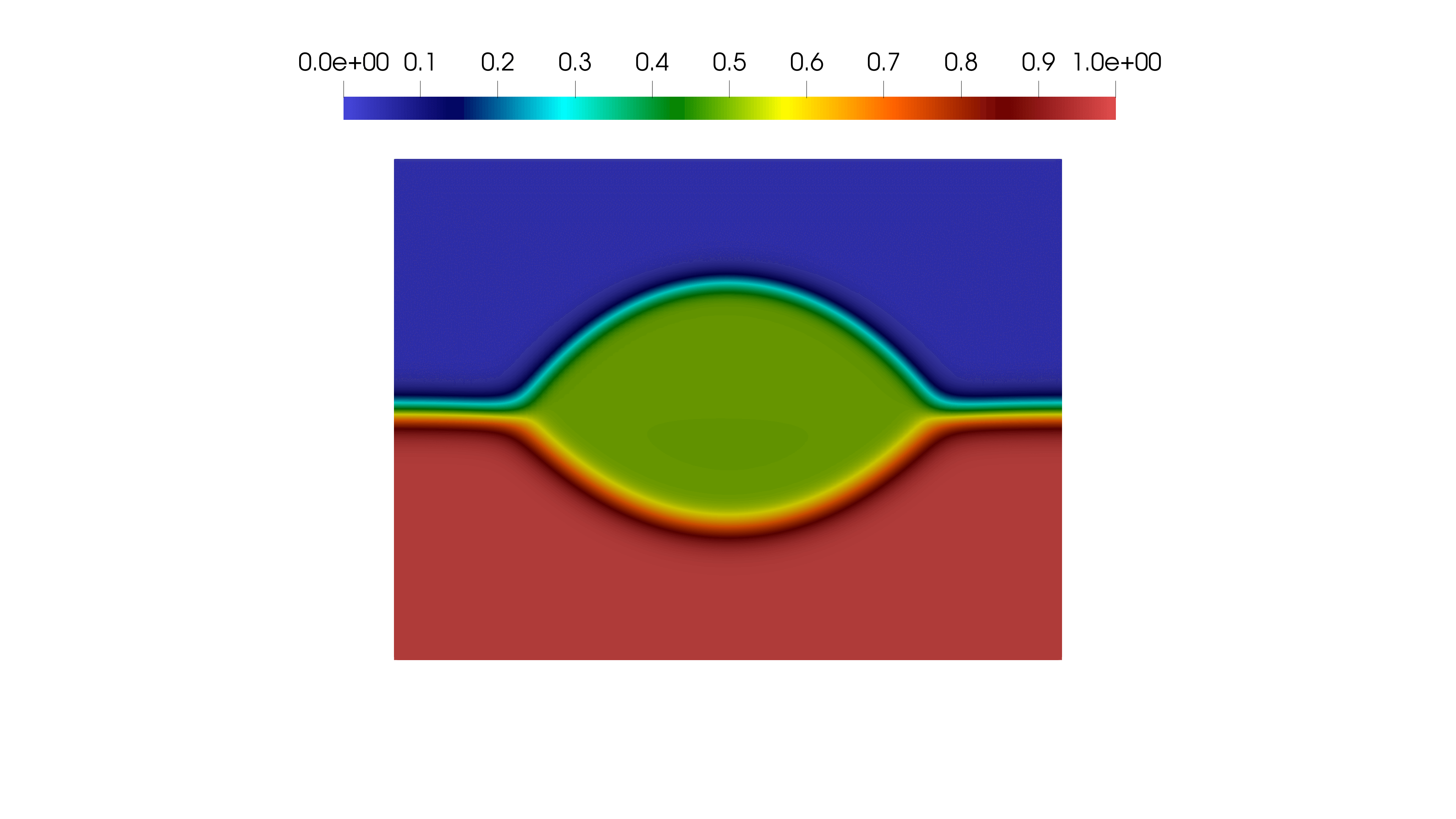}  \\
  \includegraphics[trim={35cm 15cm 35cm 15cm},clip,width=0.248\linewidth]{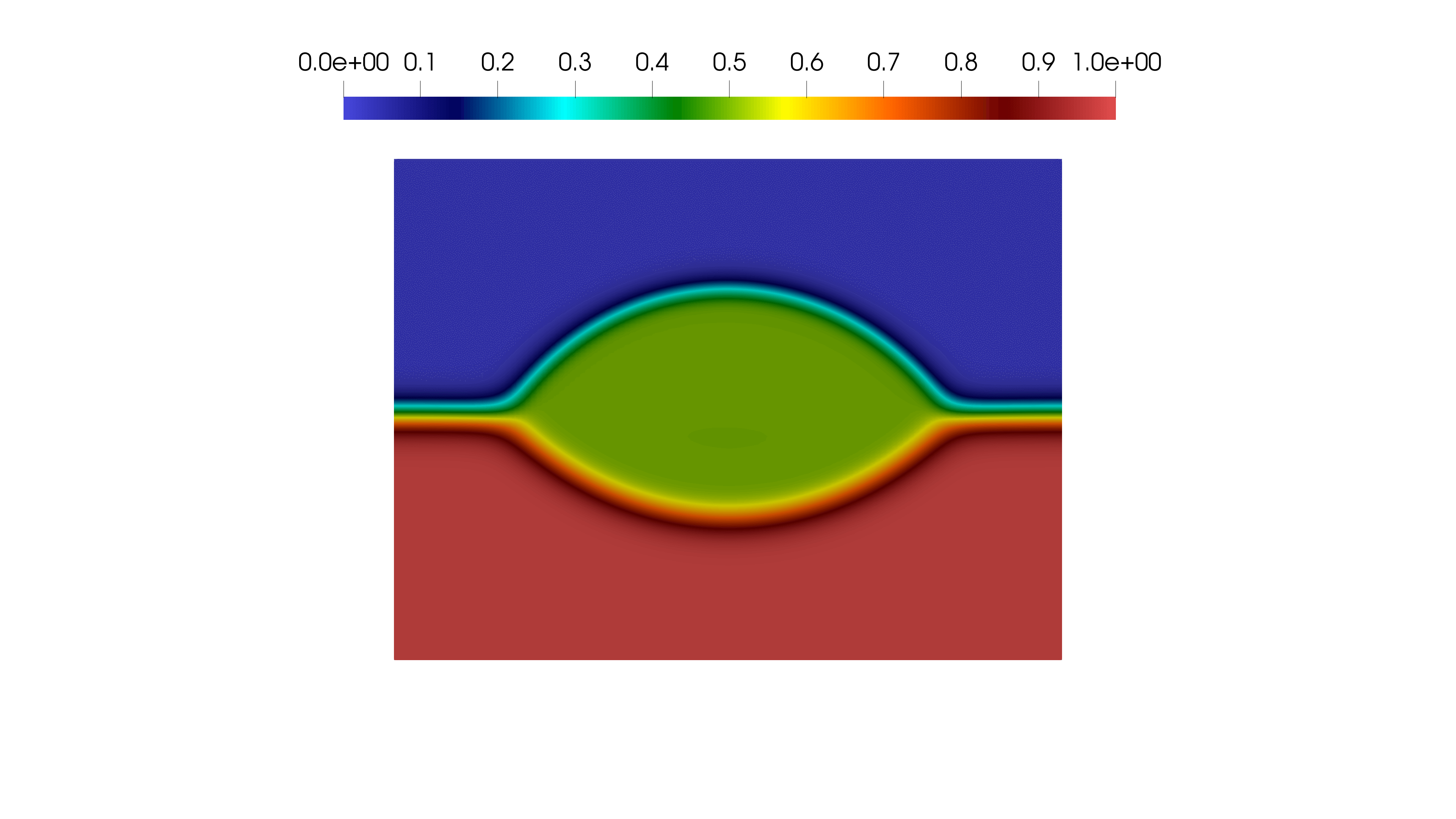}
  &
  \includegraphics[trim={35cm 15cm 35cm 15cm},clip,width=0.248\linewidth]{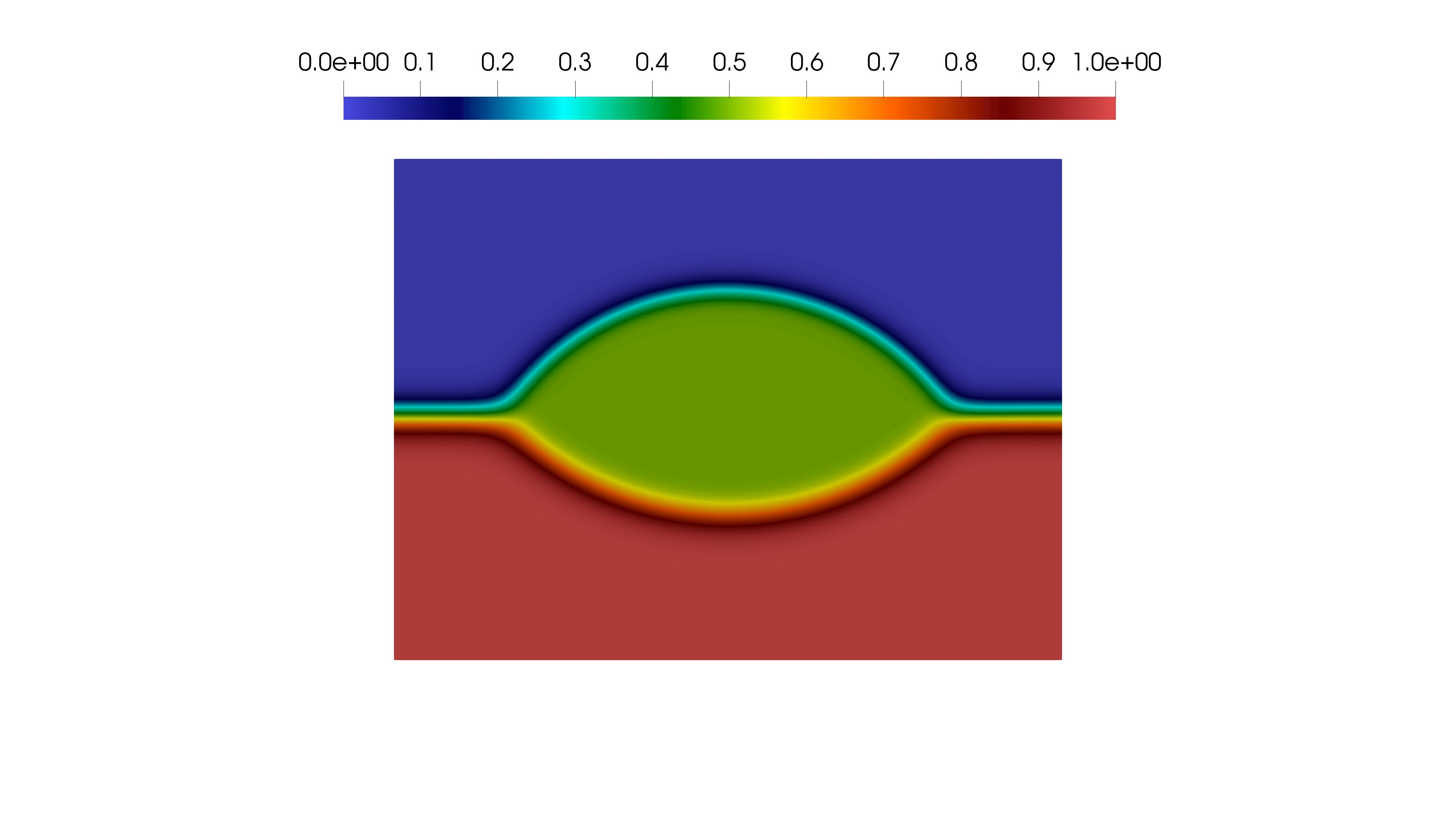} 
  &
  \includegraphics[trim={35cm 15cm 35cm 15cm},clip,width=0.248\linewidth]{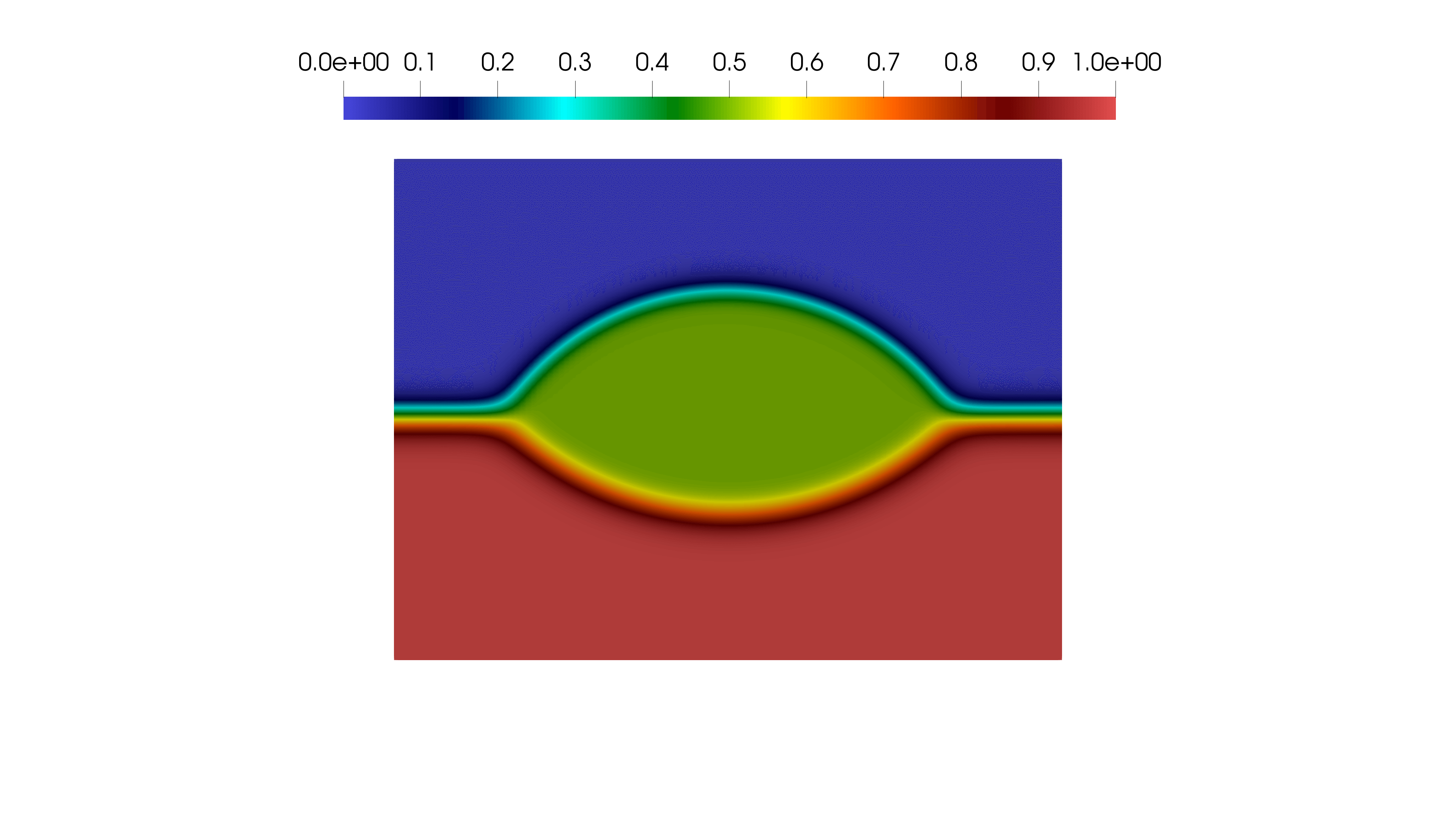} 
  &
  \includegraphics[trim={35cm 15cm 35cm 15cm},clip,width=0.248\linewidth]{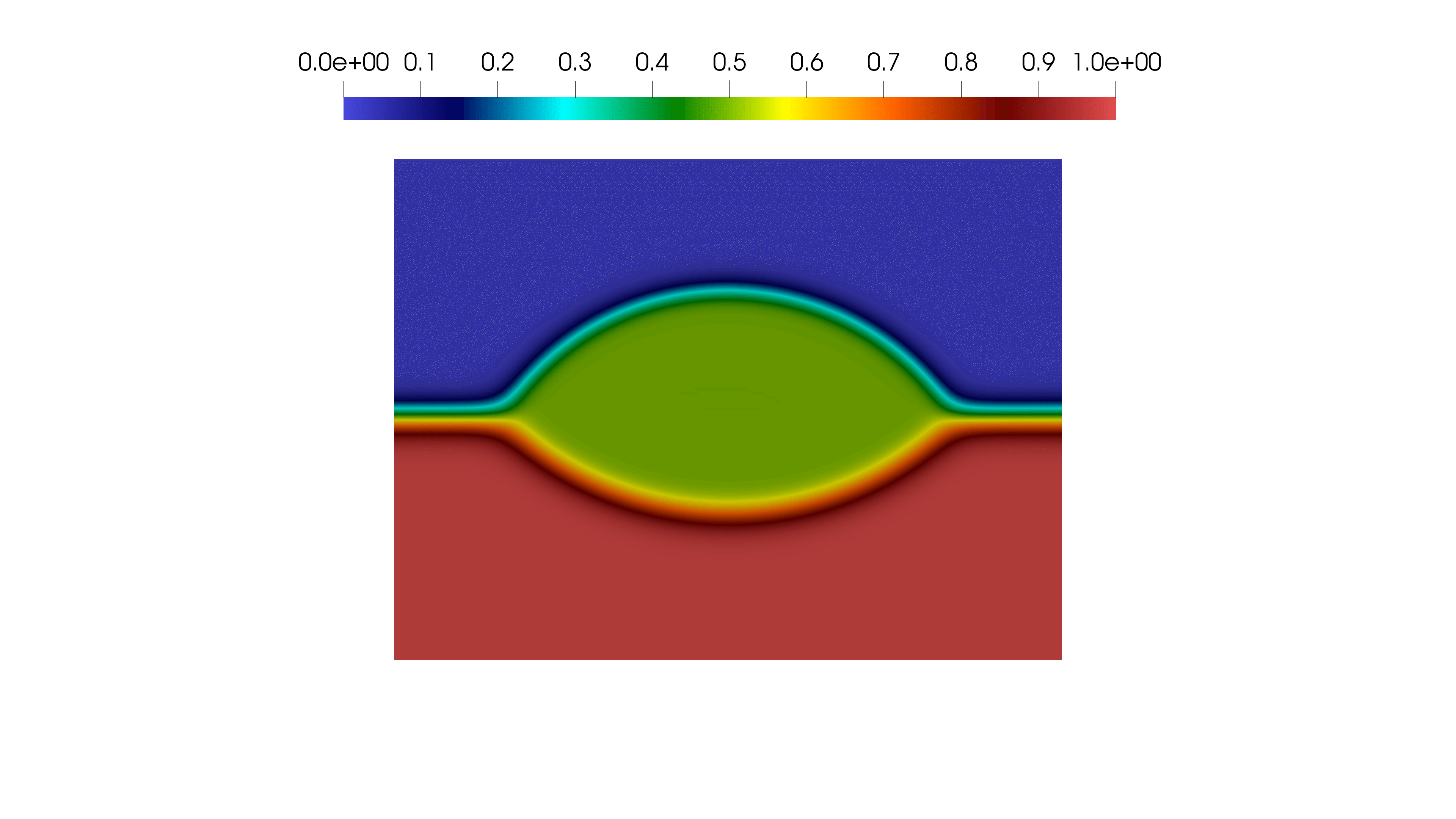} 
 \end{tabular}
 \caption{Temporal evolution of $\phi_2+\frac{1}{2}\phi_1$ for the times $t\in\{0,0.01,0.02,0.05,0.1,0.15,0.2,1\}$}
 \label{fig:lens}
\end{figure}

\subsection{Three-fluid rising bubble problem}\label{sec:results:rising_bubble}

We next consider a three-fluid benchmark in which a gas bubble rises through a vertically stratified configuration of two immiscible liquid layers. The test involves three distinct pairwise surface tensions, large density and viscosity contrasts, and a strong interaction between the bubble interface and the liquid--liquid interface.

Initially, the heavy liquid, denoted by phase $2$, occupies the lower part of the domain, while the lighter liquid, denoted by phase $3$, occupies the upper part. A spherical gas bubble, denoted by phase $1$, is placed in the lower liquid below the initially planar liquid--liquid interface.

The computation is carried out in an axisymmetric setting using an
axisymmetric version of the structure-preserving finite element method proposed in \cite{structureNphase2026}. The computational domain has size $2r\times 15r$ and is discretized by $64\times480$ finite elements. We choose the time-step size $\Delta t=0.5 h$ and the interface width $\varepsilon=h$, where $h$ denotes the mesh width. The initial configuration consists of a diffuse, nearly planar interface between the two liquid layers, phases $2$ and $3$, and a diffuse spherical bubble of radius $r$, phase $1$, placed in the lower liquid
phase below this interface. 

The target surface tensions are
\begin{align}
  \gamma_{12}=\gamma_{13}=0.07~\mathrm{N\,m^{-1}},\qquad
  \gamma_{23}=0.05~\mathrm{N\,m^{-1}}.
\end{align}
We use the algorithm provided in \cref{subsec:kappa_reconstruction} to calibrate the capillary matrix and the interface widths. This provides the target surface tensions and interface widths:
\begin{subequations}
    \begin{align}    \bar{\boldsymbol{\kappa}} =&~10^{-3}\begin{pmatrix}
4.1361833328 & -6.5504823549  & -1.7218843107 \\
-6.5504823549  & 6.5504823549
 & -6.5504823549 \\
-1.7218843107  &  -6.5504823549  &  4.1361833328 
\end{pmatrix}\\
\bar{\varepsilon}_{12} = &~0.0819004278, \quad \bar{\varepsilon}_{13} = 0.0581144728, \quad \bar{\varepsilon}_{23} = 0.0819004278, \\
\varepsilon_0 =&~  2.15092719 \times 10^{-3}.
\end{align}
\end{subequations}
The densities and viscosities are
\begin{center}
\begin{tabular}{lcc}
\hline
phase & density $\rho$ [$\mathrm{kg\,m^{-3}}$] & viscosity $\eta$ [$\mathrm{Pa\,s}$] \\
\hline
bubble ($\phi_1$) & $1$ & $10^{-4}$ \\
heavy liquid ($\phi_2$) & $1200$ & $0.15$ \\
light liquid ($\phi_3$) & $1000$ & $0.10$ \\
\hline
\end{tabular}
\end{center}
Gravity acts in the vertical direction, $\mathbf g=-g\mathbf e_y$, with
$g=9.81~\mathrm{m\,s^{-2}}$. No-penetration boundary conditions are imposed on the velocity at the outer boundary, and homogeneous Neumann conditions are used for the phase fields and chemical potentials.

The qualitative outcome of this benchmark depends on the bubble size. Bubbles below a critical size may remain trapped at the liquid--liquid interface, whereas larger bubbles are expected to cross the capillary barrier and enter the upper liquid layer. Following \cite{greene1991bubble,greene1988onset}, this transition is estimated by the critical bubble volume
\begin{align}\label{eq:Vp_criterion}
  V_p =
  \left(
    2\pi\left(\frac{3}{4\pi}\right)^{1/3}
    \frac{\gamma_{23}}{(\rho_3-\rho_1)g}
  \right)^{3/2}
  \approx 8.87\times 10^{-8}\ \mathrm{m^3}.
\end{align}
The corresponding critical spherical radius is
\begin{align}
    r_p
    =
    \left(\frac{3V_p}{4\pi}\right)^{1/3}
    \approx 2.76\times 10^{-3}~\mathrm{m}.
\end{align}
In the present computation we consider a bubble of radius $r = 4.0\times 10^{-3}~\mathrm{m}$, so that $r>r_p$. The criterion therefore predicts penetration of the liquid--liquid interface. Additional bubble radii for the same benchmark were considered in \cite{boyer2010cahn,mixtureaware2026}.

\begin{remark}[Scaling estimate for the penetration criterion]
\label{rem:penetration-criterion}
The penetration criterion follows from balancing buoyancy against the capillary
force resisting deformation of the liquid--liquid interface. In the
axisymmetric setting, the estimate is three-dimensional. For a bubble of volume
$V=\frac43\pi R^3$, the buoyancy force is $ F_b=(\rho_3-\rho_1)gV$, while the capillary force is estimated by $F_\gamma=2\pi R\gamma_{23}$.
The critical volume is obtained from $F_b=F_\gamma$,
which gives
\begin{align}
    V_p =
    \left(
    2\pi\left(\frac{3}{4\pi}\right)^{1/3}
    \frac{\gamma_{23}}{(\rho_3-\rho_1)g}
    \right)^{3/2},
    \qquad
    r_p
    =
    \left(\frac{3V_p}{4\pi}\right)^{1/3}
    =
    \sqrt{\frac{3\gamma_{23}}{2(\rho_3-\rho_1)g}} .
\end{align}
\end{remark}

The computed evolution is shown in \cref{fig:penetration-case}. Consistent with the estimate \eqref{eq:Vp_criterion}, the bubble radius is above the critical value and the capillary barrier at the liquid--liquid interface is overcome. The bubble enters the upper liquid layer and entrains a thin column of the lower liquid during the penetration process. This qualitative behavior agrees with the observations reported in \cite{boyer2010cahn} and with the computations in
\cite{mixtureaware2026}. 
\begin{figure}[h]
\captionsetup[subfigure]{justification=centering}
\begin{subfigure}{0.245\textwidth}
\centering
\includegraphics[scale=0.17]{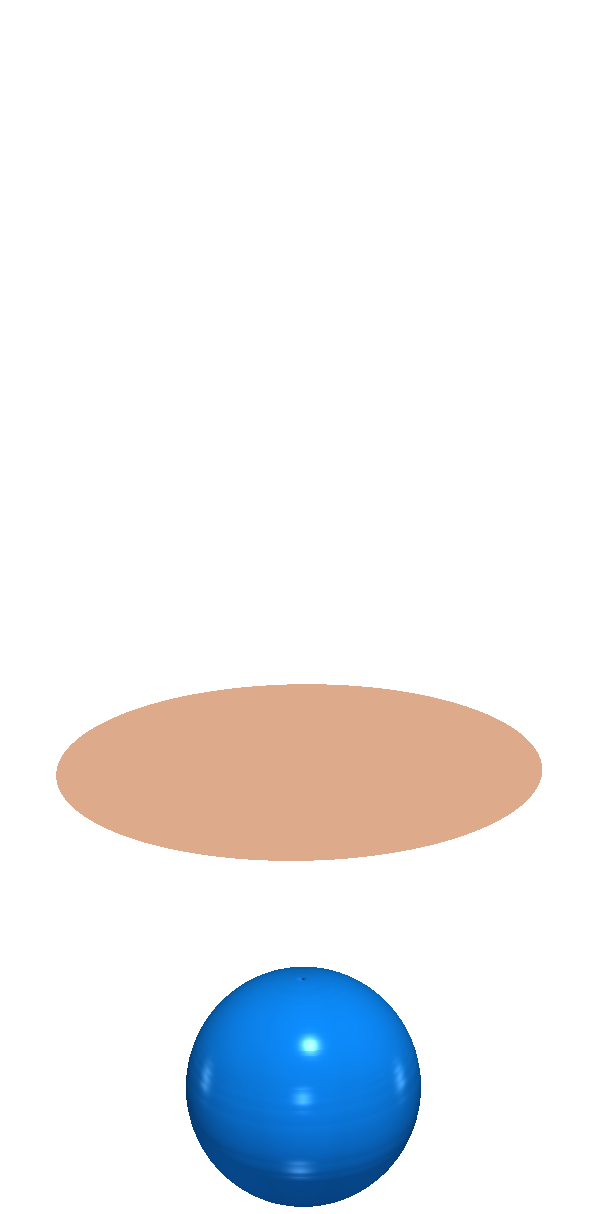}
\caption{$0.00~{\rm s}$}
\end{subfigure}
\begin{subfigure}{0.245\textwidth}
\centering
\includegraphics[scale=0.17]{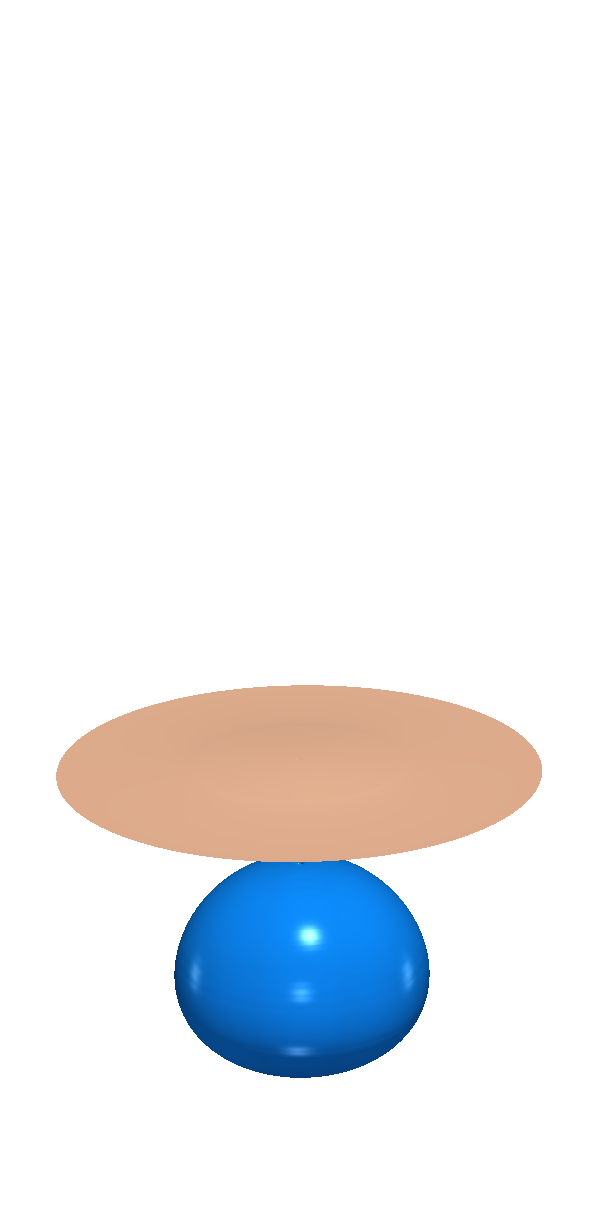}
\caption{$0.05~{\rm s}$}
\end{subfigure}
\begin{subfigure}{0.245\textwidth}
\centering
\includegraphics[scale=0.17]{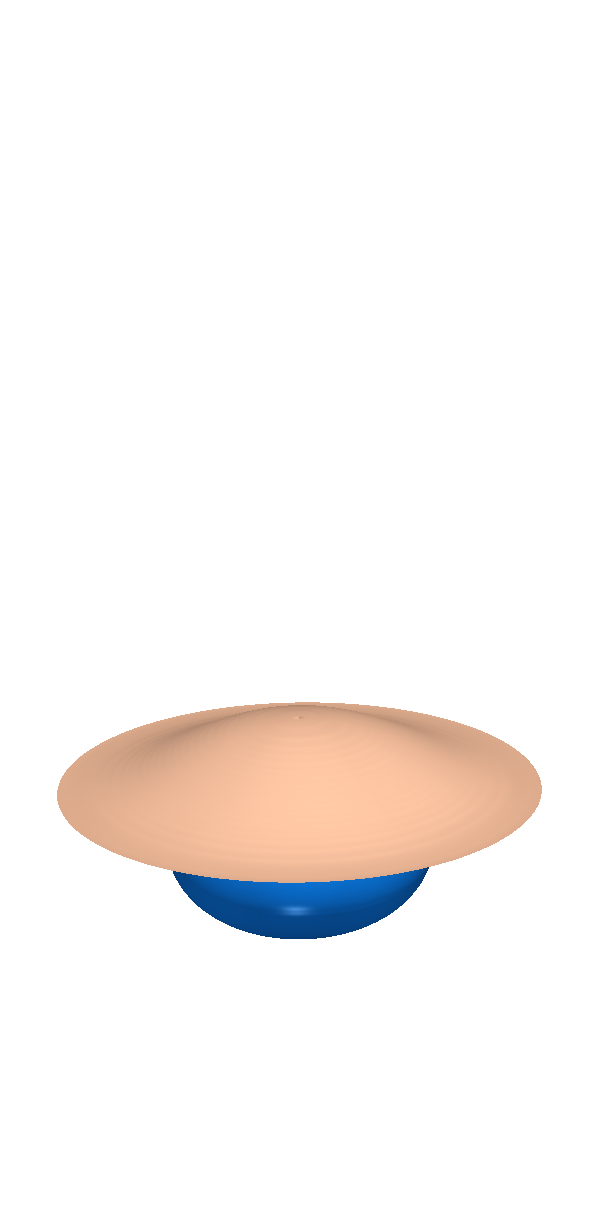}
\caption{$0.10~{\rm s}$}
\end{subfigure}
\begin{subfigure}{0.245\textwidth}
\centering
\includegraphics[scale=0.17]{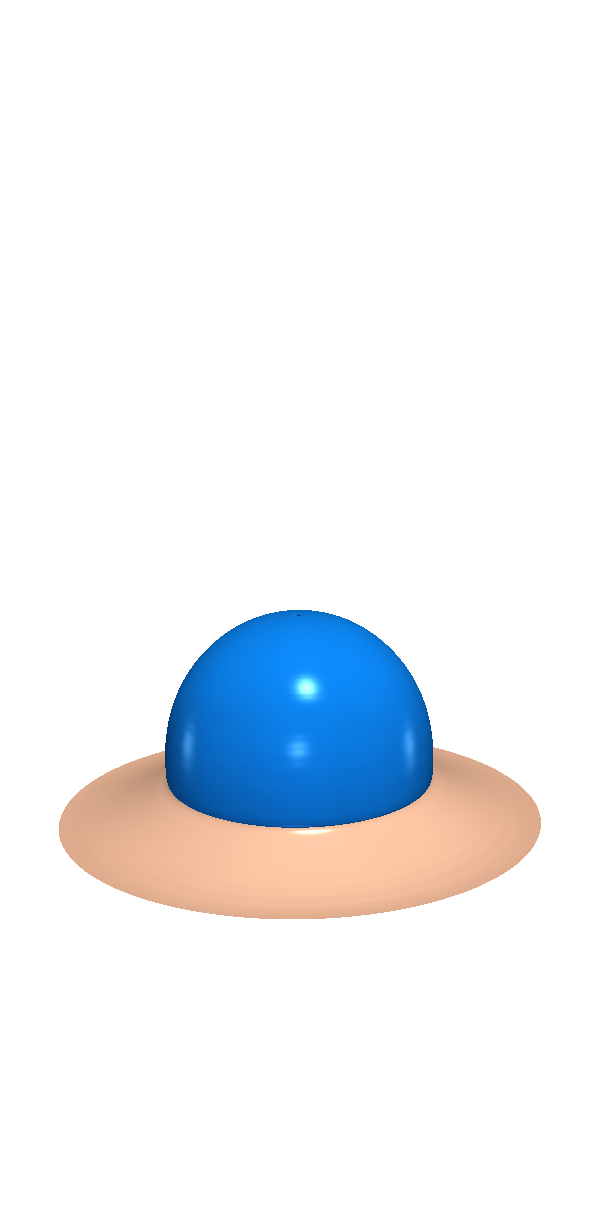}
\caption{$0.15~{\rm s}$}
\end{subfigure}
\begin{subfigure}{0.245\textwidth}
\centering
\includegraphics[scale=0.17]{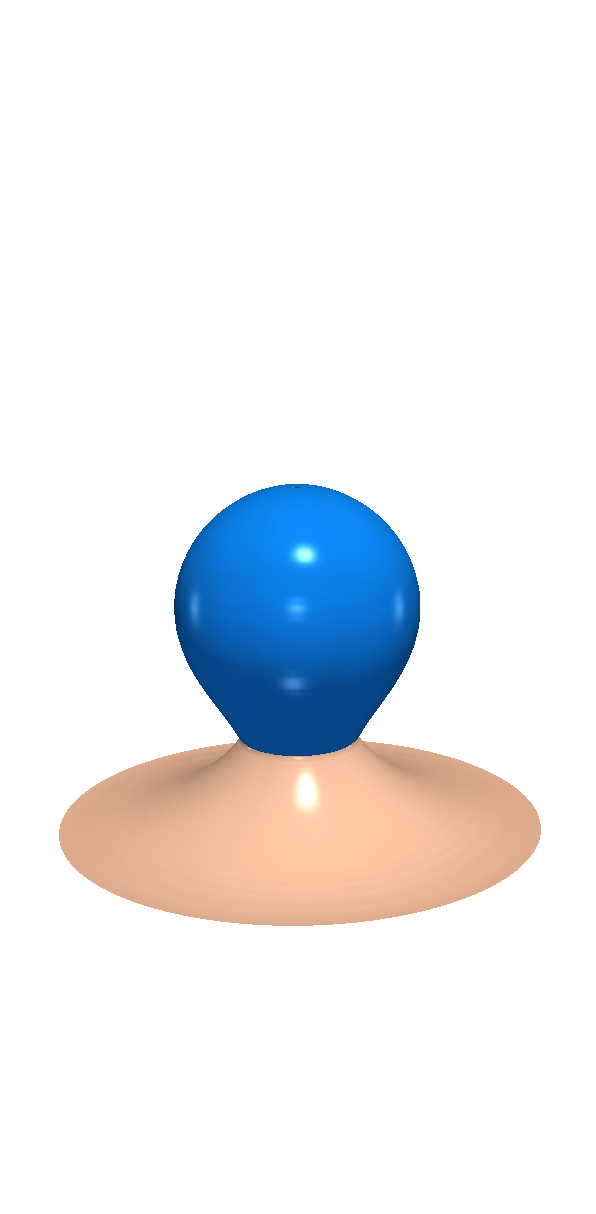}
\caption{$0.20~{\rm s}$}
\end{subfigure}
\begin{subfigure}{0.245\textwidth}
\centering
\includegraphics[scale=0.17]{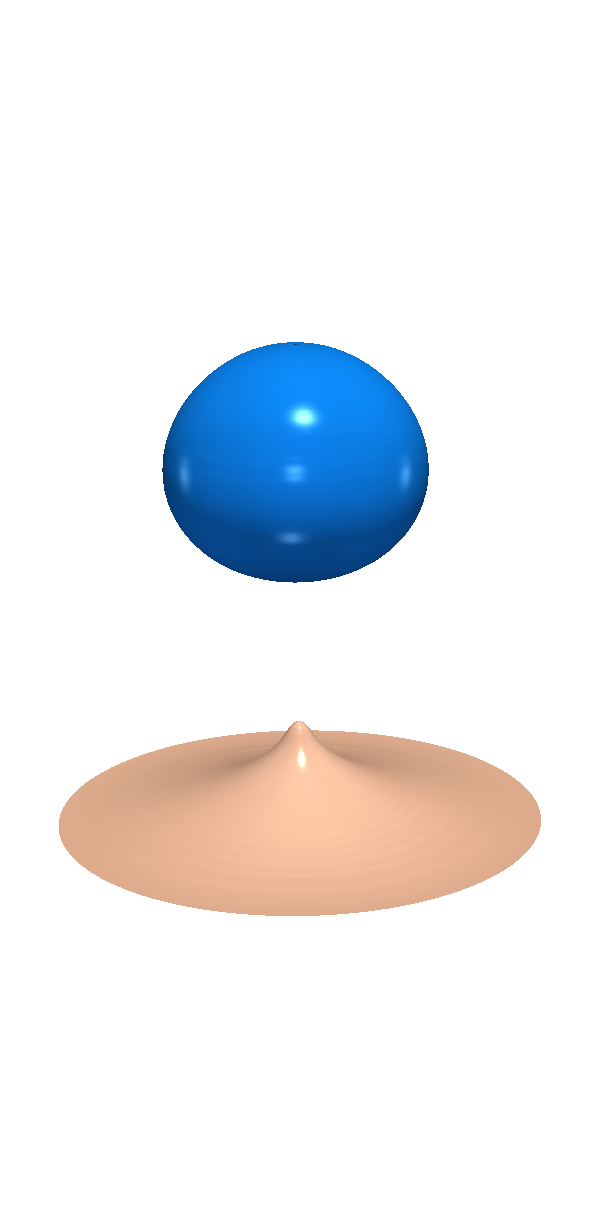}
\caption{$0.25~{\rm s}$}
\end{subfigure}
\begin{subfigure}{0.245\textwidth}
\centering
\includegraphics[scale=0.17]{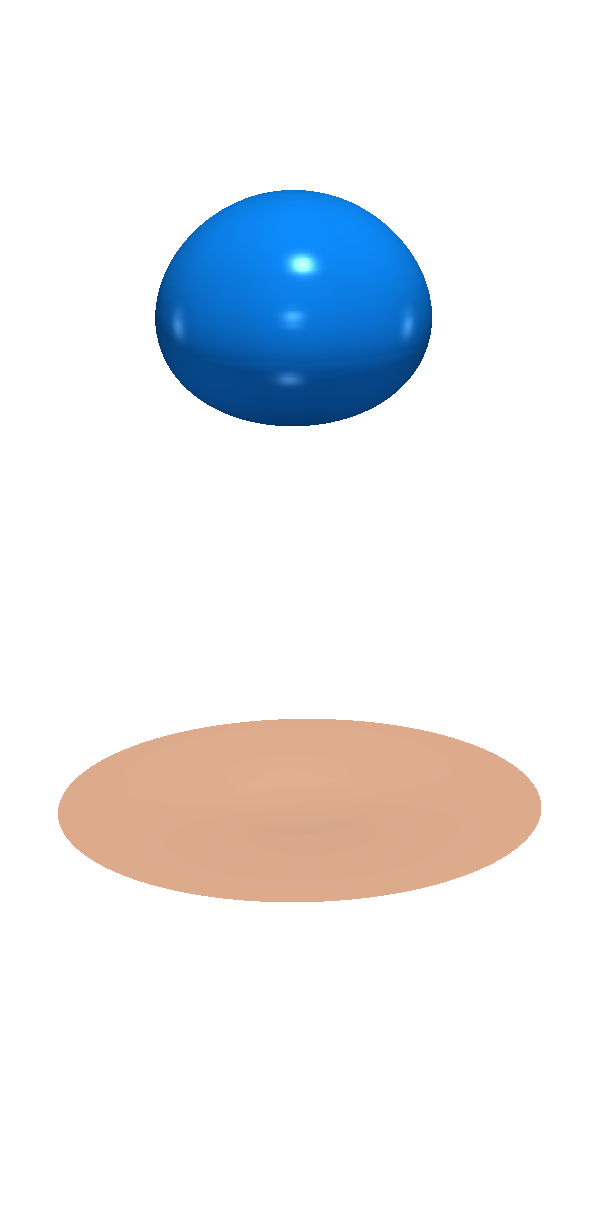}
\caption{$0.30~{\rm s}$}
\end{subfigure}
\begin{subfigure}{0.245\textwidth}
\centering
\includegraphics[scale=0.17]{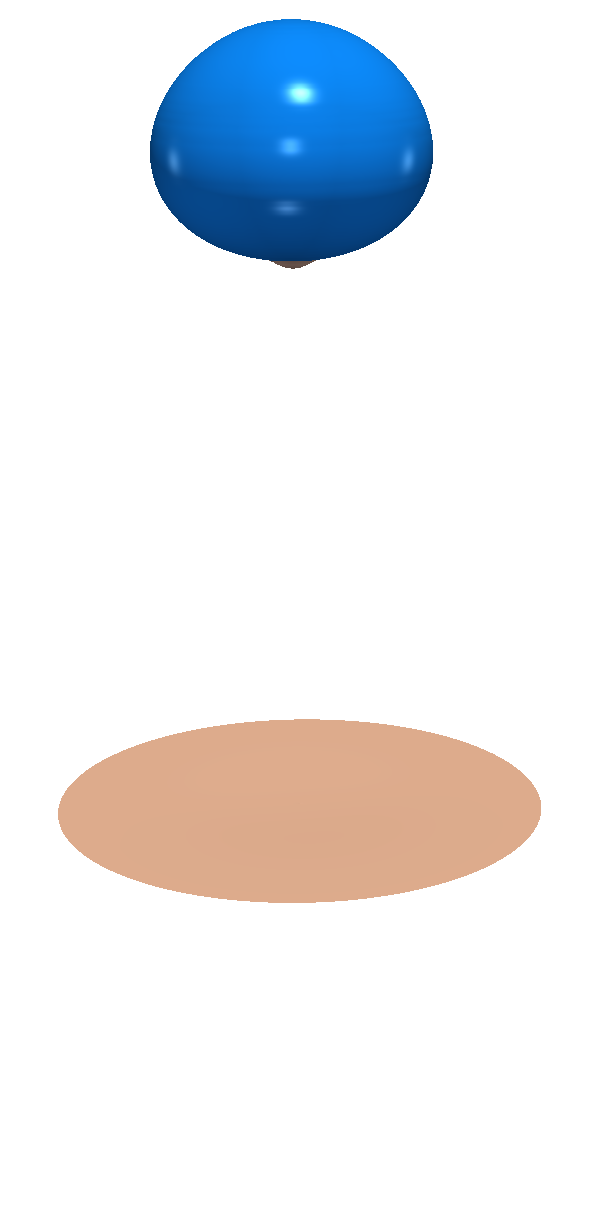}
\caption{$0.35~{\rm s}$}
\end{subfigure}
\caption{Axisymmetric rising-bubble benchmark in two stratified liquid layers. Iso-contours of the phase distribution. The gas bubble rises from the lower liquid, crosses the liquid--liquid interface, and
entrains a thin column of the lower liquid during penetration.}
\label{fig:penetration-case}
\end{figure}

\section{Conclusions}\label{sec: Conclusions}

We have developed a surface-tension calibration procedure for $N$-phase diffuse-interface free energies. The method determines a symmetric capillary matrix from prescribed pairwise surface tensions by using one-dimensional
equilibrium profiles in the Gibbs simplex. A straight-line approximation provides an initial estimate, after which the capillary matrix is iteratively updated using equilibrium profiles computed from the Euler--Lagrange equations.
We also established precise face-confinement properties for the two bulk free energies considered in this work. For the regularized mixture-aware free energy used here, which was proposed in \cite{mixtureaware2026}, minimizing profiles are
not exactly confined to Gibbs-simplex faces. In contrast, for the Boyer free energy we proved exact face confinement of the minimizing profiles. These results put these two free-energy choices on different geometric footings and show that calibration for general multiphase free energies cannot rely only on face-restricted binary profiles. The numerical calibration tests for two-, three-, and four-phase systems show that this reconstruction accurately recovers the prescribed surface tensions, whereas the straight-line approximation alone is generally insufficient.

After calibrating the surface tensions, we introduced a definition of pairwise interface width based on the normal-coordinate equilibrium profiles. This makes
it possible to compare the widths of different pairwise interfaces and to define a representative numerical interface width. By introducing a global thickness
parameter, the diffuse interfaces can be made resolvable on a chosen mesh while leaving the calibrated surface tensions unchanged.

The calibrated free-energy closures were then used in application tests for three-phase Navier--Stokes--Cahn--Hilliard models. The lens test demonstrates that the calibrated surface tensions lead to the expected equilibrium interfacial configuration in a multiphase Cahn--Hilliard setting. The rising-bubble simulations show that the same calibrated closure can be used in density-contrast NSCH flows, where capillary effects, hydrodynamics, and multiple interfaces interact.

Overall, the proposed procedure provides a practical route from prescribed pairwise surface tensions and mesh-resolution requirements to a calibrated multiphase free-energy closure suitable for computation. Future work includes the treatment of total spreading regimes, where a prescribed pairwise surface tension may be larger than the sum of two others and a smooth direct minimizing profile may fail to exist. Extending the calibration to such cases would require accounting for composite interfaces and possible wetting layers. A second direction for future research concerns the determination of coexistence points, which seems to be a nontrivial problem in general.

\appendix

\section{Reduction for coinciding coexistence points}\label{sec:coinciding-coexistence-points}
In this section, we sketch for $N=3$ what happens when two coexistence points coincide and how to recover the capillary matrix for given surface tensions $(\gamma_{12},\gamma_{13},\gamma_{23})$. We assume that the coexistence points have the form
\begin{align}
    \mathbf b^{(1)}=\mathbf b^{(2)}=(p,p,q),
    \qquad
    \mathbf b^{(3)}=(r,r,s),
    \qquad
    p\neq r,\quad q\neq s,
\end{align}
with $2p+q=1$ and $2r+s=1$. The first two phases have identical coexistence states and cannot be separated by an independent diffuse interface. To exploit this degeneracy, we introduce the variables
\begin{align}\label{eq:omega}
    \omega_1=\phi_1-\phi_2,
    \qquad
    \omega_2=\frac{\phi_1+\phi_2}{2},
    \qquad
    \omega_3=\phi_3 .
\end{align}
In these variables the coexistence points become
\begin{align}
    \widetilde{\mathbf b}^{(1)}=\widetilde{\mathbf b}^{(2)}=(0,p,q),
    \qquad
    \widetilde{\mathbf b}^{(3)}=(0,r,s).
\end{align}
The reduced problem is therefore obtained by setting $\omega_1=0$, equivalently
$\phi_1=\phi_2$, and considering profiles in the two variables $\omega=(\omega_2,\omega_3)$. These profiles connect the reduced coexistence points
\begin{align}
    \widehat{\mathbf b}^{(12)}=(p,q),
    \qquad
    \widehat{\mathbf b}^{(3)}=(r,s).
\end{align}
Since $\mathbf b^{(1)}=\mathbf b^{(2)}$, the $1$--$2$ interface is not represented as an
independent material interface. Hence $\gamma_{12}$ does not enter the reduced
calibration. The effective surface tension between the merged phase pair
$(1,2)$ and phase $3$ is taken as
\begin{align}
    \widehat{\gamma}_{(12),3}
    =
    \frac{1}{2}\left(\gamma_{13}+\gamma_{23}\right).
\end{align}
The reduced two-phase calibration gives a matrix
$\boldsymbol{\kappa}_{\rm red}\in\mathbb R^{2\times 2}$ which, in the present case, has the form
\begin{align}\label{eq:kappa_red}
    \boldsymbol{\kappa}_{\rm red}
    =
    \begin{pmatrix}
         a    & -a \\
        -a    &  a
    \end{pmatrix}.
\end{align}
We then reconstruct a full $3\times 3$ capillary matrix by requiring
\begin{align}
    \begin{pmatrix}
    \nabla\omega_2 \\
    \nabla\omega_3
    \end{pmatrix}^{\!\top}
    \boldsymbol{\kappa}_{\rm red}
    \begin{pmatrix}
    \nabla\omega_2 \\
    \nabla\omega_3
    \end{pmatrix}
    =
    \begin{pmatrix}
    \nabla\phi_1 \\
    \nabla\phi_2 \\
    \nabla\phi_3
    \end{pmatrix}^{\!\top}
    \boldsymbol{\kappa}_{\rm full}
    \begin{pmatrix}
    \nabla\phi_1 \\
    \nabla\phi_2 \\
    \nabla\phi_3
    \end{pmatrix}.
\end{align}
Substitution gives
\begin{align}
    \boldsymbol{\kappa}_{\rm full}
    =
    \begin{pmatrix}
        \frac{a}{4}      & \frac{a}{4} & -\frac{a}{2} \\
        \frac{a}{4}      & \frac{a}{4} & -\frac{a}{2} \\
        -\frac{a}{2}     & -\frac{a}{2}& a
    \end{pmatrix}.
\end{align}

We illustrate the reduction on a three-phase test case in which the first two coexistence points coincide, i.e. taking $\chi_{\mA\mB}=2.0753225$ if $\mA\mB\in\{11,12,21,22,33\}$ and zero else. The reduced two-phase calibration is first carried
out for the transition between the merged pair $(1,2)$ and phase $3$ using $\gamma_{13}=0.05, \gamma_{23}=0.06$. The resulting reduced capillary matrix is then lifted to the original three-variable representation as described above. This produces the straight-line initialization and equilibrium-profile reconstruction:
\begin{align*}
\boldsymbol{\kappa}_0 = 10^{-6}\begin{pmatrix}
  3.783 & 3.783 & -7.567 \\
  3.783 & 3.783 & -7.567 \\
  -7.567 & -7.567 & 15.135
\end{pmatrix}
,\quad  \boldsymbol{\kappa}_{\text{eq}}=10^{-6}\begin{pmatrix}
  7.656  & 7.656  & -15.312 \\
  7.656  & 7.656  & -15.312 \\
  -15.312 & -15.312 & 30.625
\end{pmatrix}.
\end{align*}

The corresponding equilibrium profile and its path in the Gibbs triangle are shown in \cref{fig:n3red_profiles}.
\begin{figure}
 \centering

 \begin{minipage}{0.49\linewidth}
   \centering
   \includegraphics[width=\linewidth]{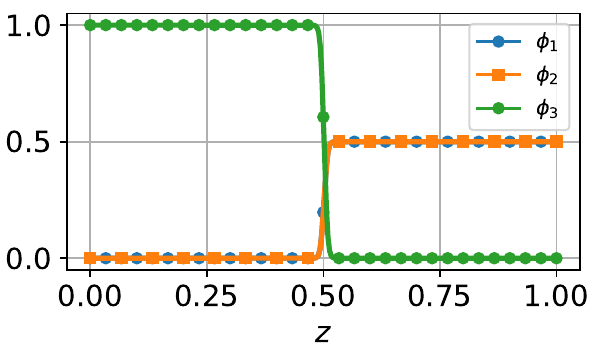}

   \vspace{0.02\linewidth}

   \includegraphics[width=\linewidth]{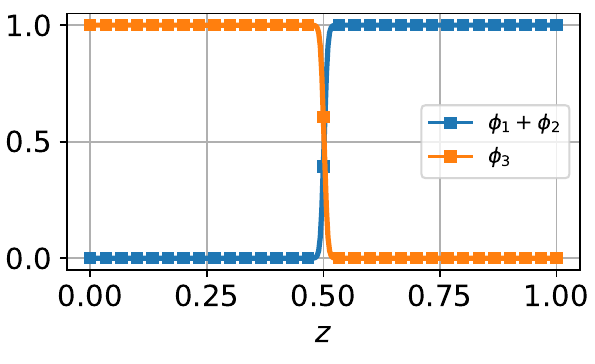}
 \end{minipage}
 \hfill
 \begin{minipage}{0.48\linewidth}
   \centering
   \includegraphics[width=\linewidth]{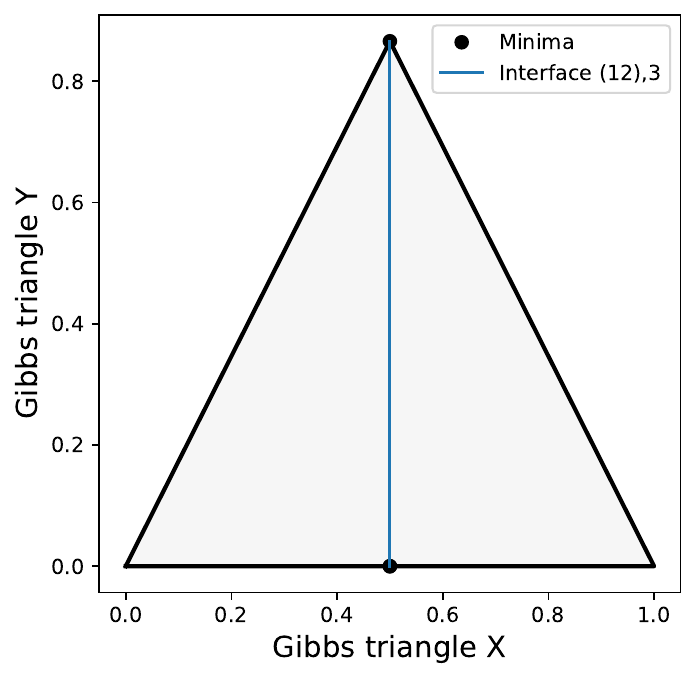}
 \end{minipage}
 \caption{Equilibrium profile for coinciding coexistence points and the associated visualization in the Gibbs triangle.}
\label{fig:n3red_profiles}
\end{figure}
The proposed construction can directly be extended to $N$ phases with two coinciding coexistence points.

\section{Computation of equilibrium profiles}\label{app:equicomp}
For each interface $(\mA,\mB)$ we compute a 1D equilibrium composition profile
$\boldsymbol{\phi}(z)=(\phi_1(z),\dots,\phi_N(z))$ on $\Omega=(0,L_x)$
by enforcing stationarity of the constrained phase--field energy.
The unknowns are $(\boldsymbol{\phi},\lambda)$, where $\lambda(z)$ is a Lagrange multiplier
for the simplex constraint $\sum_{\mC=1}^N \phi_\mC = 1$.

With a symmetric capillary matrix $\boldsymbol{\kappa}\in\mathbb{R}^{N\times N}$ and bulk potential
$\Psi_0(\boldsymbol{\phi})$, the functional is
\begin{align}
\mathcal{E}(\boldsymbol{\phi},\lambda)
= \int_0^{L_x}
\Bigg[
\frac{1}{2}\sum_{m=1}^N\sum_{n=1}^N \kappa_{mn}\,\phi_m'(z)\,\phi_n'(z)
+ \Psi_0(\boldsymbol{\phi}(z))
+ \lambda(z)\Big(\sum_{k=1}^N \phi_k(z)-1\Big)
\Bigg]\,dz.
\label{eq:energy}
\end{align}
Stationarity of~\eqref{eq:energy} yields the coupled system of Euler--Lagrange equations:
\begin{subequations}\label{eq: EL5}
\begin{align}
-\frac{d}{dz}\!\left(\sum_{n=1}^N \kappa_{mn}\,\phi_n'(z)\right)
+ \frac{\partial \Psi_0}{\partial \phi_m}\big(\boldsymbol{\phi}(z)\big)
+ \lambda(z) &= 0, \qquad m=1,\dots,N, \label{eq:EL_phi_1d}\\
\sum_{k=1}^N \phi_k(z)-1 &= 0. \label{eq:EL_constraint_1d}
\end{align}
\end{subequations}
In the code, the equilibrium profile is obtained by solving the first-order optimality conditions of the constrained 1D phase-field energy for the unknowns $U=(\boldsymbol{\phi}_h,\lambda_h)\in (H^1(\Omega))^N\times H^1(\Omega)$ on $\Omega=(0,L_x)$, where $\lambda$ enforces the simplex constraint $\sum_{k=1}^N\phi_k=1$. With a symmetric capillary matrix $\boldsymbol{\kappa}$ and bulk potential $\Psi(\boldsymbol{\phi})$, the weak residual is defined by
\begin{align}
\mathcal{R}(U;W)
:= \int_0^{L_x}\!\Bigg[&
\sum_{m,n=1}^N \kappa_{mn}\,\phi'_{n,h}\,v'_{m,h}
+ \sum_{m=1}^N \frac{\partial \Psi_0}{\partial \phi_m}(\boldsymbol{\phi}_h)\,v_{m,h}\nn\\
& + \lambda_h \sum_{m=1}^N v_{m,h}
+ \mu_h\Big(\sum_{k=1}^N \phi_{k,h} - 1\Big)
\Bigg]\,dz,
\end{align}
for all test functions $W=(\boldsymbol{v}_h,\mu_h)$. The equilibrium profile is the solution of $\mathcal{R}(U;W)=0$ for all $W$, i.e. the Euler--Lagrange equations
\eqref{eq: EL5}. Since no Dirichlet boundary conditions are imposed, the variational formulation implies the natural conditions $\sum_n K_{mn}\phi_n'(0)=\sum_n \kappa_{mn}\phi_n'(L_x)=0$ for all $m$. Newton's method is applied to this nonlinear system: given $U^{(k)}$, the increment $\delta U^{(k)}$ is computed from the linearization $D_U\mathcal{R}(U^{(k)};W)[\delta U^{(k)}]=-\mathcal{R}(U^{(k)};W)$ for all $W$, and the iterate is updated by $U^{(k+1)}=U^{(k)}+\alpha^{(k)}\delta U^{(k)}$ with a line-search factor $\alpha^{(k)}\in(0,1]$ until the tolerance is reached.

\section{N-phase Cahn--Hilliard model}\label{appendix: CH model}

We consider the $N$-phase Cahn--Hilliard system in spatial domain $\Omega \subset \mathbb{R}^d$ (of dimension $d=2,3$), boundary $\partial\Omega$ and unit outward normal $\mathbf{n}$. The state variables are the volume fraction variables $\phi_\mA:\Omega \rightarrow \mathbb{R}$, $\mA=1,\dots,N$, subject to the initial conditions
\begin{align}
\phi_\mA(\mathbf{x},0)=\phi_{\mA,0}(\mathbf{x}).
\end{align}
The Helmholtz free energy $\Psi$ associated with the system belongs to the constitutive class
\begin{align}
    \Psi=\Psi\left(\left\{\phi_\mA\right\},\left\{\nabla\phi_\mA\right\}\right),
\end{align}
and the associated chemical potentials are
\begin{align}\label{eq:CH chem pot}
    \mu_\mA=\dfrac{\partial \Psi}{\partial \phi_\mA}
    -{\rm div}\dfrac{\partial \Psi}{\partial \nabla \phi_\mA}.
\end{align}
The model consists of $N$ phase balance equations of Cahn--Hilliard type:
\begin{align}\label{eq:CH mass}
    \partial_t \phi_\mA + {\rm div}\,\bJ_\mA = 0,
    \qquad \mA=1,\dots,N,
\end{align}
where the diffusive fluxes are taken as
\begin{align}\label{eq:CH flux}
\bJ_\mA = -\sum_\mB \mathbf{M}_{\mA\mB}\nabla \mu_\mB,
\end{align}
with $\mathbf{M}_{\mA\mB}=\mathbf{M}_{\mB\mA}$ a symmetric degenerate mobility matrix. 

Next, the absence of void spaces implies that the volume fractions are subject to the saturation constraint
\begin{align}\label{eq:CH saturation constraint}
    \sum_\mA \phi_\mA = 1.
\end{align}
The saturation constraint is enforced through the mobility structure, namely by requiring that $   \sum_\mA \mathbf{M}_{\mA\mB} =  \sum_\mB \mathbf{M}_{\mA\mB} = 0$, so that $
\sum_\mA \bJ_\mA = 0$. Incorporating the chemical potential as a separate state variable, we now arrive at a system that contains $2N$ equations for $2N$ state variables $
\left(\left\{\phi_\mA\right\}_{\mA=1,\dots,N},\left\{\mu_\mA\right\}_{\mA=1,\dots,N}\right)$:
\begin{subequations}\label{eq:CH sys}
\begin{align}
    \partial_t \phi_\mA + {\rm div}\bJ_\mA &= 0,
    \label{eq:CH sys mass}\\
    \mu_\mA - \dfrac{\partial \Psi}{\partial \phi_\mA}
    + {\rm div}\dfrac{\partial \Psi}{\partial \nabla \phi_\mA} &= 0,
    \label{eq:CH sys chem}
\end{align}
\end{subequations}
for $\mA=1,\dots,N$.

Under appropriate boundary conditions, the system conserves the phase volume fractions, preserves the saturation constraint pointwise, and dissipates the total free energy:
\begin{subequations}
\begin{align}
    \ddt \int_\Omega \phi_\mA\,{\rm d}\Omega &= 0,
    \qquad \mA=1,\dots,N,\\
    \left(\sum_\mA \phi_\mA(\mathbf{x},0)=1\right)
    &\qquad \Rightarrow \qquad
    \left(\sum_\mA \phi_\mA(\mathbf{x},t)=1,\quad t>0\right),\\
    \ddt \mathcal{E}\left(\left\{\phi_\mA\right\}\right)
    &= -\mathcal{D}\left(\left\{\mu_\mA\right\}\right)\le 0,
    \label{eq:CH global energy evolution}
\end{align}
\end{subequations}
where the energy $\mathcal{E}$ and dissipation rate $\mathcal{D}$ are given by
\begin{subequations}
\begin{align}
    \mathcal{E}\left(\left\{\phi_\mA\right\}\right)
    &:= \int_\Omega
    \Psi\left(\left\{\phi_\mA\right\},\left\{\nabla\phi_\mA\right\}\right)\,{\rm d}\Omega,
    \label{eq:CH defEnergy}\\
    \mathcal{D}\left(\left\{\mu_\mA\right\}\right)
    &:= \int_\Omega
    \sum_{\mA,\mB}
    (\nabla \mu_\mA)^T \mathbf{M}_{\mA\mB}\nabla \mu_\mB
    \,{\rm d}\Omega
    \ge 0.
    \label{eq:CH defDissipation}
\end{align}
\end{subequations}

\section{N-phase Navier--Stokes--Cahn--Hilliard model}\label{appendix: NSCH model}

We consider the Navier--Stokes--Cahn--Hilliard system for $N$ phases with non-matching densities introduced in \cite{ten2025unified}. 

\subsection{Governing equations}
Let $\Omega\subset\mathbb{R}^d$, with $d=2,3$, denote the spatial domain. The unknowns are the mass-averaged velocity $\vv:\Omega\to\mathbb{R}^d$, the volume fractions $\phi_\mA:\Omega\to\mathbb{R}$ for $\mA=1,\dots,N$, and the Lagrange multiplier pressure $\lambda:\Omega\to\mathbb{R}$. The governing equations read
\begin{subequations}\label{eq:sys1}
\begin{align}
\partial_t (\rho \bv) + {\rm div} \left( \rho \bv\otimes \bv \right)
+ \sum_\mB \phi_\mB \nabla \mu_\mB + \nabla \lambda
- {\rm div}\boldsymbol{\tau} - \rho \mathbf{b} &= 0,
\label{eq: intro mass: mom}\\
\partial_t \phi_\mA + {\rm div}(\phi_\mA \bv) + \rho_\mA^{-1}{\rm div}\,\bJ_\mA &= 0,
\label{eq: intro mass: mass}
\end{align}
\end{subequations}
supplemented with initial data
\begin{align}
\vv(\mathbf{x},0)=\vv_0(\mathbf{x}),
\qquad
\phi_\mA(\mathbf{x},0)=\phi_{\mA,0}(\mathbf{x}).
\end{align}

The constant specific densities of the phases are denoted by $\rho_\mA$, and the corresponding partial mass densities are $\trho_\mA=\rho_\mA\phi_\mA$. The total mixture density is then given by $\rho=\sum_\mA \trho_\mA$. The body force is taken as $\mathbf{b}=\mathbf{g}=-g\mathbf{j}$, where $g$ is the gravitational constant, $\mathbf{j}=\nabla y$ is the unit vector in vertical direction, and $y$ denotes the vertical coordinate. The viscous stress tensor is assumed to be
\begin{align}
\boldsymbol{\tau}
=
\nu\Bigl(2\nabla^s\bv+\bar{\lambda}({\rm div}\bv)\mathbf{I}\Bigr),
\end{align}
where $\nu$ is the dynamic viscosity, $\bar{\lambda}=-2/d$, and $\nabla^s\bv=(\nabla\bv+(\nabla\bv)^T)/2$
denotes the symmetric part of the velocity gradient.

The Helmholtz free energy density is assumed to belong to the constitutive class
\begin{align}
\Psi=\Psi\left(\left\{\phi_\mA\right\},\left\{\nabla\phi_\mA\right\}\right),
\end{align}
and the corresponding chemical potentials are defined by
\begin{align}\label{eq: chem pot}
\mu_\mA=
\dfrac{\partial\Psi}{\partial\phi_\mA}
-
{\rm div}\dfrac{\partial\Psi}{\partial\nabla\phi_\mA}.
\end{align}
Accordingly, \eqref{eq: intro mass: mom} represents the balance of mixture momentum, while \eqref{eq: intro mass: mass} gives the phase mass balances for $\mA=1,\dots,N$.

\begin{remark}[Mixture velocity]
The system is written in terms of the mass-averaged velocity $\vv$. Alternative formulations based on other mixture velocities are possible; for instance, a volume-averaged formulation is discussed in \cite{ten2025unified}.
\end{remark}

Since voids are excluded, the volume fractions satisfy the saturation constraint
\begin{align}\label{eq: saturation constraint}
\sum_\mA \phi_\mA = 1.
\end{align}
Summing the phase balances \eqref{eq: intro mass: mass} and invoking \eqref{eq: saturation constraint} yields
\begin{align}\label{eq: LM 1}
{\rm div}\bv+\sum_\mA \rho_\mA^{-1}{\rm div}\,\bJ_\mA=0.
\end{align}
This relation is therefore used as the constraint equation associated with saturation. The diffusive fluxes are modeled as
\begin{align}
\bJ_\mA=-\sum_\mB \mathbf{M}_{\mA\mB}\nabla g_\mB,
\end{align}
where $\mathbf{M}_{\mA\mB}$ is the degenerate mobility matrix and $
g_\mA=\rho_\mA^{-1}(\mu_\mA+\lambda)$ is the effective chemical potential.

Introducing the chemical potentials as additional unknowns, the model can be cast as a system of $2N+2$ equations:
\begin{subequations}\label{eq:sys2}
\begin{align}
\partial_t (\rho \bv) + {\rm div} \left( \rho \bv\otimes \bv \right)
+ \sum_\mB \phi_\mB \nabla \mu_\mB + \nabla \lambda
- {\rm div}\boldsymbol{\tau} - \rho\mathbf{b} &= 0,
\label{eq:sys2: mom}\\
\partial_t \phi_\mA + {\rm div}(\phi_\mA \bv) + \rho_\mA^{-1}{\rm div}(\bJ_\mA) &= 0,
\label{eq:sys2: mass}\\
{\rm div}\bv+\sum_\mA \rho_\mA^{-1}{\rm div}\,\bJ_\mA &= 0,
\label{eq:sys2: div}\\
\mu_\mA-\dfrac{\partial\Psi}{\partial\phi_\mA}
+{\rm div}\dfrac{\partial\Psi}{\partial\nabla\phi_\mA} &= 0,
\label{eq:sys2: chem}
\end{align}
\end{subequations}
for $\mA=1,\dots,N$ with $2N+2$ independent variables $
\bv,\left\{\phi_\mA\right\}_{\mA=1,\dots,N},\lambda,\left\{\mu_\mA\right\}_{\mA=1,\dots,N}$. In this formulation, $\lambda$ acts as the Lagrange multiplier associated with the constraint \eqref{eq:sys2: div}.

\subsection{Conservation and dissipation properties}
The system satisfies several fundamental conservation and dissipation properties. Under suitable boundary conditions, it conserves each phase mass, each integrated volume fraction, and the total mass; it also preserves the saturation constraint pointwise and admits a non-increasing total energy:
\begin{subequations}
\begin{align}
\ddt\int_\Omega \trho_\mA\,{\rm d}\Omega &= 0,
\qquad
\ddt\int_\Omega \phi_\mA\,{\rm d}\Omega = 0,
\qquad
\mA=1,\dots,N,
\\
\ddt\int_\Omega \rho\,{\rm d}\Omega &= 0,
\\
\left(\sum_\mA \phi_\mA(\mathbf{x},0)=1\right)
&\qquad\Rightarrow\qquad
\left(\sum_\mA \phi_\mA(\mathbf{x},t)=1,\quad t>0\right),
\\
\ddt \mathcal{E}\left(\vv,\left\{\phi_\mA\right\}\right)
&=
-\mathcal{D}\left(\vv,\left\{g_\mA\right\}\right)
\leq 0.
\label{eq: global energy evolution}
\end{align}
\end{subequations}
Here the total energy and dissipation are given by
\begin{subequations}
\begin{align}
\mathcal{E}\left(\vv,\left\{\phi_\mA\right\}\right)
&:=
\int_\Omega
K\left(\vv,\left\{\phi_\mA\right\}\right)
+
G\left(\left\{\phi_\mA\right\}\right)
+
\Psi\left(\left\{\phi_\mA\right\},\left\{\nabla\phi_\mA\right\}\right)
\,{\rm d}\Omega,
\label{eq:defEnergy}\\
K\left(\vv,\left\{\phi_\mA\right\}\right)
&=
\frac{1}{2}\rho\left(\left\{\phi_\mA\right\}\right)\snorm{\vv}^2,
\\
G\left(\left\{\phi_\mA\right\}\right)
&=
\rho\left(\left\{\phi_\mA\right\}\right)gy,
\\
\mathcal{D}\left(\vv,\left\{g_\mA\right\}\right)
&:=
\int_\Omega
2\nu
\left(
\nabla^s\bv-\frac{1}{d}({\rm div}\bv)\mathbf{I}
\right)
:
\left(
\nabla^s\bv-\frac{1}{d}({\rm div}\bv)\mathbf{I}
\right)
+
\nu\left(\bar{\lambda}+\frac{2}{d}\right)\left({\rm div}\bv\right)^2
\nn\\
&\qquad\qquad
+
\sum_{\mA,\mB}
(\nabla g_\mA)^T \mathbf{M}_{\mA\mB}\nabla g_\mB
\,{\rm d}\Omega
\geq 0.
\label{eq:defDissipation}
\end{align}
\end{subequations}
The energy density in \eqref{eq:defEnergy} thus consists of the kinetic contribution $K$, the gravitational contribution $G$, and the interfacial free energy density $\Psi$.

\subsection{Equilibrium properties}

Equilibrium conditions of the NSCH system are discussed in \cite{structureNphase2026} and yields
\begin{equation}
    \vv=0, \qquad g_\mA=0, \qquad \mathbf{J}_\mA = 0 \qquad \text{ for all } \mA.
\end{equation}
The volume fractions are characterized by
\begin{align}
 &\min_{\phi\in H^1(\Omega)^N} \int_\Omega  \Psi(\{\phi_\beta\},\{\nabla \phi_\beta\})   - \sum_\mA \hat{c}_\mA (\phi_\alpha-m_\alpha) + \lambda(\sum_\mA \phi_\mA - 1) \nn\\
 = &\min_{\substack{\boldsymbol{\phi}\in H^1(\Omega)^N,\\ \la \boldsymbol{\phi},\mathbf{e}_\mA \ra=m_\mA}} \int_\Omega  \Psi(\{\phi_\beta\},\{\nabla \phi_\beta\})   + \lambda(\sum_\mA \phi_\mA - 1) .\label{eq:1Dmin2}
\end{align}

\section{Energy minimizers}\label{appendix sec: minizers}

Here we provide the proofs of the Theorem and Corollaries from \cref{subsec:mass_constraints} and \cref{subsec:face_confinement}.
\begin{theorem}[Equipartition of the energy]\label{thm:equi_form_CHNS}
Let $\boldsymbol{\phi}$ be a one-dimensional equilibrium profile connecting
$\mathbf b^{(\mA)}$ and $\mathbf b^{(\mB)}$, with
$\partial_z\boldsymbol{\phi}\to0$ at both endpoints. Then we have the equipartition:
\begin{align}
    \widetilde{\Psi}_0(\boldsymbol{\phi})
    -
    \widetilde{\Psi}_0(\mathbf{b}^{(\mA)})
    =
    \sum_{\mC,\mD}
    \frac{\kappa_{\mC\mD}}{2}
    \partial_z\phi_\mC \partial_z\phi_\mD,
\end{align}
where
\begin{align}
    \widetilde{\Psi}_0(\boldsymbol{\phi})
    =
    \Psi_0(\boldsymbol{\phi})
    -
    \sum_\mC \hat c_\mC\phi_\mC
\end{align}
whose endpoint values satisfy
$\widetilde{\Psi}_0(\mathbf{b}^{(\mA)})
 =
 \widetilde{\Psi}_0(\mathbf{b}^{(\mB)})$.
\end{theorem}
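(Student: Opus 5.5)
We follow the classical first-integral argument behind \eqref{eq: EL2}–\eqref{eq: EL3}, now for the mass-constrained problem \eqref{eq:1Dmin}. First we write down the Euler--Lagrange equations of \eqref{eq:1Dmin} for a one-dimensional profile $\boldsymbol{\phi}=\boldsymbol{\phi}(z)$. As in \eqref{eq: EL5}, but now including the mass multipliers $\hat{\mathbf c}$, these read
\begin{align*}
\frac{\partial \Psi_0}{\partial\phi_\mC}(\boldsymbol{\phi}) - \hat c_\mC + \lambda(z) - \sum_{\mD}\kappa_{\mC\mD}\,\partial_{zz}\phi_\mD = 0,\qquad \mC=1,\ldots,N .
\end{align*}
Equivalently,
\begin{align*}
\frac{\partial\widetilde{\Psi}_0}{\partial\phi_\mC} + \lambda - \sum_\mD \kappa_{\mC\mD}\partial_{zz}\phi_\mD=0 .
\end{align*}
The multiplier $\lambda$ may depend on $z$, since it enforces the pointwise saturation constraint.

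Next we multiply the $\mC$-th equation by $\partial_z\phi_\mC$ and sum over $\mC$. The saturation constraint gives $\sum_\mC\partial_z\phi_\mC=0$, so the term containing $\lambda(z)$ drops out. This is the key point that removes the pointwise multiplier.

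By the chain rule, $\sum_\mC \partial_{\phi_\mC}\widetilde\Psi_0\,\partial_z\phi_\mC=\partial_z\widetilde\Psi_0(\boldsymbol\phi)$. By the symmetry of $\boldsymbol\kappa$, $\sum_{\mC,\mD}\kappa_{\mC\mD}\partial_{zz}\phi_\mD\partial_z\phi_\mC=\partial_z\bigl(\tfrac12\sum_{\mC,\mD}\kappa_{\mC\mD}\partial_z\phi_\mC\partial_z\phi_\mD\bigr)$. Hence
\begin{align*}
\partial_z\Bigl(\widetilde\Psi_0(\boldsymbol\phi)-\tfrac12\sum_{\mC,\mD}\kappa_{\mC\mD}\partial_z\phi_\mC\partial_z\phi_\mD\Bigr)=0 ,
\end{align*}
so the bracketed quantity equals a constant $C$ along the profile.

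We then evaluate $C$ at the two ends. As $z\to-\infty$ we have $\boldsymbol\phi\to\mathbf b^{(\mA)}$ and $\partial_z\boldsymbol\phi\to0$, so $C=\widetilde\Psi_0(\mathbf b^{(\mA)})$; this gives the stated equipartition. Evaluating instead as $z\to+\infty$ gives $C=\widetilde\Psi_0(\mathbf b^{(\mB)})$. Therefore $\widetilde\Psi_0(\mathbf b^{(\mA)})=\widetilde\Psi_0(\mathbf b^{(\mB)})$, which is the endpoint compatibility condition on $\hat{\mathbf c}$.

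The main obstacle is justifying the limits. We need $\widetilde\Psi_0(\boldsymbol\phi(z))\to\widetilde\Psi_0(\mathbf b^{(\cdot)})$, which requires continuity of $\Psi_0$ up to the endpoints. For Flory--Huggins-type potentials the endpoints may lie on the simplex boundary, where $\partial\Psi_0$ is singular, but $\Psi_0$ itself remains continuous. We also need enough regularity ($\boldsymbol\phi\in C^2$ in the interior) to carry out the chain rule. We take both as part of the meaning of "equilibrium profile". If the domain is a finite interval, as in Appendix~\ref{app:equicomp}, the natural boundary conditions $\boldsymbol\kappa\partial_z\boldsymbol\phi=0$ play the role of $\partial_z\boldsymbol\phi\to0$.
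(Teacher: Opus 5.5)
Your proposal is correct and matches the paper's proof step by step: you write the Euler--Lagrange equations with the mass multipliers $\hat c_\gamma$ and the pointwise saturation multiplier $\lambda$, contract with $\partial_z\boldsymbol{\phi}$ so that $\lambda$ drops out by saturation, use the symmetry of $\boldsymbol{\kappa}$ to obtain a first integral, and evaluate the constant at both ends to get the equipartition and the endpoint condition. Your remarks on regularity and on the continuity of $\Psi_0$ at the endpoints are reasonable clarifications of assumptions that the paper leaves implicit.
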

\begin{proof}
Since $\boldsymbol{\phi}$ is a constrained equilibrium profile, it satisfies the
Euler--Lagrange equations with mass multipliers $\hat c_\mC$ and a saturation
multiplier $\lambda$:
\begin{align}
    \frac{\partial\Psi_0}{\partial\phi_\mC}(\boldsymbol{\phi})
    -
    \sum_\mD \kappa_{\mC\mD}\partial_z^2\phi_\mD
    -
    \hat c_\mC
    +
    \lambda
    =
    0,
    \qquad
    \mC=1,\ldots,N.
\end{align}
Multiplying the equation for phase $\mC$ by $\partial_z\phi_\mC$ and summing
over $\mC$ gives
\begin{align}
    \sum_\mC
    \frac{\partial\Psi_0}{\partial\phi_\mC}
    \partial_z\phi_\mC
    -
    \sum_{\mC,\mD}
    \kappa_{\mC\mD}
    \partial_z^2\phi_\mD\,\partial_z\phi_\mC
    -
    \sum_\mC \hat c_\mC\partial_z\phi_\mC
    =
    0,
\end{align}
where the term with $\lambda$ drops out due to the saturation constraint. This expression can be written as
\begin{align}
    \partial_z
    \left[
    \tilde{\Psi}_0(\boldsymbol{\phi})
    -
    \sum_{\mC,\mD}
    \frac{\kappa_{\mC\mD}}{2}
    \partial_z\phi_\mC\,\partial_z\phi_\mD
    \right]
    =
    0,
\end{align}
where we have used the symmetry $\kappa_{\mC\mD}=\kappa_{\mD\mC}$. Thus the quantity in brackets is constant:
\begin{align}
    \widetilde{\Psi}_0(\boldsymbol{\phi})
    -
    \sum_{\mC,\mD}
    \frac{\kappa_{\mC\mD}}{2}
    \partial_z\phi_\mC\,\partial_z\phi_\mD
    =
    C,
    \label{eq:tilted-first-integral}
\end{align}
Taking the limits $z\to-\infty$ and $z\to+\infty$, and using
$\partial_z\boldsymbol{\phi}\to0$ at both endpoints, gives
\begin{align}
    C
    =
    \widetilde{\Psi}_0(\mathbf{b}^{(\mA)})
    =
    \widetilde{\Psi}_0(\mathbf{b}^{(\mB)}).
\end{align}
\end{proof}

\begin{corollary}[Equal-depth wells]
\label{cor:appendix:equal-depth-equipartitionproof}
Assume the setting of \cref{thm:equi_form_CHNS}. If the two endpoint
values of the bulk potential are equal, i.e. $\Psi_0(\mathbf{b}^{(\mA)})=\Psi_0(\mathbf{b}^{(\mB)})$, then the mass multiplier satisfies
\begin{align}
    \hat{\mathbf c}\cdot(\mathbf{b}^{(\mB)}-\mathbf{b}^{(\mA)})=0.
\end{align}
Moreover, if
\begin{align}
    \hat{\mathbf c}\cdot(\boldsymbol{\phi}(z)-\mathbf{b}^{(\mA)})=0
    \qquad \text{for all }z,
\end{align}
then
\begin{align}
    \Psi_0(\boldsymbol{\phi})
    -
    \Psi_0(\mathbf{b}^{(\mA)})
    =
    \sum_{\mC,\mD}
    \frac{\kappa_{\mC\mD}}{2}
    \partial_z\phi_\mC\,\partial_z\phi_\mD .
\end{align}
If, in addition, the common endpoint value is normalized to zero, then
\begin{align}
    \Psi_0(\boldsymbol{\phi})
    =
    \sum_{\mC,\mD}
    \frac{\kappa_{\mC\mD}}{2}
    \partial_z\phi_\mC\,\partial_z\phi_\mD .
\end{align}
\end{corollary}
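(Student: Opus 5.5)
The plan is to derive everything directly from \cref{thm:equi_form_CHNS}. That result gives the first integral
\begin{align*}
    \widetilde{\Psi}_0(\boldsymbol{\phi})-\widetilde{\Psi}_0(\mathbf b^{(\mA)})
    =\sum_{\mC,\mD}\frac{\kappa_{\mC\mD}}{2}\partial_z\phi_\mC\,\partial_z\phi_\mD ,
\end{align*}
together with the endpoint identity $\widetilde{\Psi}_0(\mathbf b^{(\mA)})=\widetilde{\Psi}_0(\mathbf b^{(\mB)})$.

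\textbf{Orthogonality of the mass multiplier.} Write out the endpoint identity using $\widetilde\Psi_0=\Psi_0-\hat{\mathbf c}\cdot\boldsymbol{\phi}$:
\begin{align*}
\Psi_0(\mathbf b^{(\mA)})-\hat{\mathbf c}\cdot\mathbf b^{(\mA)}=\Psi_0(\mathbf b^{(\mB)})-\hat{\mathbf c}\cdot\mathbf b^{(\mB)} .
\end{align*}
Under the hypothesis $\Psi_0(\mathbf b^{(\mA)})=\Psi_0(\mathbf b^{(\mB)})$, the bulk values cancel. What remains is $\hat{\mathbf c}\cdot(\mathbf b^{(\mB)}-\mathbf b^{(\mA)})=0$.

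\textbf{Equipartition with the untilted potential.} Expand the left-hand side of the first integral at a general point $z$:
\begin{align*}
\widetilde{\Psi}_0(\boldsymbol{\phi})-\widetilde{\Psi}_0(\mathbf b^{(\mA)})=\Psi_0(\boldsymbol{\phi})-\Psi_0(\mathbf b^{(\mA)})-\hat{\mathbf c}\cdot(\boldsymbol{\phi}(z)-\mathbf b^{(\mA)}).
\end{align*}
The additional assumption $\hat{\mathbf c}\cdot(\boldsymbol{\phi}(z)-\mathbf b^{(\mA)})=0$ for all $z$ removes the last term. This yields the stated equipartition with $\Psi_0$ in place of $\widetilde\Psi_0$.

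\textbf{Normalization.} If the common endpoint value $\Psi_0(\mathbf b^{(\mA)})=\Psi_0(\mathbf b^{(\mB)})$ is normalized to zero, substituting it gives the last identity immediately.

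There is no real obstacle here. The only point needing care is the bookkeeping of the saturation multiplier $\lambda$. Its contribution was already eliminated in the proof of \cref{thm:equi_form_CHNS}, because $\sum_\mC\partial_z\phi_\mC=0$. Consequently $\hat{\mathbf c}$ is only determined modulo multiples of $\mathbf 1$. This ambiguity is harmless: every expression above pairs $\hat{\mathbf c}$ with differences of points in $\mathcal G$, which lie in $T(\mathcal G)$. So the statements are independent of that choice.
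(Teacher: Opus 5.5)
Your proposal is correct and follows the paper's proof essentially step for step. You expand $\widetilde{\Psi}_0=\Psi_0-\hat{\mathbf c}\cdot\boldsymbol{\phi}$ in both the endpoint identity and the first integral of \cref{thm:equi_form_CHNS}, cancel the equal bulk values, drop the tilted term under the extra hypothesis, and set the endpoint value to zero. Your added remark that $\hat{\mathbf c}$ is only fixed modulo multiples of $\mathbf 1$ (absorbed by $\lambda$) is a correct observation that the paper leaves implicit. It is harmless because $\hat{\mathbf c}$ always appears paired with differences of points of $\mathcal G$.
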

\begin{proof}
By \cref{thm:equi_form_CHNS},
\begin{align}
    \Psi_0(\boldsymbol{\phi})
    -
    \Psi_0(\mathbf{b}^{(\mA)})
    -
    \hat{\mathbf c}\cdot(\boldsymbol{\phi}-\mathbf{b}^{(\mA)})
    =
    \sum_{\mC,\mD}
    \frac{\kappa_{\mC\mD}}{2}
    \partial_z\phi_\mC\,\partial_z\phi_\mD .
\end{align}
Since
$\Psi_0(\mathbf{b}^{(\mA)})=\Psi_0(\mathbf{b}^{(\mB)})$ and
$\widetilde{\Psi}_0(\mathbf b_\mA)=\widetilde{\Psi}_0(\mathbf b_\mB)$,
we obtain
\begin{align}
    \hat{\mathbf c}\cdot(\mathbf{b}^{(\mB)}-\mathbf{b}^{(\mA)})=0 .
\end{align}
If, in addition,
$\hat{\mathbf c}\cdot(\boldsymbol{\phi}-\mathbf{b}^{(\mA)})=0$ along the profile,
then the tilted term drops out, giving the stated identity. The normalized
case follows immediately from $\Psi_0(\mathbf{b}^{(\mA)})=0$.
\end{proof}

\begin{corollary}[Equipartition for unequal well depths]
\label{cor:appendix:unequal-depth-tiltproof}
Assume the setting of \cref{thm:equi_form_CHNS}. If the two endpoint
values of the bulk potential are unequal, i.e. 
$\Psi_0(\mathbf{b}^{(\mA)})\neq \Psi_0(\mathbf{b}^{(\mB)})$ then 
\begin{align}
    \Psi_0(\boldsymbol{\phi}) - \Psi_0(\mathbf{b}^{(\mA)})
    - \hat{\mathbf c}\cdot \left(\boldsymbol{\phi}-\mathbf{b}^{(\mA)}\right)  = \sum_{\mC,\mD}
    \frac{\kappa_{\mC\mD}}{2} \partial_z\phi_\mC\,\partial_z\phi_\mD,
\end{align}
where one compatible choice of $\hat{\mathbf c}$ is
\begin{align}
    \hat{\mathbf c} = \frac{\Psi_0(\mathbf{b}^{(\mB)})-\Psi_0(\mathbf{b}^{(\mA)})}{\|\mathbf{b}^{(\mB)}-\mathbf{b}^{(\mA)}\|^2}(\mathbf{b}^{(\mB)}-\mathbf{b}^{(\mA)}).
\end{align}
\end{corollary}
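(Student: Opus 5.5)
The plan is to start from \cref{thm:equi_form_CHNS}, which gives the first integral
\begin{align*}
    \widetilde{\Psi}_0(\boldsymbol{\phi}) - \widetilde{\Psi}_0(\mathbf b^{(\mA)})
    = \sum_{\mC,\mD}\frac{\kappa_{\mC\mD}}{2}\partial_z\phi_\mC\,\partial_z\phi_\mD ,
\end{align*}
together with the endpoint compatibility $\widetilde{\Psi}_0(\mathbf b^{(\mA)})=\widetilde{\Psi}_0(\mathbf b^{(\mB)})$. Expanding the definition of the tilted potential, $\widetilde{\Psi}_0(\boldsymbol{\phi})-\widetilde{\Psi}_0(\mathbf b^{(\mA)}) = \Psi_0(\boldsymbol{\phi})-\Psi_0(\mathbf b^{(\mA)})-\hat{\mathbf c}\cdot(\boldsymbol{\phi}-\mathbf b^{(\mA)})$. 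This already yields the displayed identity of the corollary, with no hypothesis on well depths. The unequal-depth assumption only affects which multipliers $\hat{\mathbf c}$ are admissible.

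The remaining task is to characterize the compatible $\hat{\mathbf c}$ and to check the proposed choice. The endpoint condition, rewritten in terms of $\Psi_0$, is the scalar linear constraint
\begin{align*}
    \hat{\mathbf c}\cdot(\mathbf b^{(\mB)}-\mathbf b^{(\mA)}) = \Psi_0(\mathbf b^{(\mB)})-\Psi_0(\mathbf b^{(\mA)}) .
\end{align*}
Because $\Psi_0(\mathbf b^{(\mA)})\neq\Psi_0(\mathbf b^{(\mB)})$, the right-hand side is nonzero. Consequently $\hat{\mathbf c}$ cannot vanish, in contrast to \cref{cor:appendix:equal-depth-equipartitionproof}, and $\mathbf b^{(\mA)}\neq\mathbf b^{(\mB)}$. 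The solution set is an affine hyperplane in $\mathbb R^N$. Its minimum-norm element is the projection onto the direction $\mathbf b^{(\mB)}-\mathbf b^{(\mA)}$, which is exactly the stated formula. Substituting it back makes the constraint an identity, since $(\mathbf b^{(\mB)}-\mathbf b^{(\mA)})\cdot(\mathbf b^{(\mB)}-\mathbf b^{(\mA)})/\|\mathbf b^{(\mB)}-\mathbf b^{(\mA)}\|^2=1$.

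One should also note that $\hat{\mathbf c}$ is only determined modulo multiples of $\mathbf 1$ on the simplex. This is because $\mathbf 1\cdot(\boldsymbol{\phi}-\mathbf b^{(\mA)})=0$ for $\boldsymbol{\phi}\in\mathcal G$, and it is consistent with the saturation multiplier $\lambda$ absorbing such shifts. Hence ``one compatible choice'' is the correct phrasing. The proposed $\hat{\mathbf c}$ lies in $T(\mathcal G)$, so it is the canonical gauge representative.

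The main point requiring care, rather than a genuine obstacle, is the logical status of $\hat{\mathbf c}$. In \cref{thm:equi_form_CHNS} the multiplier is determined by the equilibrium profile, not chosen freely. The plan is therefore to state explicitly what is proved. Any equilibrium profile connecting the two unequal-depth states must carry a multiplier satisfying the scalar constraint above. Conversely, the displayed $\hat{\mathbf c}$ is the minimal-norm vector satisfying that necessary condition, so it is compatible with the theorem's endpoint requirement. I would not attempt to prove that a profile exists for this particular $\hat{\mathbf c}$; that existence question lies outside the algebraic identity the corollary asserts.
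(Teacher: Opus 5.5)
Your proposal is correct and follows essentially the same argument as the paper. You invoke \cref{thm:equi_form_CHNS}, expand $\widetilde{\Psi}_0$ to get the identity, and check that the stated $\hat{\mathbf c}$ satisfies $\hat{\mathbf c}\cdot(\mathbf b^{(\mB)}-\mathbf b^{(\mA)})=\Psi_0(\mathbf b^{(\mB)})-\Psi_0(\mathbf b^{(\mA)})$, which is exactly the equal-endpoint condition on the tilted potential. The paper runs the same steps in the opposite order (fix $\hat{\mathbf c}$, verify equal endpoint values, then apply the theorem). Your minimum-norm characterization, and your note that $\hat{\mathbf c}$ is fixed by the profile rather than freely chosen, make explicit what the paper's phrasing leaves implicit.
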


\begin{proof}
With the above choice of $\hat{\mathbf c}$, the tilted potential
\[
    \widetilde{\Psi}_0(\boldsymbol{\phi})
    =
    \Psi_0(\boldsymbol{\phi})
    -
    \hat{\mathbf c}\cdot\boldsymbol{\phi}
\]
has equal endpoint values, i.e. $\widetilde{\Psi}_0(\mathbf b_\mB)=\widetilde{\Psi}_0(\mathbf b_\mA)$. Applying \cref{thm:equi_form_CHNS} to $\widetilde{\Psi}_0$, substituting the definition of $\widetilde{\Psi}_0$ and the explicit expression
for $\hat{\mathbf c}$ gives the claimed identity.
\end{proof}

We show that for the mixture-aware free energy, an energy-minimizing interfacial profile cannot lie exactly on a Gibbs-simplex face.
\begin{theorem}[No face confinement for the mixture-aware free energy]
Let $N\geq 3$. A profile that is confined to a Gibbs-simplex face on a
nontrivial interfacial region cannot be a local minimizer of the total
mixture-aware energy with $W_\mA=W>0$ and finite coefficients
$\chi_{\alpha\beta}$ and $\kappa_{\alpha\beta}$.
\end{theorem}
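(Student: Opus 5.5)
The argument is a one-sided perturbation that exploits the logarithmic singularity of the entropic term at $\phi_\gamma=0$. Suppose $\boldsymbol{\phi}$ is a profile with $\phi_\gamma\equiv 0$ on a nontrivial interval $I$ for some index $\gamma$, while at least two other phases $\mA,\mB$ are strictly positive there. The interval is a genuine interfacial region, so the profile is not a pure state. Shrinking $I$, we may assume $\phi_\mA,\phi_\mB\geq c>0$ on $I$.

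We test the energy against an admissible variation that pushes mass into the absent phase while preserving the saturation constraint. Choose a smooth bump $\eta\geq 0$ supported in $I$ and not identically zero, and set
\begin{align*}
    \boldsymbol{\phi}^t=\boldsymbol{\phi}+t\,\eta\,\Big(\mathbf e_\gamma-\tfrac12\mathbf e_\mA-\tfrac12\mathbf e_\mB\Big),\qquad t>0 \text{ small}.
\end{align*}
This variation lies in $T(\mathcal G)$, so $\sum_\mC\phi^t_\mC=1$. The lower bound $c$ guarantees $\boldsymbol{\phi}^t\in\mathcal G$ for small $t$. If a mass constraint is imposed, we additionally subtract a second bump along $\mathbf e_\mA-\mathbf e_\mB$, or use $\eta$ with zero mean against $\hat{\mathbf c}$. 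In any case the $\gamma$ component $t\eta$ is the only one that matters.

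We then expand the total energy $E(\boldsymbol{\phi}^t)-E(\boldsymbol{\phi})$ term by term.
\begin{itemize}
    \item \emph{Gradient term.} It is quadratic in $\nabla\boldsymbol{\phi}$ with finite $\kappa_{\mA\mB}$. Since $\eta\in C_c^\infty$, its change is $O(t)$, namely $t\int \boldsymbol{\kappa}\nabla\boldsymbol{\phi}\cdot\nabla(\eta\mathbf v)+O(t^2)$.
    \item \emph{Interaction term.} The term $-\sum\chi_{\mA\mB}\phi_\mA\phi_\mB$ is smooth with finite coefficients, so it also changes by $O(t)$.
    \item \emph{Entropy of the positive phases.} The contributions $W\phi_\mA\log\phi_\mA$ and $W\phi_\mB\log\phi_\mB$ are smooth where $\phi_\mA,\phi_\mB\geq c$, so they change by $O(t)$ as well.
    \item \emph{Entropy of the absent phase.} The new term is $W\int t\eta\log(t\eta)\,dz = Wt\log t\int\eta + Wt\int\eta\log\eta$.
\end{itemize}
Collecting these,
\begin{align*}
    E(\boldsymbol{\phi}^t)-E(\boldsymbol{\phi}) = W\,t\log t\int_I\eta\,{\rm d}z + O(t),
\end{align*}
with the constant in $O(t)$ depending on $\boldsymbol{\phi}$, $\eta$, $\chi$, $\kappa$ and $c$. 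Because $\int\eta>0$ and $t\log t/t=\log t\to-\infty$, the right-hand side is negative for all sufficiently small $t>0$. Hence $\boldsymbol{\phi}$ is not a local minimizer, in any topology in which $\boldsymbol{\phi}^t\to\boldsymbol{\phi}$, for example $H^1$ or $L^\infty$.

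The main obstacle is making the $O(t)$ bounds rigorous rather than formal. Two points need care.
\begin{itemize}
    \item The sign of the singular term requires $t\eta\log(t\eta)$ to be integrable and to have the stated expansion. This holds because $\eta\log\eta$ is bounded for bounded $\eta\geq 0$.
    \item The remaining phases must stay bounded away from zero on $\operatorname{supp}\eta$. Otherwise their own logarithmic terms could also contribute $t\log t$ with the opposite sign. This is why we localize to a subinterval where $\phi_\mA,\phi_\mB\geq c$, which exists on any nontrivial interfacial region by continuity of $H^1$ functions in one dimension.
\end{itemize}

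For the polynomially regularized version of the potential, the same argument should go through provided the regularization preserves a derivative $\partial_{\phi_\gamma}\Psi_0\to-\infty$ as $\phi_\gamma\to0^+$. If instead it only yields a large finite slope, the argument needs modification: we would compare that slope with the finite Euler--Lagrange multipliers. This is the step I would check first.
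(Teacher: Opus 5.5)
Your proposal is correct and follows essentially the paper's argument: a one-sided perturbation pushes $t\eta$ into the absent phase, its entropy contributes $W\,t\log t\int\eta$, and this dominates the $O(t)$ changes in the gradient, interaction and remaining entropy terms. The only difference is the form of the variation: you use the additive tangent direction $\mathbf e_\gamma-\tfrac12\mathbf e_\alpha-\tfrac12\mathbf e_\beta$ with an explicit lower bound $c$ on $\operatorname{supp}\eta$, whereas the paper rescales $\phi_\alpha,\phi_\beta$ by $(1-\varepsilon\eta)$, which preserves saturation only when $\phi_\alpha+\phi_\beta=1$, so your version covers $N>3$ directly. (Your mass-constraint aside is not needed for the statement, and a bump along $\mathbf e_\alpha-\mathbf e_\beta$ would not restore $\int\phi_\gamma$ anyway.)
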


\begin{proof}
Let $N=3$. Assume that $\boldsymbol{\phi}$ is confined to the $\mA$--$\mB$ face on a nontrivial part of the interface. Since the profile is nontrivial, there is an interval on which $\phi_\mC=0,  0<\phi_\mA,\phi_\mB<1, \phi_\mA+\phi_\mB=1$. Let $\eta$ be a smooth nonnegative perturbation, not identically zero, supported
inside this interval. For sufficiently small $\varepsilon>0$, define
\begin{align}
    \phi_\mA^\varepsilon
    &=
    (1-\varepsilon\eta)\phi_\mA,
    \quad
    \phi_\mB^\varepsilon
    =
    (1-\varepsilon\eta)\phi_\mB,
    \quad
    \phi_\mC^\varepsilon
    =
    \varepsilon\eta, \quad 
    \phi_\mD^\varepsilon
    =
    \phi_\mD
    \quad \text{for } \mD\neq \mA,\mB,\mC .
\end{align}
This perturbation preserves the simplex constraint, since $\sum_\alpha \phi_\alpha^\varepsilon=1$, and for sufficiently small $\varepsilon$ it also preserves nonnegativity. Since $\eta$ is
supported inside the interfacial region, the far-field boundary conditions are
unchanged.

We now compare the energies. The entropy contribution involving
phases $\mA, \mB$ and $\mC$ changes by
\begin{align}
W\int
\Big[
&\phi_\mA^\varepsilon\log\phi_\mA^\varepsilon
-\phi_\mA\log\phi_\mA
+
\phi_\mB^\varepsilon\log\phi_\mB^\varepsilon
-\phi_\mB\log\phi_\mB +
(\varepsilon\eta)\log(\varepsilon\eta)
\Big] {\rm d}z.
\end{align}
Because $\phi_\mA>0$ on the support of $\eta$, the first four terms are $O(\varepsilon)$, while
\begin{align}
(\varepsilon\eta)\log(\varepsilon\eta)
=
\varepsilon\eta\log\varepsilon+\varepsilon\eta\log\eta.
\end{align}
Therefore the entropy change is
\begin{align}
W \varepsilon\log\varepsilon \int_\Omega \eta{\rm d}x + O(\varepsilon).
\end{align}
Since $\eta\ge 0$ and not identically zero, and $W>0$, we have $\int_\Omega \eta{\rm d}x>0$. Hence the leading term is strictly negative. By contrast, the interaction term and the quadratic gradient term vary only by $O(\varepsilon)$, because they are smooth in $\boldsymbol{\phi}$ and $\nabla\boldsymbol{\phi}$ and the perturbation is of size $\varepsilon$. Hence the energy difference is
\begin{align}
\int_\Omega \Psi(\boldsymbol{\phi}^\varepsilon,\nabla\boldsymbol{\phi}^\varepsilon){\rm d}x
-
\int_\Omega \Psi(\boldsymbol{\phi},\nabla\boldsymbol{\phi}){\rm d}x
=
W \varepsilon\log\varepsilon \int_\Omega \eta{\rm d}x + O(\varepsilon).
\end{align}
For sufficiently small $\varepsilon>0$, the negative term $\varepsilon\log\varepsilon$ dominates the $O(\varepsilon)$ remainder, so the perturbed profile has strictly lower energy. Thus $\boldsymbol{\phi}$ is not a local minimizer.\\

The same argument extends directly to $N>3$ by perturbing from any proper Gibbs-simplex face into one absent phase.
\end{proof}

Next, we show that for Boyer's energy every energy-minimizing one-dimensional equilibrium profile connecting $\mathbf{e}_1$ and $\mathbf{e}_2$ lies on the binary face $\{\phi_3=0\}$.
\begin{theorem}[Face confinement for the Boyer free energy]
Let $N=3$. For the Boyer free energy defined in \cite{boyer2006study}, with
positive pair coefficients and nonnegative ternary penalty, every energy-minimizing one-dimensional profile connecting two pure states is confined
to the corresponding Gibbs-simplex face. Moreover, the minimum energy equals the prescribed surface tension of that pair.
\end{theorem}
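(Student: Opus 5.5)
The plan is to avoid the Euler--Lagrange equations and instead prove a sharp \emph{lower bound} for the one-dimensional interfacial energy, then show that the bound is attained only on the face. I take the Boyer energy of \cite{boyer2006study} in its standard form: the bulk part \eqref{eq:Boyerpot} together with the diagonal capillary term $\tfrac{3\varepsilon}{8}\sum_i \Sigma_i |\partial_z\phi_i|^2$. I interpret ``positive pair coefficients'' as $\Sigma_1,\Sigma_2,\Sigma_3>0$. Throughout I treat the pair $(1,2)$, so the profile connects $\mathbf e_1$ to $\mathbf e_2$; the other pairs follow by relabeling.

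\textbf{Step 1 (componentwise Modica--Mortola bound).} I split the energy density into three single-phase pieces plus the ternary penalty. Piece $i$ is
\begin{align}
\Sigma_i\Big(\tfrac{6}{\varepsilon}\phi_i^2(1-\phi_i)^2+\tfrac{3\varepsilon}{8}|\partial_z\phi_i|^2\Big),
\end{align}
and the penalty is $\tfrac{12}{\varepsilon}\Lambda\phi_1^2\phi_2^2\phi_3^2\ge 0$. Applying $a^2+b^2\ge 2ab$ to each piece gives, since $\sqrt{\tfrac6\varepsilon\cdot\tfrac{3\varepsilon}{8}}=\tfrac32$,
\begin{align}
\text{piece } i \;\ge\; 3\Sigma_i\,|\phi_i(1-\phi_i)\,\partial_z\phi_i| \;=\; 3\Sigma_i\,|\partial_z H(\phi_i)|,
\qquad H(t)=\tfrac{t^2}{2}-\tfrac{t^3}{3}.
\end{align}
Integrating over $z$, the total variation of $H(\phi_i)$ is at least $|H(\phi_i(+\infty))-H(\phi_i(-\infty))|$. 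Since $H(1)-H(0)=1/6$, this gives contributions $\ge \Sigma_1/2$ from $\phi_1$ (which goes $1\to 0$), $\ge\Sigma_2/2$ from $\phi_2$ (which goes $0\to1$), and $\ge 0$ from $\phi_3$. Hence every admissible profile has energy at least $(\Sigma_1+\Sigma_2)/2=\gamma_{12}$.

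\textbf{Step 2 (attainment on the face).} On the face $\{\phi_3=0\}$ we have $\phi_2=1-\phi_1$, so $\phi_2^2(1-\phi_2)^2=\phi_1^2(1-\phi_1)^2$ and $|\partial_z\phi_2|=|\partial_z\phi_1|$. The ternary penalty vanishes identically there. The energy therefore reduces to a binary Ginzburg--Landau functional with coefficient $\Sigma_1+\Sigma_2$. Its monotone $\tanh$ heteroclinic achieves equality in every Young inequality, since equipartition holds for both $\phi_1$ and $\phi_2$. This shows the infimum equals $\gamma_{12}$ and is attained, which gives the second claim of the theorem.

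\textbf{Step 3 (rigidity: every minimizer lies on the face).} Let $\boldsymbol\phi$ be any minimizer, so its energy equals $\gamma_{12}$. Then every inequality in Step 1 must be an equality. In particular, the $\phi_3$-piece must contribute exactly zero. Since $\Sigma_3>0$, this forces $\int|\partial_z\phi_3|^2\,{\rm d}z=0$ and $\int\phi_3^2(1-\phi_3)^2\,{\rm d}z=0$. So $\phi_3$ is constant, and by the far-field condition $\phi_3\equiv 0$. The profile is therefore confined to the $1$--$2$ face.

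\textbf{Main obstacle.} The key point to check is the exact form of Boyer's capillary term on the tangent plane. If it is written with cross terms, as in \cref{prop:rewriting gradient term}, I would first rewrite it in the diagonal form $\tfrac{3\varepsilon}{8}\sum_i\Sigma_i|\nabla\phi_i|^2$ using $\sum_i\partial_z\phi_i=0$. This diagonal form is what makes Step 1 decouple across phases. A second point is that positivity of $\Sigma_3$ is essential for the rigidity in Step 3; if $\Sigma_3$ were allowed to vanish, the argument would lose the face confinement.
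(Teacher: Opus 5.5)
Your proposal is correct and follows the paper's proof essentially step for step: the same phase-wise decomposition with the $a^2+b^2\ge 2ab$ bound via $\Phi(s)=\tfrac12 s^2-\tfrac13 s^3$, attainment by the $\tanh$ profile on the face $\{\phi_3=0\}$, and rigidity from $J_3[\phi_3]=0$ together with $\Sigma_3>0$ and the far-field condition. The only minor difference is in Step 2, where you get the value $\gamma_{12}$ by noting that the monotone $\tanh$ profile turns every inequality of Step 1 into an equality; the paper instead re-derives a binary lower bound and computes the energy directly, which is slightly redundant given Step 1.
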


\begin{proof}
We give the proof for the connection from
$\mathbf e_1=(1,0,0)$ to $\mathbf e_2=(0,1,0)$. Assume $\Sigma_1>0, \Sigma_2>0, \Sigma_3>0, \Lambda\ge 0$. Consider the admissible class \begin{align}
    \mathcal{G}_{12}:=\Bigl\{
\boldsymbol{\phi}\in \mathcal{G}: 
\boldsymbol{\phi}(-\infty)=\mathbf{e}_1,\ \boldsymbol{\phi}(+\infty)=\mathbf{e}_2
\Bigr\},
\end{align}
and the one-dimensional Boyer energy
 \begin{align}
\Psi
:=&~
\int_{\mathbb R}
\left(
\Psi_0(\boldsymbol{\phi})
+
\frac{3\varepsilon}{8}\sum_{\mA=1}^3 \Sigma_\mA |\phi_\mA'|^2
\right){\rm d}z,
\end{align}
where $\Psi_0$ is given by \eqref{eq:Boyerpot}. 
We also use the relation $\gamma_{12} = (\Sigma_1+\Sigma_2)/2$.

We split the remainder of the proof into three steps.\\
\medskip
\noindent\textbf{Step 1: Lower bound for arbitrary ternary connections.} We first decompose the energy into phase-wise contributions. Recalling
\begin{align}
\Psi_0(\phi_1,\phi_2,\phi_3)
=&~\frac{6}{\varepsilon}\left(\Sigma_{1}\phi_1^2(1-\phi_1)^2+\Sigma_{2}\phi_2^2(1-\phi_2)^2 + \Sigma_{3}\phi_3^2(1-\phi_3)^2\right)\nn\\
&~+\frac{12}{\varepsilon}\Lambda \phi_1^2\phi_2^2\phi_3^2,\nn\\
\Sigma_{1}&=\gamma_{12}+\gamma_{13}-\gamma_{23},\qquad
\Sigma_{2}=\gamma_{12}+\gamma_{23}-\gamma_{13},\qquad
\Sigma_{3}=\gamma_{13}+\gamma_{23}-\gamma_{12},
\end{align}
the one-dimensional energy can be written as
\begin{align}
\Psi(\boldsymbol{\phi})
=
\sum_{\mA=1}^3 J_\mA[\phi_\mA]
+
\frac{12\Lambda}{\varepsilon}
\int_{\mathbb R}
\phi_1^2\phi_2^2\phi_3^2{\rm d}z,
\end{align}
where
\begin{align}
J_\mA[\phi_\mA]
:=
\int_{\mathbb R}
\left(
\frac{6\Sigma_\mA}{\varepsilon}\phi_\mA^2(1-\phi_\mA)^2
+
\frac{3\varepsilon\Sigma_\mA}{8}|\phi_\mA'|^2
\right){\rm d}z.
\end{align}
Since $\Sigma_\mA>0$ and $\Lambda\ge 0$, all terms in this decomposition are nonnegative.

By the elementary inequality $a^2+b^2\ge 2ab$, with
\begin{align}
a=\sqrt{\frac{6\Sigma_\mA}{\varepsilon}}\,\phi_\mA(1-\phi_\mA),
\qquad
b=\sqrt{\frac{3\varepsilon\Sigma_\mA}{8}}\,|\phi_\mA'|,
\end{align}
we obtain
\begin{align}\label{eq:J}
J_\mA[\phi_\mA]\ge 3\Sigma_\mA\int_{\mathbb R} \phi_\mA(1-\phi_\mA)|\phi_\mA'|{\rm d}z.
\end{align}
Introducing
\begin{align}
\Phi(s):=\int_0^s t(1-t)\,dt=\frac12 s^2-\frac13 s^3,
\end{align}
we have $\Phi'(s)=s(1-s)$, and hence $
\phi_\mA(1-\phi_\mA)|\phi_\mA'| = \bigl|(\Phi(\phi_\mA))'\bigr|$. Substitution into \eqref{eq:J} provides:
\begin{align}
J_\mA[\phi_\mA]
\ge
3\Sigma_\mA\int_{\mathbb R}\bigl|(\Phi(\phi_\mA))'\bigr|{\rm d}z
\ge
3\Sigma_\mA\left|\Phi(\phi_\mA(+\infty))-\Phi(\phi_\mA(-\infty))\right|.
\end{align}
For a connection from $e_1$ to $e_2$ we have
\begin{subequations}
    \begin{align}
\phi_1(-\infty)=&~1,\quad \phi_1(+\infty)=0,\\
\phi_2(-\infty)=&~0,\quad \phi_2(+\infty)=1,\\
\phi_3(-\infty)=&~0,\quad \phi_3(+\infty)=0.
\end{align}
\end{subequations}
Since $\Phi(1)-\Phi(0)=\frac16$, this yields
\begin{align}
J_1[\phi_1]\ge \frac{\Sigma_1}{2},\qquad
J_2[\phi_2]\ge \frac{\Sigma_2}{2},\qquad
J_3[\phi_3]\ge 0.
\end{align}
Hence every $\boldsymbol{\phi}\in\mathcal{G}_{12}$ satisfies $\Psi(\boldsymbol{\phi})
\ge
(\Sigma_1+\Sigma_2)/2=
\gamma_{12}$. Therefore we conclude:
\begin{align}\label{eq:global-lower-bound}
  \inf_{\boldsymbol{\phi}\in\mathcal{G}_{12}}\Psi(\boldsymbol{\phi})\ge \gamma_{12}.    
\end{align}

\medskip
\noindent\textbf{Step 2: The binary face profile has energy $\gamma_{12}$.}
Restrict now to the face class $
\mathcal{G}_{12}^{\mathrm{face}}
:=
\{\boldsymbol{\phi}\in\mathcal{G}_{12}: \phi_3\equiv 0\}$. Any element of $\mathcal{G}_{12}^{\mathrm{face}}$ has the form $\boldsymbol{\phi}=(\phi,1-\phi,0)$, with $\phi(-\infty)=1$ and $\phi(+\infty)=0$. Since $\phi_3\equiv 0$, the $\Lambda$-term vanishes and
\begin{align}
\Psi_0(\phi,1-\phi,0)
&=
\frac{6}{\varepsilon}
\left(
\Sigma_1\phi^2(1-\phi)^2
+
\Sigma_2(1-\phi)^2\phi^2
\right) \nn\\
&=
\frac{6(\Sigma_1+\Sigma_2)}{\varepsilon}
\phi^2(1-\phi)^2
=
\frac{12\gamma_{12}}{\varepsilon}
\phi^2(1-\phi)^2 .
\end{align}
Moreover,
\begin{align}
\Sigma_1|\phi'|^2+\Sigma_2|(1-\phi)'|^2
=
(\Sigma_1+\Sigma_2)|\phi'|^2
=
2\gamma_{12}|\phi'|^2.
\end{align}
Therefore the restricted energy is exactly the binary energy
\begin{align}
\Psi(\phi,1-\phi,0):=
\int_{\mathbb R}
\left(
\frac{12\gamma_{12}}{\varepsilon}\phi^2(1-\phi)^2
+
\frac{3\varepsilon\gamma_{12}}{4}|\phi'|^2
\right){\rm d}z.
\end{align}

We now show that the minimizing heteroclinic profile in the binary class has energy $\gamma_{12}$.
First, again by $a^2+b^2\ge 2ab$,
\begin{align}
\frac{12\gamma_{12}}{\varepsilon}\phi^2(1-\phi)^2
+
\frac{3\varepsilon\gamma_{12}}{4}|\phi'|^2
\ge
6\gamma_{12}\phi(1-\phi)|\phi'|,
\end{align}
hence
\begin{align}
\Psi(\phi,1-\phi,0)
\ge
6\gamma_{12}\int_{\mathbb R}\phi(1-\phi)|\phi'|{\rm d}z
\ge
6\gamma_{12}\left|\int_0^1 s(1-s)\,ds\right|
=
\gamma_{12}.
\end{align}
Thus
\begin{equation}
\inf_{\boldsymbol{\phi}\in \mathcal{G}_{12}^{\mathrm{face}}}
\Psi(\boldsymbol{\phi})
\ge
\gamma_{12}.
\label{eq:binary-lower-bound}
\end{equation}

Now consider the one-dimensional profile 
\begin{align}
\phi_\star(z)=\frac12
\left(
1-\tanh\left(\frac{2(z-z_0)}{\varepsilon}\right)
\right),
\end{align}
for $z_0\in\mathbb R$. It satisfies $\phi_\star'=-(4/\varepsilon)\phi_\star(1-\phi_\star)$,
and hence
\begin{align}
\frac{3\varepsilon\gamma_{12}}{4}|\phi_\star'|^2
=
\frac{12\gamma_{12}}{\varepsilon}\phi_\star^2(1-\phi_\star)^2.
\end{align}
Substituting and a direct computation provide:
\begin{align*}
\Psi(\phi_\star,1-\phi_\star,0)
&=\frac{24\gamma_{12}}{\varepsilon}
\int_{\mathbb R}\phi_\star^2(1-\phi_\star)^2{\rm d}z=
6\gamma_{12}\int_0^1 s(1-s)\,ds
=
\gamma_{12}.
\end{align*}
Combining this with \eqref{eq:binary-lower-bound}, we conclude that
\begin{align}
\inf_{\boldsymbol{\phi}\in\mathcal{G}_{12}^{\mathrm{face}}}
\Psi(\boldsymbol{\phi})
=
\gamma_{12}.
\end{align}
Hence, together with \eqref{eq:global-lower-bound}, this gives
\begin{align}
\inf_{\boldsymbol{\phi}\in\mathcal{G}_{12}}\Psi(\boldsymbol{\phi})=\gamma_{12}.
\end{align}

\medskip
\noindent\textbf{Step 3: Absent phase $3$.}
Finally, let $\boldsymbol{\phi}^\star$ be a minimizer. Since
$\Psi(\boldsymbol{\phi}^\star)=\gamma_{12}$, while Step 1 gives
\begin{align}
    J_1[\phi_1^\star]\ge \frac{\Sigma_1}{2},
    \qquad
    J_2[\phi_2^\star]\ge \frac{\Sigma_2}{2},
\end{align}
and all remaining terms are nonnegative, equality can hold only if $J_3[\phi_3^\star]=0$. Therefore
\begin{align}
\frac{6\Sigma_3}{\varepsilon}(\phi_3^\star)^2(1-\phi_3^\star)^2=0
\quad\text{and}\quad
\frac{3\varepsilon\Sigma_3}{8}|(\phi_3^\star)'|^2=0.
\end{align}
Since $\Sigma_3>0$, this implies
\begin{align}
(\phi_3^\star)'=0
\quad\text{and}\quad
\phi_3^\star(1-\phi_3^\star)=0,
\end{align}
and hence $\phi_3^\star \in \{0,1\}$. Because $\phi_3^\star(-\infty)=\phi_3^\star(+\infty)=0$, we conclude that $\phi_3^\star\equiv 0$, which proves the claim.
\end{proof}

\subsection*{Funding}
MtE acknowlfaces support from the German Research Foundation (Deutsche Forschungsgemeinschaft DFG), project number 566600860. A.B. acknowlfaces support by the German Research Foundation (DFG) via TRR 146, subproject C3, project number 233630050 and via SPP 2256 within the project ``Variational quantitative phase-field modeling and simulation of powder bed fusion additive manufacturing'' project number 441153493. MtE and AB acknowlfaces support by the RMU via the joined project "MULTIMIX".

\bibliography{jfm}

\end{document}

%% file: declarations.tex
\def\la{\langle}
\def\ra{\rangle}

\def\nn{\nonumber}

\usepackage{mathtools}

\DeclarePairedDelimiter{\norm}{\|}{\|}
\DeclarePairedDelimiter{\snorm}{|}{|}

\def\vv{\mathbf{v}}

\def\ddt{\frac{\mathrm{d}}{\mathrm{d}t}}

\def\softd{{\leavevmode\setbox1=\hbox{d}%
\hbox to 1.05\wd1{d\kern-0.4ex{\char039}\hss}}}

\newcommand{\mA}{\alpha}
\newcommand{\mB}{{\beta}}
\newcommand{\mC}{{\gamma}}
\newcommand{\mD}{{\delta}}

\newcommand{\bJ}{\mathbf{J}}

\newcommand{\bv}{\mathbf{v}}
\newcommand{\trho}{\tilde{\rho}}